\newcommand{\taxa}{X}
\newcommand{\cha}{\chi}
\newcommand{\ext}{\phi}
\newcommand{\extScore}{\Delta}
\newcommand{\PS}[2]{l_{#1}(#2)}
\newcommand{\states}{\bf{C}}
\newcommand{\dmp}{d_{MP}}
\newcommand{\damp}{d_{AMP}}
\newcommand{\problemName}{{\sc dmp}}
\newcommand{\dtbr}{d_{TBR}}
\newtheorem{lemma}{Lemma}
\newtheorem{corollary}{Corollary}
\newtheorem{theorem}{Theorem}
\newtheorem{observation}{Observation}
\newtheorem{krule}{Reduction Rule}
\newcommand{\steven}[1]{\textcolor{blue}{#1}}
\title{Maximum parsimony distance on phylogenetic trees: a linear kernel and constant factor approximation algorithm}
\author{Mark Jones\footnote{Centrum Wiskunde \& Informatica (CWI), 1098 XG Amsterdam, The Netherlands} \footnote{Corresponding author, contact: markelliotlloyd@gmail.com}
\and Steven Kelk\footnote{Department of Data Science and Knowledge Engineering (DKE),
Maastricht University,
6200 MD Maastricht,
The Netherlands} 
\and Leen Stougie\footnotemark[1] \footnote{Vrije Universiteit Amsterdam, 1081 HV Amsterdam,
The Netherlands} \footnote{INRIA-Erable}}
\begin{document}
\maketitle

\abstract{Maximum parsimony distance is 
a measure used to quantify the dissimilarity of two unrooted phylogenetic trees.  It is NP-hard to compute, and very few positive algorithmic results are known due to its complex combinatorial structure. Here we address this shortcoming by showing that the problem is fixed parameter tractable. We do this by establishing a linear kernel i.e., that after applying certain reduction rules the resulting instance has size that is bounded by a linear function of the distance. As powerful corollaries to this result we prove that the problem permits a polynomial-time constant-factor approximation algorithm; that the treewidth of a natural auxiliary graph structure encountered in phylogenetics is bounded by a function of the distance; and that the distance is within a constant factor of the size of a maximum agreement forest of the two trees, a well studied object in phylogenetics.

%(If need be I could probably think up some computational experiments to e.g. empirically measure the size of the kernel...)}
}

\section{Introduction}

Phylogenetics is the science of inferring and comparing trees (or more generally, graphs) that represent the evolutionary history of a set of species \cite{SempleSteel2003}. In this article we focus on trees. The inference problem has been comprehensively studied: given only data about the species in $X$ (such as DNA data) construct a \emph{phylogenetic tree} which optimizes a particular objective function \cite{felsenstein2004inferring,warnow2017computational}. Informally, a phylogenetic tree is simply a tree whose leaves are bijectively labelled by $X$. Due to different objective functions, multiple optima and the phenomenon that certain genomes are the result of several evolutionary paths (rather than just one) we are often confronted with multiple ``good'' phylogenetic trees \cite{nakhleh2013computational}. In such cases we wish to formally quantify how dissimilar these trees really are. This leads naturally to the problem of defining and computing the \emph{distance} between phylogenetic trees \cite{katherine2017shape}. Many such distances have been proposed, some of which can be computed in polynomial-time, such as \emph{Robinson-Foulds} (RF) distance \cite{robinson1981comparison}, and some of which are NP-hard, such as \emph{Subtree Prune and Regraft} (SPR) distance \cite{bonet2010complexity} or \emph{Tree Bisection and Reconnection} (TBR) distance \cite{AllenSteel2001}.
%\todo{References are required here. SK: it's next on my list}

Interestingly, distances are not only relevant as a numerical quantification of difference: they also appear in constructive methods for the inference of phylogenetic networks \cite{HusonRuppScornavacca10}, which generalise trees to graphs, and phylogenetic supertrees, which seek to merge multiple trees into a single summary tree \cite{Whidden2014}. In recent decades NP-hard phylogenetic distances have attracted quite some attention from the discrete optimization and parameterized complexity communities, see e.g. \cite{bulteau2019parameterized,downey2013fundamentals}.

%In recent years distances have also had a prominent role in the inference of phylogenetic networks, since these are often constructed by merging trees into a single network, and the distance between trees gives some indication of how complex the resulting network will need to be.

In this article we focus on a relatively new distance measure, \emph{maximum parsimony distance}, henceforth denoted $d_{MP}$. Let $T_1$ and $T_2$ be two unrooted (i.e. undirected) binary phylogenetic trees, with the same set of leaf labels $X$. Consider an arbitrary assignment of colours (``states'') to $X$; we call such an assignment a \emph{character}. The \emph{parsimony score} of $T_1$ with respect to the character is the minimum number of bichromatic edges in $T_1$, ranging over all possible colourings of the internal vertices of $T_1$. The parsimony distance of $T_1$ and $T_2$ is the maximum absolute difference between parsimony scores of $T_1$ and $T_2$, ranging over all characters \cite{fischerNonbinary,dMP-moulton}. 

The distance has several attractive properties; it is a metric, and (unlike e.g. RF distance) it is not confounded by the influence of horizontal evolutionary events \cite{fischerNonbinary}. Furthermore, the concept of parsimony, which lies at the heart of $d_{MP}$, is fundamental in phylogenetics since it articulates the idea that explanations of evolutionary history should be no more complex than necessary. Alongside its historical significance for applied phylogenetics \cite{felsenstein2004inferring}, the study of character-based parsimony has given rise to many beautiful combinatorial and algorithmic results; we refer to e.g. \cite{steel2016phylogeny, liers2016binary,strike,AlonApproxMP,moran2008convex} for overviews.

Unfortunately, it is NP-hard to compute $d_{MP}$ \cite{kelk2014complexity}. A simple exponential-time algorithm is known \cite{kelk2017note}, which runs in time $O( \phi^n \cdot \text{poly}(n))$, where $|X|=n$ and $\phi \approx 1.618$ is the golden ratio, but beyond this few positive results are known. This is frustrating and surprising, since a number of results link $d_{MP}$ to the well-studied TBR distance, henceforth denoted $d_{TBR}$. 
%\todo{LS: I don't like it that we continue writing about $d_{TBR}$ without first explaining what is entails. I have tried to write an alternative by moving parts from below to here, but then $d_{TBR}$ becomes such a prominent concept.}
Namely, it has been proven that $d_{MP}$ is a lower bound on $d_{TBR}$ \cite{fischerNonbinary}, which, informally, asks for the minimum number of topological rearrangement operations to transform one tree into the other; an empirical study has suggested that in practice the distances are often very close \cite{kelk2016reduction}. Also, $d_{MP}$ has been used to prove the tightness of the best-known kernelization results for $d_{TBR}$ \cite{tightkernel,kelk2019new}. What, exactly, is the relationship between $d_{MP}$ and $d_{TBR}$? This is a pertinent question, which transcends the specifics of TBR distance because, crucially, $d_{TBR}$ can be characterized using the powerful \emph{maximum agreement forest} abstraction.

Distances based on agreement forests have been intensively and successfully studied in recent years, as the use of the agreement forest abstraction almost always yields fixed parameter tractability and constant-factor approximation algorithms \cite{bordewich2017fixed}, many of which are effective in practice. We refer to \cite{whidden2013fixed,van2016hybridization,chen2016approximating,shi2018parameterized} for recent overviews of the agreement forest literature, and books such as \cite{cyganBook} for an introduction to fixed parameter tractability. In particular, $d_{TBR}$ can be computed in $O(3^{d_{TBR}} \cdot \text{poly}(n))$ time \cite{chen2015parameterized}, permits a polynomial-time 3-approximation algorithm, and a kernel of size $11d_{TBR} - 9$ \cite{kelk2019new}. 

In contrast, prior to this paper very little was known about $d_{MP}$:  nothing was known about the approximability of $d_{MP}$; it was not known whether it is fixed parameter tractable (where $d_{MP}$ is the parameter); and, while, as mentioned above, it is known that $d_{MP} \leq d_{TBR}$, it remained unclear how much smaller $d_{MP}$ can be than $d_{TBR}$ in the worst case. Despite promising partial results it even remained unclear whether questions such as ``Is $d_{MP} \geq k$?'' can be solved in \emph{polynomial} time when $k$ is a constant \cite{boes2016linear,kelk2016reduction}. This is another important difference with distances such as $d_{TBR}$, where corresponding questions are trivially polynomial time solvable for fixed $k$. The apparent extra complexity of $d_{MP}$ seems to stem from the unusual max-min definition of the problem, and the fact that unlike $d_{TBR}$, which is based on topological rearrangements of subtrees, $d_{MP}$ is based only on characters.

In this article we take a significant step forward in understanding the deeper complexity of $d_{MP}$ and resolve all of the above questions. Our central result is that we prove that two common polynomial-time reduction rules encountered in phylogenetics, the \emph{subtree} and \emph{chain} reductions \cite{AllenSteel2001}, are sufficient to produce a \emph{linear kernel} for $d_{MP}$. This means that, after exhaustive application of these rules, which preserve $d_{MP}$, the reduced trees will have at most $\alpha \cdot d_{MP}$ leaves, with $\alpha =560$. %\todo{LS: As I have asked before, does the proof technique that Mark invented not have wider applicability, also e.g. by giving easier proofs for known results}
The fixed parameter tractability of computing $d_{MP}$ (parameterized by itself) then follows, by solving the kernel using the exact algorithm from \cite{kelk2017note}. The fact that the reduction rules preserve $d_{MP}$ was already known \cite{kelk2016reduction}. However, proving the bound on the size of the reduced trees requires rather involved combinatorial arguments, which have a very different flavour to the arguments typically encountered in the maximum agreement forest literature. 
The main goal of this article is to present these arguments as clearly as possible, rather than to optimize the resulting constants.

The kernel confirms that questions such as ``Is $d_{MP} \geq k$?'' can, indeed, be solved in polynomial time: it is striking that here the proof of fixed parameter tractability has preceded the weaker result of polynomial-time solveability for fixed $k$. 

Next, by producing a %carefully
modified, constructive version
% SK: I removed 'careful' just to be on the same side
%\todo{LS: But we just apply the reductions to get to the kernel, right? The only thing we do is kind of back-engineering the kernel solution into a solution for the original instance}
of the bounding argument underpinning the kernelization,
%kernelization argument
we are able to demonstrate a polynomial-time $\alpha(1+1/r)$-factor approximation algorithm for computation of $d_{MP}$ for any constant $r$, placing the problem in APX. 

A number of other powerful corollaries result from the kernelization. We leverage the fact that the reduction rules also preserve $d_{TBR}$, to show that $1 \leq \frac{d_{TBR}}{d_{MP}} \leq 2\alpha$, which limits how much smaller $d_{MP}$ can be than $d_{TBR}$. 
%(or equivalently, the size of a maximum agreement forest)
Subsequently, we show that the treewidth of an auxiliary graph structure known as the \emph{display graph} \cite{bryant2006compatibility} is bounded by a linear function of $d_{MP}$, resolving an open question posed several times \cite{kelk2015,kelk2016reduction}. The treewidth bound, and the existence of a non-trivial approximation algorithm for $d_{MP}$, were specified as sufficient conditions for proving the fixed parameter tractability of $d_{MP}$ via Courcelle's Theorem \cite{kelk2016reduction}; our linear kernel implies \emph{them}. Summarising, our central result shows how kernelization can open the gateway to a host of strong auxiliary results and bypass intermediate steps in the algorithm design process.

The structure of the paper is as follows. In Section~\ref{sec:defs}  we give formal definitions and insightful preliminary results. In Section~\ref{sec:kernel}
we prove our main result: the linear kernel. The section starts with Subsection~\ref{sec:kerneloverview} that gives a high-level overview of how a sequence of lemmas and theorems lead to the kernel, whereas in the rest of the section these lemmas and theorems are proved. Interesting corollaries of the existence of a linear kernel are derived in Section~\ref{sec:cor}: A constant approximation algorithm in Section~\ref{sec:approx}; A bound on the ratio between $d_{MP}$ and $d_{TBR}$ in Section~\ref{sec:dtbr}; A bound on the treewidth of the so-called display graph in terms of $d_{MP}$ in Section~\ref{sec:tw}. Section~\ref{sec:concl} concludes with some directions for future research.
%\\
%\steven{TODO: Try to better emphasize more why our results are non-trivial? (Specifically, that agreement forests don't work?)}\\
%TODO: Describe article structure \\
%TODO: Mention that the constants are not yet fully optimized \\
%\\
%TODO: Add a figure to clarify how dMP is defined???\\
%TODO: give more refs about FPT and kernelization - do we need to say more?

\section{Definitions and Preliminaries}
\label{sec:defs}
%\todo{SK: Mark, can you re-do my new example figure in your style (and check my caption)?}

\begin{figure}
   %\begin{subfigure}[b]{0.3\textwidth}
    %\includegraphics{}
    %\caption{$T_1|S$}
    %\end{subfigure}
    \begin{subfigure}[b]{0.45\textwidth}
    \includegraphics[width=\textwidth]{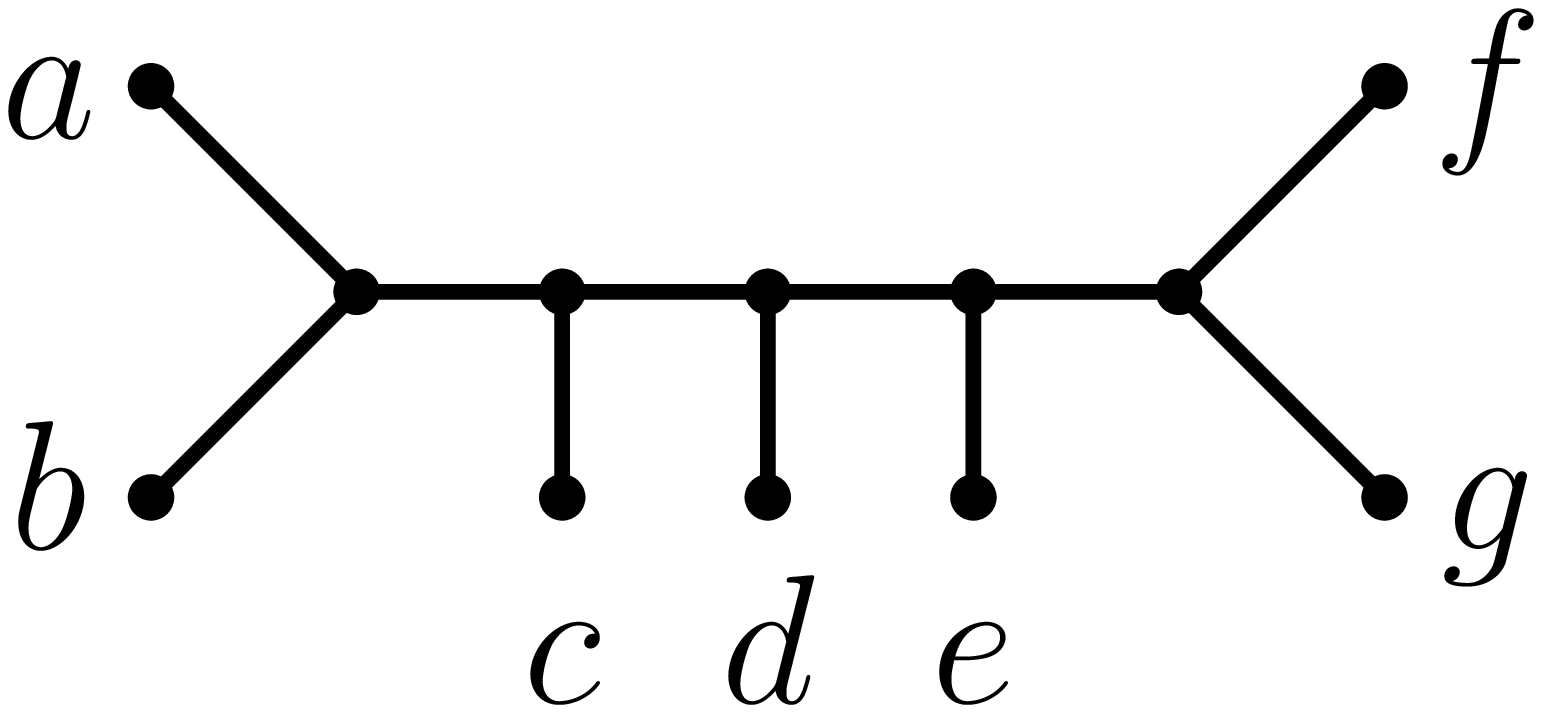}
    \caption{$T_1$}
    \end{subfigure}
    ~
    \begin{subfigure}[b]{0.45\textwidth}
    \includegraphics[width=\textwidth]{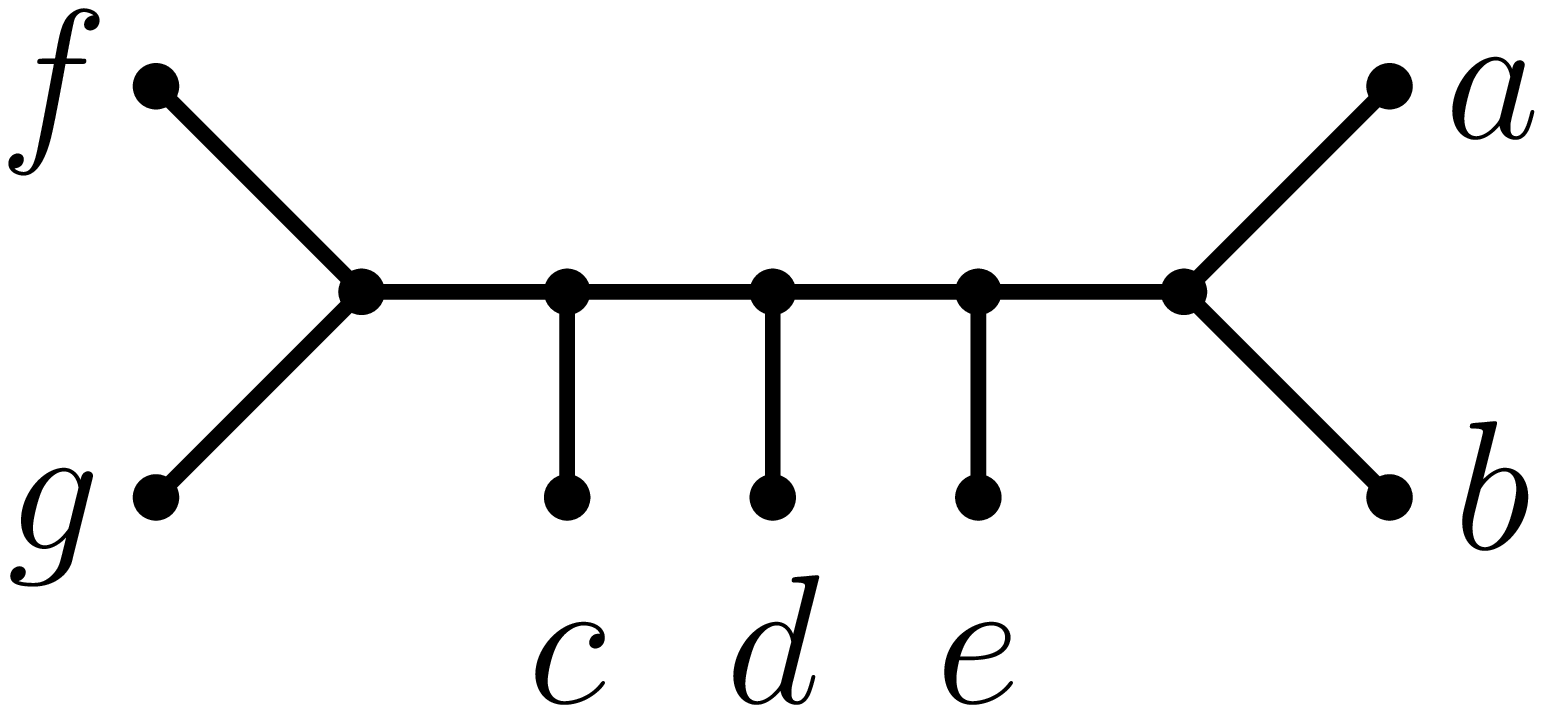}
    \caption{$T_2$}
    \end{subfigure}
    \caption{Two unrooted binary phylogenetic trees $T_1, T_2$ on $X=\{a,\ldots,g\}$. A character which assigns colour \textsc{red} to $\{a,b,c\}$ and \textsc{blue} to
    $\{d,e,f,g\}$ has parsimony score 1 on the left tree, and 2 on the right, proving that $d_{MP}(T_1, T_2) \geq |1 - 2| = 1$. In fact, it can be verified that no character can cause the parsimony scores of these two trees to differ by more, so $d_{MP}(T_1, T_2)=1$. As noted in Section \ref{sec:dtbr}, $d_{TBR}(T_1, T_2)=2$, because a maximum agreement forest of these two trees contains three blocks \cite{kelk2016reduction}.
    } 
    \label{fig:exampleMP}
\end{figure}

%\begin{figure}
%    \includegraphics[width=0.9\textwidth]{}
%    \caption{\steven{Two unrooted binary phylogenetic trees $T_1, T_2$ on $X=\{a,\ldots,h\}$. A character which assigns colour \textsc{RED} to $\{a,b,d\}$ and \textsc{BLUE} to
%    $\{e,f,g,h\}$ has parsimony score 1 on the left tree, and 2 on the right, proving that $d_{MP}(T_1, T_2) \geq |1 - 2| = 1$. In fact, it can be verified that no character can cause the parsimony scores of these two trees to differ by more, so $d_{MP}(T_1, T_2)=1$. As noted in Section \ref{sec:dtbr}, $d_{TBR}(T_1, T_2)=2$, because a maximum agreement forest of these two trees contains three blocks \cite{kelk2016reduction}.}}
%    \label{fig:exampleMP}
%\end{figure}    

%\todo{Terminology mostly based on~\cite{KELK20161}.}
An \emph{unrooted binary phylogenetic tree} on a set of species (or taxa) $\taxa$ is an undirected tree in which all internal vertices have degree $3$, and the degree-$1$ vertices (the \emph{leaves}) are bijectively labelled with elements from $\taxa$. For brevity we will refer to unrooted binary phylogenetic trees as \emph{phylogenetic trees}, or even shorter \emph{trees}. See Figure~\ref{fig:exampleMP} for an example.

%We say that two trees $T_1, T_2$ are \emph{isomorphic}, denoted $T_1 \simeq T_2$, if they are identical after suppressing degree-$2$ vertices [??]

Given a set $S \subseteq \taxa$ and a tree $T$ on $\taxa$, we denote by $T[S]$ the \emph{spanning subtree on $S$ in $T$}, that is, the minimal connected subgraph $T'$ of $T$ such that $T'$ contains every element of $S$.
The \emph{induced subtree $T|_S$ by $S$ in $T$} is the tree derived from $T[S]$ by suppressing any vertices of degree $2$. 
%\todo{Maybe we would rather talk about the spanning tree directly}

Given a subset $S \subseteq \taxa$ and a tree $T$ on $\taxa$, we say that $S$ has \emph{degree $d$ in $T$} 
%(or equivalently \emph{$T|_S$ has degree $d$ in $T$}) 
if there are exactly $d$ edges $uv$ in $T$ for which $u$ is in $T[S]$ and $v$ is not; in other words, $d$ is the number of edges separating $T[S]$ from the rest of $T$. We call these edges \emph{pending edges}.
%If $S$ has degree $1$ in $T$ then we say $T|_S$ is a \emph{pendant subtree}
% SK: I couldn't see it either, but perhaps I missed it - so I
%removed it
%\todo{LS: We never use this concept! I propose to delete it}
of $T$. 

For two disjoint subsets $S_1,S_2 \subseteq \taxa$, we say $S_1$ and $S_2$ are \emph{spanning-disjoint} in $T$ if the spanning subtrees $T[S_1]$ and $T[S_2]$ are edge-disjoint. (Observe that as $T$ is binary, this also implies that $T[S_1]$ and $T[S_2]$ are vertex-disjoint.)
Similarly, we say a collection $S_1, \dots S_m$ of subsets of $\taxa$ are \emph{spanning-disjoint} in $T$ if $S_i, S_j$ are spanning-disjoint in $T$ for any $i\neq j$.

\subsection{Characters and parsimony}

A \emph{character} on $\taxa$ is a function $\cha:\taxa \rightarrow \states$, where $\states$ is a set of \emph{states}. In this paper there is no limit on the size of $\states$, in contrast to some contexts where $|\states|$ is assumed to be quite small (for example, in genetic data the nucleobases A,C,G,T). 
Think of the states as colours, say $1,2,\ldots,t=:[t]$.
%\todo{Do we want to talke about $d_{MP_2}$ or similar in the conclusion?}

For a given character $\cha$ and tree $T$ on $\taxa$, the \emph{parsimony score} measures how well $T$ fits $\cha$. It is defined in the following way.
Call a colouring $\ext^{\cha}:V(T) \rightarrow [t]$ an \emph{extension} of $\cha$ to $T$ if   $\ext^{\cha}(x) = \cha(x)$ for all $x \in \taxa$. We usually omit superscript $\cha$ of $\ext$ if the character is clear from the context. Denote by $\extScore_T(\ext)$ the number of bichromatic edges $uv$ in $T$, i.e. for which $\ext(u) \neq \ext(v)$.
Again, we usually omit subscript $T$ when the tree is clear from context.
%Then 
The \emph{parsimony score} for $T$ with respect to $\cha$ is defined as 
$$\PS{\cha}{T} = \min_{\ext}\extScore_T(\ext)$$
where the minimum is taken over all possible extensions $\ext$ of $\cha$ to $T$. 
An extension $\ext$ that achieves this bound is called an \emph{optimal extension} of $\cha$ to $T$. An optimal extension, and thus the parsimony score, can be easily computed in polynomial time using dynamic programming or e.g. Fitch's algorithm \cite{fitch1971}.
%\todo{SK: Do we need to say somewhere that computation of parismony score is polytime?}

Observe that for any $T$ and $\cha$, the parsimony score for $T$ with respect to $\cha$ is at least $|\cha(\taxa)| - 1$, i.e. the number of colours assigned by $\cha$ minus $1$. If $\PS{\cha}{T}$ is exactly $|\cha(\taxa)| - 1$, we say that $T$ is a \emph{perfect phylogeny} for $\cha$.
For trees $T_1,T_2$ and a character $\cha$ on $X$, the \emph{parsimony distance with respect to $\cha$} is defined as 
$${\dmp}_{\cha}(T_1,T_2) = |\PS{\cha}{T_1} - \PS{\cha}{T_2}|.$$ 

Now we are ready to define the \emph{maximum parsimony distance} between two trees (see also Figure~\ref{fig:exampleMP}). For two trees $T_1, T_2$ on $X$, the maximum parsimony distance is defined as
$$\dmp(T_1,T_2) = \max_{\cha}{\dmp}_{\cha}(T_1,T_2)$$
where the maximum is taken over all possible characters $\cha$ on $X$ \cite{fischerNonbinary,dMP-moulton}. Equivalently, we may write it as 

$$\dmp(T_1,T_2) = \max_{\cha}|\extScore(\ext^{\cha}_1) - \extScore(\ext^{\cha}_2)|$$
where $\ext^{\cha}_1$ is an optimal extension of $\cha$ to $T_1$, and $\ext^{\cha}_2$ an optimal extension of $\cha$ to $T_2$.
%Informally, $\dmp(T_1, T_2)$ measures how much $T_1$ and $T_2$ differ in terms of their fit for a character.\todo{These non-technical notes may be redundant once the introduction is written up / should maybe be cut}.
This measure satisfies the properties of a distance metric on the space of unrooted binary phylogenetic trees \cite{fischerNonbinary,dMP-moulton}. For two 
%\todo{LS:should this be binary or does it hold for any pair of trees?} 
trees on $n$ taxa 
it is known that $d_{MP}$ is at most
$n - 2\sqrt{n}+1$ \cite{fischerNonbinary}. A weaker bound of $n-1$ is easily obtained by observing that the parsimony score of a character on a tree is at least 0 and at most $n-1$.

%Since it will be useful to keep track whether $\PS{\cha}{T_1}$ or  $\PS{\cha}{T_2}$ is larger, we also define the asymmetric version of this measure. The \emph{asymmetric maximum parsimony distance} from $T_1$ to $T_2$ is defined as
%$$\damp(T_1, T_2) =  \max_{\cha}(\PS{\cha}{T_1} - \PS{\cha}{T_2})$$
%where the maximum is taken over all possible characters $\cha$ on $X$.
%Observe that $\dmp(T_1,T_2) = \max\{\damp(T_1,T_2), \damp(T_2, T_1)\}$.
%When $\chi$ is a character on $X$ such that $(\PS{\cha}{T_1} - \PS{\cha}{T_2}) \geq k$ for some $k$, we call $\cha$ a \emph{witnessing character} for $\damp(T_1,T_2) \geq k$.

%- definitions, enough to specify the problem, include one-way $dmp$. Try to base definitions on~\cite{KELK20161} ?

%- Prelim results hopefully build intuition and are obvious enough for people who already have an intuitive understanding, building up to result about combining assignments?

%\todo{MJ: Have removed the initial Observations for now; let me know if you miss them.}
%We now present some basic results about maximum parsimony distance, which will be useful later. (Proofs all todo)

%We borrow a term from~\cite{BFK2017}
Given a tree $T$ on $\taxa$ and a colouring $\ext:V(T) \rightarrow [t]$,
the \emph{forest induced by $\ext$} is derived from $T$ by deleting every bichromatic edge under $\ext$.
Observe that the number of connected components in the forest induced by $\ext$ is exactly $\extScore(\ext)+1$.
\begin{lemma}\label{lem:PSlowerBound}
If $\cha:\taxa \rightarrow [t]$ is a character with $S_i  = \cha^{-1}(i) \neq \emptyset $ for all $i \in [t]$ and $T$ is a tree on $\taxa$, then
$$\PS{T}{\cha} \geq t-1$$
with equality if and only if $S_1, \dots S_t$ are spanning-disjoint in $T$.
\end{lemma}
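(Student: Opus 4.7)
The plan is to split the statement into a lower bound obtained by a component-counting argument, plus an equality characterisation proved in two directions: a constructive extension for the ``if'' direction, and a short structural argument for the ``only if'' direction. For the lower bound, fix any extension $\ext$ of $\cha$ to $T$ and consider the forest induced by $\ext$. By the observation immediately preceding the lemma, this forest has $\extScore(\ext)+1$ components. Each component is monochromatic, so it can contain leaves of at most one colour; since every $S_i$ is non-empty, there must be at least $t$ components, one per colour. Hence $\extScore(\ext) \geq t-1$, and taking the minimum over $\ext$ yields $\PS{T}{\cha} \geq t-1$.

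For the ``if'' direction, suppose $S_1,\dots,S_t$ are spanning-disjoint, so the subtrees $T[S_1],\dots,T[S_t]$ are pairwise edge-disjoint and, since $T$ is binary, also pairwise vertex-disjoint. For each $v \in V(T)$, let $d_i(v)$ denote its distance in $T$ to the nearest vertex of $T[S_i]$, and define $\ext(v) := i$ where $i$ minimises $d_i(v)$, breaking ties by a fixed priority on the colours. Every vertex of $T[S_i]$ has $d_i = 0$ and hence receives colour $i$, so $\ext$ extends $\cha$. The crucial claim is that each colour class $\ext^{-1}(i)$ is connected in $T$: if $w$ lies on the unique path in $T$ from $v$ to its nearest vertex of $T[S_i]$, then $d_i(w) = d_i(v) - d(v,w)$, while the tree metric yields $d_j(w) \geq d_j(v) - d(v,w)$ for every other $j$, so $w$ selects the same colour as $v$ under the tie-breaking rule. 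The colour classes therefore partition $V(T)$ into exactly $t$ connected monochromatic subtrees, and contracting each of them to a point in $T$ yields a connected graph on $t$ vertices with $(|V(T)|-1) - (|V(T)|-t) = t-1$ edges, i.e.\ a tree. Those $t-1$ edges are precisely the bichromatic edges of $\ext$, so $\extScore_T(\ext) = t-1$ and hence $\PS{T}{\cha} \leq t-1$.

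For the ``only if'' direction, assume $\PS{T}{\cha} = t-1$ and let $\ext$ achieve this minimum. Then the induced forest has exactly $t$ monochromatic components, and the counting argument of the lower bound forces each colour to occupy a single component. The colour-$i$ component is a connected subgraph of $T$ containing all of $S_i$, so it contains $T[S_i]$; since the $t$ components are pairwise vertex-disjoint, so are the subtrees $T[S_1],\dots,T[S_t]$, which in particular makes them edge-disjoint and therefore spanning-disjoint. The only step I expect to require some care is the connectivity claim for the distance-minimising colouring in the constructive direction; the remaining steps reduce to straightforward bookkeeping on the components of the induced forest.
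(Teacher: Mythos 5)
Your proof is correct and follows essentially the same three-part structure as the paper's: the component-counting lower bound, a constructive extension for the ``if'' direction, and the pigeonhole argument on the $t$ components for the ``only if'' direction. The only difference is that in the ``if'' direction the paper grows the colour classes by arbitrary greedy propagation outward from the subtrees $T[S_i]$ (asserting connectedness by construction), whereas you use a nearest-subtree assignment with tie-breaking and verify connectedness via distance-monotonicity along tree paths --- a correct, if slightly more laborious, realisation of the same flood-fill idea.
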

\begin{proof}
To see that $\PS{T}{\cha} \geq t-1$, consider an optimal extension $\ext$ of $\cha$ to $T$, and let $F$ be the forest induced by $\ext$.
As each connected component in $F$ is monochromatically coloured by $\ext$, there must be at least $t$ connected components, and thus $\extScore(\ext) \geq t-1$, which implies $\PS{\cha}{T} \geq t-1$.

Now suppose that $S_1, \dots, S_t$ are spanning-disjoint in $T$. Then construct an extension $\ext$ of $\cha$ to $T$ by first setting $\ext(u) = i$ for every vertex $u$ in $T[S_i]$ , for each $i \in [t]$. (As the spanning trees are edge-disjoint and thus vertex-disjoint in $T$, this is well-defined). For any remaining unassigned vertices $v$, if $v$ has a neighbour $u$ for which $\ext(u)$ is defined, then set $\ext(v) = \ext(u)$. Repeat this process until every vertex is assigned a colour by $\ext$.
Now observe that by construction, the vertices assigned colour $i$ by $\ext$ form a connected subtree for each $i \in [t]$.
Thus the forest induced by $\ext$ has exactly $t$ connected components, and so $\extScore(\ext) = t-1$.

Finally, suppose $\PS{\cha}{T} = t-1$, and let $\ext$ be an optimal extension of $\cha$. Then the forest $F$ induced by $\ext$ has exactly $t$ connected components, which implies by the pigeonhole principle that each $S_i$ is a subset of one connected component in $F$. Then as each $S_i$ is contained within a different connected component of $F$, the spanning trees $T[S_i]$ are also contained within these components, and so $S_1, \dots S_t$ are spanning-disjoint.
\end{proof}

\subsection{Parameterized complexity and kernelization}

A \emph{parameterized problem} is a problem for which the inputs are of the form $(x,k)$, where $k$ is an non-negative integer, called the \emph{parameter}. A parameterized problem is \emph{fixed-parameter tractable} (FPT) if there exists an algorithm that solves any instance $(x,k)$ in $f(k)\cdot |x|^{O(1)}$ time, where $f()$ is a computable function depending only on $k$.
A parameterized problem has a \emph{kernel} of size $g(k)$ if there exists a polynomial time algorithm transforming any instance $(x,k)$ into an equivalent problem $(x',k')$, with $|x|,k' \leq g(k)$.
If $g(k)$ is a polynomial in $k$ then we call this a \emph{polynomial kernel}; if $g(k) = O(k)$ then it is a \emph{linear kernel}. It is well-known that that a parameterized problem is fixed-parameter tractable if and only if it has a (not necessarily polynomial) kernel. For more information, we refer the reader to~\cite{ParamAlg2015}.

For a maximization problem $\Pi$ and $\rho \geq 1$, we say $\Pi$ has a \emph{constant factor approximation} with \emph{approximation ratio $\rho$} if there exists a polynomial-time algorithm such that for any instance $\pi$ of $\Pi$, the following inequalities hold, where $opt(\pi)$ denotes the maximum value of a solution to $\pi$, and $alg(\pi)$ denotes the value of the solution to $\pi$ returned by the algorithm: 
$$1 \leq \frac{opt(\pi)}{alg(\pi)} \leq \rho$$

In this paper we study the following maximization problem:

\medskip 
\fbox{
\parbox{0.75\textwidth}{
{\problemName} \\
{\bf Input:} Two trees $T_1,T_2$ on a set of taxa $\taxa$. \\
{\bf Output:} A character $\cha$ on $\taxa$ that maximizes $|\PS{\cha}{T_1} - \PS{\cha}{T_2}|.$ \\
}
}

\medskip

\section{Kernel bound}
\label{sec:kernel} 

\subsection{Overview}
\label{sec:kerneloverview}

In this section we give an overview of the constituent parts of our %[main] 
kernelization result, and how they fit together.

The first step is to apply two reduction rules, described in the next section. Rules 1 and 2 correspond roughly to the Cherry and Chain reduction rules that often appear in papers on computational phylogenetics. The correctness of these rules was proved in~\cite{kelk2016reduction}; our contribution is to show that the exhaustive application of these rules grants a linear kernel, as stated in the following theorem.

\newcommand{\kernelBoundText}{
There exists a constant $\alpha$ ($\alpha = 560$) for which the following holds.
Let $(T_1,T_2)$ be a pair of binary unrooted phylogenetic trees on $\taxa$ that are irreducible under Reduction Rules~\ref{rule:cherry} and~\ref{rule:chain}.

Then if $|\taxa| \geq \alpha k$, it holds that $\dmp(T_1,T_2) \geq k$, and we can find a witnessing character, i.e. a character $\cha$ yielding ${\dmp}_{\cha}(T_1,T_2) \geq k$, %\todo{SK: witnessing character not defined. LS: would this work?}
in polynomial time.}
\begin{theorem}\label{thm:kernelBound}
\kernelBoundText
\end{theorem}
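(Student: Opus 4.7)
The goal is to produce, in polynomial time, a single character $\chi$ witnessing $\dmp(T_1,T_2) \geq k$. The natural route, suggested by Lemma~\ref{lem:PSlowerBound}, is to construct a partition of $\taxa$ into colour classes $S_1,\ldots,S_t$ that are spanning-disjoint in $T_1$, so that $\PS{\chi}{T_1}=t-1$, and simultaneously force $\PS{\chi}{T_2} \geq t-1+k$. Equivalently, the forest induced in $T_2$ by any optimal extension of $\chi$ should have at least $t+k$ connected components, whereas the forest in $T_1$ has exactly $t$. The task then reduces to exhibiting enough ``local conflicts'' between the two trees: structural features that cost nothing in $T_1$ (because they stay within a single $T_1[S_i]$ or respect spanning-disjointness) but that necessarily split a colour class across the cut of some edge of $T_2$.

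\textbf{Exploiting irreducibility.} The first substantive step is to translate irreducibility under Reduction Rules~\ref{rule:cherry} and~\ref{rule:chain} into a linear (in $|\taxa|$) supply of such conflicts. Informally, the absence of a common cherry means that any cherry $\{a,b\}$ of $T_1$ is split by a non-trivial subtree in $T_2$, and symmetrically; the absence of long common chains means that any chain-like stretch in one tree is either short or arranged differently in the other. Converting these qualitative facts into a quantitative lower bound on the number of ``building blocks'' available will rely on a counting argument over leaves, cherries, and short chains, and is where the large constant $\alpha$ originates.

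\textbf{Building the character.} Having harvested the building blocks, I would group them into pairwise spanning-disjoint pendant regions of $T_1$, one per colour class $S_i$. By construction each $S_i$ will live in a small, localised portion of $T_1$, so the $T_1[S_i]$ are pairwise edge-disjoint and spanning-disjointness in $T_1$ is automatic, giving $\PS{\chi}{T_1}=t-1$ via Lemma~\ref{lem:PSlowerBound}. The classes are then arranged so that each building block contributes at least one additional bichromatic edge to any extension of $\chi$ on $T_2$ (for instance: two leaves of some $S_i$ forming a cherry in $T_1$ are separated in $T_2$ by leaves from a different $S_j$, so one of the two pending edges of $T_2[S_i]$ at this site must be bichromatic). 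Summing over blocks then yields $\PS{\chi}{T_2} \geq t-1+k$, and hence ${\dmp}_{\chi}(T_1,T_2) \geq k$. The whole construction will be algorithmic, so the witness is produced in polynomial time.

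\textbf{Main obstacle.} I expect the heart of the proof to lie neither in the lower-bound calculation for $T_1$ (which is immediate from Lemma~\ref{lem:PSlowerBound}) nor in any single inequality for $T_2$, but in the combinatorial bookkeeping that shows the building blocks can be selected \emph{simultaneously} non-interferingly: the colour classes of $T_1$ must remain spanning-disjoint, while in $T_2$ the contributions from different blocks must be genuinely additive rather than double-counted at shared edges. Controlling this interaction, especially between chain-reduced chains of one tree and the cherry-free structure of the other, is what forces a careful case analysis and, in turn, the comparatively large constant $\alpha=560$. Optimising $\alpha$ is explicitly not the priority; establishing any linear bound suffices for the kernel, approximation, treewidth, and $\dtbr$ corollaries that follow.
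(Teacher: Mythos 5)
Your high-level scaffolding (partition $\taxa$ into classes spanning-disjoint in $T_1$ so that $\PS{\chi}{T_1}=t-1$, then look for local conflicts that raise the score in $T_2$) matches the paper's first step, but your plan for the second half has a genuine gap: you want the \emph{same} partition character to satisfy $\PS{\chi}{T_2}\geq t+k-1$ by arranging that each of $k$ classes is forced to split in $T_2$. Two problems. First, a local conflict contained inside a single colour class $S_i$ (e.g.\ a conflicting quartet $Q\subseteq S_i$) contributes nothing to the parsimony score of that character, because $S_i$ is monochromatic; to generate an extra bichromatic edge you need leaves of \emph{different} classes to interleave in $T_2$ in a controlled way, and since the partition is built only with reference to $T_1$, you have no handle on this. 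Second, even where classes do interleave in $T_2$, the resulting bichromatic edges may coincide with the $t-1$ edges already needed just to separate $t$ colours; you name this double-counting issue as the ``main obstacle'' but propose no mechanism to resolve it.

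The paper resolves both issues by \emph{not} using the partition character as the final witness. It argues by dichotomy (Lemma~\ref{lem:wellBehavedSets}): either the partition character already witnesses $\dmp(T_1,T_2)\geq k$, or --- precisely because $\PS{\chi}{T_2}$ is then small --- at least $k$ classes $S_{i_j}$ are spanning-disjoint in $T_2$ as well and have degree at most $d_1$ in $T_1$ and at most $d_2$ in $T_2$. Each such class is large enough that, by irreducibility and Lemma~\ref{lem:smallDegreeSizeBound}, it contains a conflicting quartet $Q_j$. The partition is then discarded and a fresh \emph{two-state} character is built from these quartets (Lemma~\ref{lem:combiningQuartets}), colouring so that $\chi(a_j)=\chi(b_j)\neq\chi(c_j)=\chi(d_j)$ for each $Q_j=\{a_j,b_j,c_j,d_j\}$ with $T_1|_{Q_j}=a_jb_j|c_jd_j$ and $T_2|_{Q_j}=a_jc_j|b_jd_j$: this costs exactly one bichromatic edge per quartet in $T_1$ but forces two edge-disjoint bichromatic paths per quartet in $T_2$, and spanning-disjointness of the quartets in \emph{both} trees makes the contributions provably additive, giving $\PS{\chi}{T_1}\leq k$ and $\PS{\chi}{T_2}\geq 2k$. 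Without this change of character (or some substitute for it), your argument does not close.
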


This theorem, together with the correctness of the reduction rules as proved in~\cite{kelk2016reduction}, immediately implies a linear kernel for \problemName.

To show how we prove the theorem, we will need to  introduce some terminology as we go.

%For a phylogenetic tree $T$ on $\taxa$ and subset $S \subseteq \taxa$,
%we say $S$ has \emph{degree} $d$ in $T$ if $d$ is the number of pendant edges of the spanning subtree $T[S]$ - that is, there are $d$ edges $uv$ in $T$ for which $u$ is in $T[S]$ and $v$ is not.
A \emph{quartet} $Q$ is any set of $4$ elements in $\taxa$. 
%If $Q = \{a,b,c,d\}$ then we write $T|_Q = ab|cd$ to denote the fact that the path between $a$ and $b$ is disjoint from the path between $c$ and $d$ in $T$. 
If $T_1|_{Q} \neq T_2|_{Q}$, we say that $Q$ is a \emph{conflicting quartet} for $(T_1,T_2)$. %(Note that this implies $T_1|_Q = ab|cd$ and $T_2|_Q = ac|bd$ for some choice of $a,b,c,d$.)

As a crucial step we prove that for any $S$ large enough with respect to the degree of $S$ in both $T_1$ and $T_2$, either there exists a conflicting quartet or one of the reduction rules applies.

\newcommand{\smallDegreeSizeBoundText}{Let $S$ be a subset of $\taxa$ with $d_1$ the degree of $S$ in $T_1$, and $d_2$ the degree of $S$ in $T_2$.
If $|S|> 9(d_1+d_2)-12$, then either $T_1|_S \neq T_2|_S$ or one of Reduction Rules~\ref{rule:cherry} or~\ref{rule:chain} applies to $(T_1,T_2)$. In particular if $(T_1,T_2)$ is  irreducible under  Rules~\ref{rule:cherry} or~\ref{rule:chain} and $|S| \geq 9(d_1+d_2)-11$, then there exists a conflicting quartet $Q \subseteq S$, and such a quartet can be found in polynomial time.}
\begin{lemma}\label{lem:smallDegreeSizeBound}
\smallDegreeSizeBoundText
\end{lemma}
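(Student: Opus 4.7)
The plan is to split on whether the induced trees $T_1|_S$ and $T_2|_S$ coincide. If they differ, a classical result on unrooted binary phylogenetic trees --- that two distinct trees on the same leaf set must disagree on some quartet --- immediately produces a conflicting quartet $Q \subseteq S$; in polynomial time we can compute the induced trees, locate a nontrivial split of one that is not a split of the other, and extract four leaves whose induced topology differs between $T_1$ and $T_2$.

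The substantive case is $T_1|_S = T_2|_S =: T'$. Here I would exploit that each of the $d_1 + d_2$ pending edges of $T_1[S]$ and $T_2[S]$ attaches at a unique degree-$2$ vertex of $T_i[S]$ (an internal vertex of $T_i$ with exactly one neighbour outside $T_i[S]$), which under the degree-$2$ suppression corresponds to a unique edge of $T'$. Call an edge of $T'$ \emph{clean} if no pending edge of either $T_1$ or $T_2$ lies on it; at most $d_1 + d_2$ edges are unclean. A cherry $\{x,y\}$ of $T'$ whose two leaf-edges are clean is a cherry in both $T_1$ and $T_2$, so Reduction Rule~\ref{rule:cherry} fires on $\{x,y\}$; a sufficiently long pendant chain of $T'$ whose path-edges and incident leaf-edges are clean is a common pendant chain of $T_1$ and $T_2$, so Reduction Rule~\ref{rule:chain} fires. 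The lemma therefore reduces to the combinatorial claim that whenever $|S| > 9(d_1+d_2) - 12$, the tree $T'$ contains either a clean cherry or a clean long chain.

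For this claim I would use the standard parameters of an unrooted binary tree: writing $a_i$ for the number of internal vertices of $T'$ with exactly $i$ leaf-neighbours, one has $|S| = a_1 + 2a_2$, and the number of cherries of $T'$ equals $c = a_2$ (for $|S|\geq 4$). If $c > d_1+d_2$, then one cherry survives all pending-edge markers and Rule~\ref{rule:cherry} applies. Otherwise $c \leq d_1+d_2$, which forces $a_1 \geq |S| - 2(d_1+d_2)$, and the maximal pendant chains of $T'$ (the at most $2a_2 - 3 \leq 2(d_1+d_2) - 3$ paths between degree-$\neq 2$ vertices of the cherry-contracted skeleton) carry $a_1$ chain-interior leaves in total. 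Each pending edge either falls on an internal chain-edge (cutting a chain into two segments) or on a chain-leaf edge (damaging one leaf), so the number of clean chain-segments is $O(d_1+d_2)$ with total length $|S|-O(d_1+d_2)$; a simple averaging argument then produces a clean segment whose length exceeds the threshold required by Rule~\ref{rule:chain} as soon as $|S| > 9(d_1+d_2)-12$. The main technical obstacle, and the source of the specific constant $9$, is the detailed bookkeeping: reconciling the exact length demanded by Rule~\ref{rule:chain} with the way pending edges on leaf-edges simultaneously interact with cherries and chain-segments, and handling boundary effects where maximal chains meet cherries or degree-$3$ skeleton vertices. Finally, the polynomial-time clause of the lemma is immediate: once we know no reduction rule applies, the first part forces $T_1|_S \neq T_2|_S$, and a conflicting quartet is then located either by reading it off from a split-difference of $T_1|_S$ and $T_2|_S$ or by brute-force enumeration of the $\binom{|S|}{4}$ candidates.
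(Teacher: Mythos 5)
Your proposal takes essentially the same route as the paper's proof: the quartet characterization disposes of the case $T_1|_S \neq T_2|_S$, and when the induced trees agree the paper likewise argues that every cherry of the common induced tree must be hit by a pending edge (else Rule~\ref{rule:cherry} applies), that the backbone has $2c-3$ sides when there are $c$ cherries, and that a side carrying $d$ pending edges can hold at most $4+5d$ leaf-parents, since otherwise it contains a pending-edge-free subchain of length $5$ and Rule~\ref{rule:chain} applies. The only point to flag is the bookkeeping you explicitly defer: the paper reaches the constant $9$ by designating a single ``cut-leaf'' per cherry and absorbing the surviving cherry leaf into the $4+5d$ per-side count, whereas the accounting you sketch (all $2a_2$ cherry leaves counted separately from the chain leaves, each clean segment of length at most $4$, each pending edge creating at most one new segment or spoiling one leaf) works out to roughly $|S| \le 10(d_1+d_2)-12$ --- still linear, and harmless for everything downstream, but not the constant stated in the lemma.
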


%For two disjoint subsets $S_1,S_2 \subseteq \taxa$, we say $S_1$ and $S_2$ are \emph{spanning-disjoint} in $T$ if the spanning subtrees $T[S_1]$ and $T[S_2]$ are edge-disjoint. (Observe that as $T$ is binary, this also implies that $T[S_1]$ and $T[S_2]$ are vertex-disjoint.)
%Similarly, we say a collection $S_1, \dots S_m$ of subsets of $\taxa$ are \emph{spanning-disjoint} in $T$ if $S_i, S_j$ are spanning-disjoint in $T$ for any $i\neq j$.

The next
%, intuitively clear,
result implies that if we have a large enough number of conflicting quartets that are also spanning-disjoint in both $T_1$ and $T_2$, then we are done. While it is intuitively clear that such quartets can be leveraged to create a high parsimony score in one tree, some care has to be taken to keep the parsimony score low in the other tree. %TODO:relate to previous lemma?

%\todo{A little unclear at first why this is necessary / why it isn't obvious. SK: it sounds obvious, but the proof is definitely not!}
\newcommand{\combiningQuartetsText}{
Let ${\cal Q} = \{Q_1, \dots, Q_k\}$ be a set of conflicting quartets for $T_1,T_2$, such that $Q_1, \dots Q_k$ are spanning-disjoint in $T_1$ and in $T_2$.

Then $\dmp(T_1,T_2) \geq k$, and we can find a witnessing character in polynomial time.}
\begin{lemma}\label{lem:combiningQuartets}
\combiningQuartetsText
\end{lemma}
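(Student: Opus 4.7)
My plan is to exhibit an explicit witnessing character $\cha$ using $2k$ colours, one disjoint pair per quartet, and to estimate the two parsimony scores separately. For each $Q_i=\{a_i,b_i,c_i,d_i\}$, relabel so that $T_1|_{Q_i}=a_ib_i\mid c_id_i$, and set $\cha(a_i)=\cha(b_i)=2i-1$ and $\cha(c_i)=\cha(d_i)=2i$. The non-quartet taxa receive colours by running Fitch's algorithm on $T_1$ with the quartet colours fixed as input and the remaining leaves treated as free variables, then freezing whatever colours Fitch selects; this is clearly a polynomial-time construction.

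First I would show $\PS{\cha}{T_1}=2k-1$. Each pair $\{a_i,b_i\}$ lies on one side of $Q_i$'s internal edge in $T_1|_{Q_i}$ and $\{c_i,d_i\}$ on the other, so their spanning subtrees are edge-disjoint inside $T_1|_{Q_i}$; across quartets, spanning-disjointness is given by hypothesis. Hence the $2k$ colour classes are pairwise spanning-disjoint in $T_1|_{\bigcup_i Q_i}$, and Lemma~\ref{lem:PSlowerBound} pins the parsimony score on that induced subtree to exactly $2k-1$. The Fitch extension then attaches every remaining ``non-quartet piece'' of $T_1$ monochromatically to this quartet skeleton, adding no further bichromatic edges, so $\PS{\cha}{T_1}=2k-1$.

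The crux is the lower bound $\PS{\cha}{T_2}\geq 3k-1$, which must hold for every extension $\ext$ of $\cha$ regardless of how the non-quartet colours came out. Fix $\ext$ and let $F$ be its induced forest. For each $i$, because $T_2|_{Q_i}\neq T_1|_{Q_i}$, the parsimony of $T_2[Q_i]$ under $\cha|_{Q_i}$ equals $2$, so $F|_{T_2[Q_i]}$ has at least three connected components. I would then argue that at least three of them actually contain a taxon of $Q_i$: the only way this could fail is the ``$(2,2,0,\dots)$ distribution'' in which both of $\{a_i,b_i\}$ sit in one monochromatic component of colour $2i-1$ and both of $\{c_i,d_i\}$ in another of colour $2i$. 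This is impossible because, in any conflicting quartet topology, the $a_i\to b_i$ and $c_i\to d_i$ paths inside $T_2[Q_i]$ share their internal branching vertices, which would then be forced to carry both colours simultaneously. Since the intersection of two subtrees of a tree is connected, these three local components lift to three distinct $F$-components, each carrying a colour in $\{2i-1,2i\}$; the colour pairs of different quartets being disjoint, contributions do not overlap across $i$, so $F$ has at least $3k$ components and $\PS{\cha}{T_2}\geq 3k-1$.

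Combining the two bounds gives $\dmp(T_1,T_2)\geq \PS{\cha}{T_2}-\PS{\cha}{T_1}\geq k$, and $\cha$ is produced constructively in polynomial time. The step I expect to require the most care is ruling out the $(2,2,0,\dots)$ distribution uniformly, even when $T_2[Q_i]$ contains many subdivision vertices on its internal edge: one must check that the two monochromatic leaf-to-leaf paths for $\{a_i,b_i\}$ and $\{c_i,d_i\}$ always share an interior vertex in the conflicting case and therefore cannot coexist with opposite colours.
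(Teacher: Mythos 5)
Your proposal is correct, but it reaches the bound by a genuinely different route than the paper. The paper builds a single \emph{two-state} character: it picks, for each $Q_i$, one edge $e_{Q_i}$ of $T_1$ on the path separating $\{a_i,b_i\}$ from $\{c_i,d_i\}$ in $T_1[Q_i]$, flood-fills $T_1$ flipping the colour exactly at these $k$ edges (so $\PS{\cha}{T_1}\leq k$), and then lower-bounds $\PS{\cha}{T_2}\geq 2k$ using the $2k$ pairwise edge-disjoint $a_i$--$c_i$ and $b_i$--$d_i$ paths in $T_2$, each of which must contain a bichromatic edge. You instead use $2k$ states and trade the bounds up to $2k-1$ versus $3k-1$: the upper bound on $T_1$ becomes an immediate application of Lemma~\ref{lem:PSlowerBound} (your appeal to ``Fitch with free leaves'' is a little nonstandard --- it is cleaner to reuse the outward monochromatic propagation from the proof of that lemma --- but the construction is sound), and the lower bound on $T_2$ comes from counting monochromatic components, with the key step being the exclusion of the $(2,2)$ split via the fact that in either conflicting topology the $a_i$--$b_i$ and $c_i$--$d_i$ paths in $T_2$ share the two internal branching vertices of $T_2[Q_i]$; that vertex-sharing claim is exactly right and is the precise negation of the edge-disjointness defining $a_ib_i|c_id_i$. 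Each approach buys something: the paper's witness uses only two states (relevant to the restricted-states variant $\dmp^2$ discussed in the conclusion), whereas your component-counting argument never actually invokes spanning-disjointness of the quartets in $T_2$ --- the disjoint colour pairs alone keep the contributions of different quartets separate --- so it proves a slightly stronger statement, at the cost of an unbounded state set. Both constructions give $T_2$ the larger score, so both support the remark about $\damp$ in the conclusion.
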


In combination, Lemmas~\ref{lem:smallDegreeSizeBound} and~\ref{lem:combiningQuartets} allow us to show that $\dmp(T_1,T_2) \geq k$ providing that we can find at least $k$ sets $S_1, \dots S_k$ that are  spanning-disjoint in both trees and satisfy the conditions of Lemma~\ref{lem:smallDegreeSizeBound}.

We will find $k$ such sets as part of the construction of a character that witnesses $\dmp(T_1,T_2) \geq k$, for any reduced instance with $|\taxa| \geq \alpha k$.
In order to construct this character, we first create a partition of $\taxa$ into large subsets, as described by the following lemma.

\newcommand{\bigPartitionText}{
Suppose that $|\taxa| \geq 2ct$ for some integers $c$ and $t$, and let $T_1$ be a phylogenetic tree on $\taxa$.

Then in polynomial time we can construct a partition $S_1, \dots, S_t$ of $\taxa$ with $S_1, \dots, S_t$ spanning-disjoint in $T_1$,
such that $|S_i| \geq c$ for each $i$.}
\begin{lemma}\label{lem:bigPartition}
\bigPartitionText
\end{lemma}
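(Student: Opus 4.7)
The plan is to proceed by induction on $t$. The base case $t=1$ is immediate: set $S_1 = \taxa$, so $|S_1| \geq 2c \geq c$ and the single set is trivially spanning-disjoint.

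For $t \geq 2$, the strategy is to peel off a single set $S_1$ of degree $1$ in $T_1$ with $c \leq |S_1| \leq 2c-1$, and then recurse on $T_1|_{\taxa \setminus S_1}$ with parameters $c, t-1$. To find such $S_1$, root $T_1$ at an arbitrarily chosen vertex (for instance by subdividing one of its edges). For each vertex $v$, let $\ell(v)$ denote the number of leaves in the subtree rooted at $v$. Now walk downward from the root: at each visited vertex $v$, if some child $w$ of $v$ satisfies $\ell(w) \geq c$, move to such a $w$; otherwise, halt at $v$ and set $v^\ast = v$. The walk begins because $\ell(\text{root}) = |\taxa| \geq 2c$ forces at least one child of the root to have $\ell \geq c$, and an easy induction shows $\ell(v) \geq c$ throughout the walk, so $\ell(v^\ast) \geq c$. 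At the halting vertex: either $v^\ast$ is internal with both children having $\ell < c$, giving $\ell(v^\ast) \leq 2(c-1) = 2c-2$; or $v^\ast$ is a leaf (which only occurs when $c=1$), giving $\ell(v^\ast) = 1$. In either case $c \leq \ell(v^\ast) \leq 2c-1$. Taking $S_1$ to be the leaves in the pendant subtree of $v^\ast$ yields $c \leq |S_1| \leq 2c-1$, and the unique edge above $v^\ast$ witnesses that $S_1$ has degree $1$ in $T_1$.

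Since $|\taxa \setminus S_1| \geq 2ct - (2c-1) > 2c(t-1)$, the inductive hypothesis applied to $T_1|_{\taxa \setminus S_1}$ with parameters $c, t-1$ produces a partition $S_2, \ldots, S_t$ of $\taxa \setminus S_1$, with each $|S_i| \geq c$, spanning-disjoint in $T_1|_{\taxa \setminus S_1}$. To lift spanning-disjointness back to $T_1$: for $j \geq 2$, $T_1[S_j]$ lies on the $\taxa \setminus S_1$ side of the edge above $v^\ast$ while $T_1[S_1]$ lies on the other side, so they are edge-disjoint; and for distinct $j, k \geq 2$, edge-disjointness of the spanning subtrees in $T_1|_{\taxa \setminus S_1}$ transfers to edge-disjointness in $T_1$, since each edge of the induced subtree corresponds to a maximal path of suppressed degree-$2$ vertices in $T_1[\taxa \setminus S_1]$ and distinct such edges correspond to edge-disjoint paths.

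The main (minor) obstacle is verifying $|S_1| \leq 2c-1$: this upper bound is precisely the slack that makes the factor $2$ in the hypothesis $|\taxa| \geq 2ct$ suffice for the recursion to continue. Polynomial running time is immediate: each recursive step performs a linear-time walk-down followed by a linear-time edge deletion and degree-$2$ suppression, and the recursion depth is at most $t$.
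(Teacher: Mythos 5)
Your proof is correct and follows essentially the same route as the paper: root $T_1$ by subdividing an edge, locate a lowest vertex with at least $c$ leaf-descendants (your walk-down is just an explicit way of finding it), peel off its $\leq 2c-1$ descendants as one block, and recurse on the induced subtree on the remaining $\geq 2c(t-1)$ taxa. The only difference is cosmetic: you spell out the transfer of spanning-disjointness from $T_1|_{\taxa\setminus S_1}$ back to $T_1$, which the paper leaves as ``clear by construction.''
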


We note that there is a one-to-one correspondence between partitions and characters on $\taxa$, in the following sense.
Given a partition $S_1,\dots S_t$ of $\taxa$, we may define a character $\chi:\taxa \rightarrow[t]$ such that $\chi(x) = i$ if $x \in S_i$, for each $i \in [t]$. Call such a character the character \emph{defined} by $S_1,\dots S_t$.
%Conversely, given a character $\chi$ on $\taxa$, we may define the partition of $\taxa$ \emph{defined} by $\cha$ to be the partition  $S_1,\dots S_t$ such that two elements $x,y\in \taxa$ appear in the same $S_i$ if and only if $\cha(x) = \cha(y)$.
%\todo{LS: but then $S_1, \dpts, S_t$ are not necessarily disjoint. [[presumably in T... We don't need that kind of disjointness here but maybe we need to make it clearer / define this thing earlier?]}

Thus let us consider the character $\chi$ on $\taxa$ defined by the partition described by Lemma~\ref{lem:bigPartition}. Since $S_1,\dots S_t$ are spanning-disjoint in $T_1$, Lemma~\ref{lem:PSlowerBound} tells that the parsimony score of $T_1$ with respect to $\cha$ is exactly $t-1$.

\newcommand{\wellBehavedSetsText}{Let $\cha$ be the character defined by the partition $S_1, \dots, S_t$ where $S_1, \dots, S_t$ are spanning-disjoint in $T_1$,
and assume 
%\todo{LS: I have omitted the use of $f(d_1,d_2,k)$, since it was used only once!}
$$t \geq 
\lceil\frac{(2d_1d_2+d_1)}{d_1d_2 -d_1-d_2}\rceil k$$ 
Then either 
%$\cha$ is a witness to $\dmp(T_1,T_2) \geq k$, 
${\dmp}_{\cha}(T_1,T_2) \geq k$,
or in polynomial time we can find a set of indices $i_1, \dots i_{k'}$ with $k' \geq k$ such that:
\begin{itemize}
    \item  $S_{i_1}, \dots S_{i_{k'}}$ are spanning-disjoint in $T_2$ (as well as $T_1$);
    \item each $S_{i_j}$ has degree at most $d_1$ in $T_1$; and
    \item each $S_{i_j}$ has degree at most $d_2$ in $T_2$.
\end{itemize}}
\begin{lemma}\label{lem:wellBehavedSets}
\wellBehavedSetsText
\end{lemma}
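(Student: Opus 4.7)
The plan is to prove the conclusion under the assumption ${\dmp}_{\cha}(T_1, T_2) < k$; if instead ${\dmp}_{\cha}(T_1, T_2) \geq k$ we are done by the first alternative. Since $S_1, \ldots, S_t$ is a spanning-disjoint partition of $\taxa$ in $T_1$, Lemma~\ref{lem:PSlowerBound} gives $\PS{\cha}{T_1} = t - 1$, and combined with the trivial lower bound $\PS{\cha}{T_2} \geq t - 1$ from the same lemma, the hypothesis forces $L := \PS{\cha}{T_2} \leq t + k - 2$. Compute an optimal extension $\ext$ of $\cha$ to $T_2$ via Fitch's algorithm, and let $F$ be the induced forest; writing $c_i$ for the number of colour-$i$ components, we have $c_i \geq 1$ and $\sum_i (c_i - 1) = (L + 1) - t \leq k - 1$. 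Set $I := \{i : c_i = 1\}$, so that $|I| \geq 2t - L - 1 \geq t - k + 1$. For $i \in I$, let $C_i$ denote the unique colour-$i$ component; then $T_2[S_i] \subseteq C_i$, the $C_i$'s are pairwise vertex-disjoint, and therefore $\{T_2[S_i] : i \in I\}$ is spanning-disjoint in $T_2$ (it is already spanning-disjoint in $T_1$ by hypothesis).

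The next step is to bound the number of ``bad'' indices in each tree. For $T_1$: contracting each $T_1[S_i]$ to a super-vertex yields a tree in which every remaining vertex has degree exactly three, and a sum-of-degrees calculation gives $\sum_{i \in [t]} d(S_i, T_1) \leq 2t - 2$; since $d(S_i, T_1) \geq 1$ for each $i$, Markov's inequality applied to $d(S_i, T_1) - 1 \geq d_1$ bounds $A_1 := \{i : d(S_i, T_1) > d_1\}$ by $|A_1| \leq (t - 2)/d_1$. For $T_2$: the heart of the argument is the structural inequality $d(S_i, T_2) \leq b(C_i)$ for each $i \in I$, where $b(C_i)$ counts the bichromatic edges of $T_2$ incident to $C_i$. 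Each pending edge of $T_2[S_i]$ is either bichromatic (exiting $C_i$) or monochromatic, in which case it leads into a colour-$i$ Steiner ``attachment'' subtree inside $C_i \setminus V(T_2[S_i])$; a local degree count using that every attachment vertex is internal of degree three in $T_2$ shows each such attachment is glued to $T_2[S_i]$ by a single edge yet contributes at least two bichromatic edges to $b(C_i)$, dominating its single monochromatic pending contribution. Contracting each component of $F$ to a single vertex produces a super-tree on $L + 1$ vertices and $L$ edges in which every non-$I$ super-vertex has degree at least $1$; summing gives $\sum_{i \in I} b(C_i) \leq 2L - (L + 1 - |I|) = L + |I| - 1$. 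Combining with the structural inequality and applying the same Markov trick (using $d(S_i, T_2) \geq 1$) yields $|A_2| \leq (L - 1)/d_2 \leq (t + k - 3)/d_2$ for $A_2 := \{i \in I : d(S_i, T_2) > d_2\}$.

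Output the indices of $I \setminus (A_1 \cup A_2)$: we have
$$ |I \setminus (A_1 \cup A_2)| \geq (t - k + 1) - (t - 2)/d_1 - (t + k - 3)/d_2, $$
and a short algebraic manipulation shows the hypothesis $t \geq \lceil (2 d_1 d_2 + d_1)/(d_1 d_2 - d_1 - d_2) \rceil k$ is (with additive slack to spare) exactly what is needed to force this quantity to be at least $k$. Each step of the construction, Fitch's algorithm, the enumeration of degrees, the computation of $I$, $A_1$, $A_2$, and the final selection, runs in polynomial time. The principal obstacle is the structural inequality $d(S_i, T_2) \leq b(C_i)$, which requires the delicate local count of two bichromatic boundary edges per monochromatic attachment inside a binary tree; the rest is essentially standard tree-counting and Markov's inequality.
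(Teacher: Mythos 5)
Your proof is correct, and its skeleton is the same as the paper's: assume $\cha$ is not already a witness, so that $\PS{\cha}{T_2}\le t+k-2$; use the monochromatic components of an optimal extension of $\cha$ to $T_2$ to identify the colours whose class lies in a single component (hence is spanning-disjoint in $T_2$ from the others), losing at most $k-1$ indices; then discard the high-degree classes in each tree by a contract-and-average degree count. The one step where you genuinely diverge is the $T_2$-degree bound. You bound $d(S_i,T_2)$ by the number $b(C_i)$ of bichromatic edges incident to the whole component $C_i\supseteq T_2[S_i]$, which is not automatic (a subtree can have a larger boundary than a supertree containing it) and forces your ``attachment'' argument exploiting binarity and the absence of taxa in $C_i\setminus V(T_2[S_i])$. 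The paper sidesteps this entirely: when $c_i=1$ the class $S_i$ \emph{equals} the taxon set $P_j$ of its component, so it simply contracts the spanning trees $T_2[P_j]$ of all components' taxon sets (merging leftover internal vertices into neighbours) and applies the same degree-count observation to the resulting $s$-vertex tree, getting at most $s/d_2$ bad indices with no structural lemma needed. Your route buys marginally sharper constants ($(t-2)/d_1$ and $(L-1)/d_2$ versus $t/d_1$ and $s/d_2$) at the cost of the more delicate local count; either way the final algebra, which you only sketch, does go through, since your lower bound on the number of surviving indices dominates the paper's, which is already strictly greater than $k$ under the stated hypothesis on $t$.
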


We will prove Theorem~\ref{thm:kernelBound} by combining these results in the following way. Fix integers $d_1,d_2$ to be determined later.
Assume $(T_1,T_2)$ is irreducible under Reduction Rules~\ref{rule:cherry} and~\ref{rule:chain},
and assume that $|\taxa| \geq 2ct$, where $c = 9(d_1+d_2)-11$ and $t \geq \lceil\frac{(2d_1d_2+d_1)}{d_1d_2 -d_1-d_2}\rceil k$ 
%\todo{LS: Should this not be $t\geq ...$ as written in the lemma}
(this holds if $|\taxa| \geq \alpha k$).
By Lemma~\ref{lem:bigPartition}, there exists a partition $S_1, \dots S_t$ of $\taxa$ with $S_1, \dots S_t$ spanning-disjoint in $T_1$ and $|S_i| \geq c$ for each $i\in [t]$.
Let $\cha$ be the character defined by this partition. 
If ${\dmp}_{\cha}(T_1,T_2) \geq k$, we may return $\cha$.
%If $\cha$ is a witness for $\dmp(T_1,T_2)\geq k$ we may return it. 
Otherwise, we may apply Lemma~\ref{lem:wellBehavedSets} to get a set of indices  $i_1, \dots i_k$ such that  $S_{i_1}, \dots S_{i_k}$ are spanning-disjoint in $T_2$ (as well as in $T_1$),  each $S_{i_j}$ has degree at most $d_1$ in $T_1$, and  each $S_{i_j}$ has degree at most $d_2$ in $T_2$.
But then each $S_{i_j}$ satisfies the conditions of Lemma~\ref{lem:smallDegreeSizeBound}, and therefore for each $j \in [k]$ there exists  a conflicting quartet $Q_j \subseteq S_{i_j}$.
Moreover, as $S_{i_1}, \dots S_{i_k}$ are spanning-disjoint in $T_1$ and $T_2$, the quartets $Q_1, \dots Q_k$ are also spanning-disjoint in  $T_1$ and $T_2$.
Then Lemma~\ref{lem:combiningQuartets} implies that $\dmp(T_1,T_2) \geq k$.

By setting $d_1 = 4$ and $d_2 = 5$, we get that $\alpha = 560$, giving the desired bound.

In the next subsections we prove each of these lemmas, and then the main theorem, in turn.

\subsection{Reduction Rules}

We begin by stating the reduction rules for our kernelization result.

\begin{krule}\label{rule:cherry}[Cherry reduction rule]
If there exist $x,y \in \taxa$ such that in each of $T_1,T_2$ there exists an internal vertex $u$ adjacent to both $x$ and $y$, then replace $(T_1, T_2)$ with $(T_1|_{\taxa\setminus\{x\}}, T_2|_{\taxa\setminus\{x\}})$.
\end{krule}

\begin{krule}\label{rule:chain}[Chain reduction rule]
Suppose that there exists a sequence of leaves $x_1, \dots x_r \in \taxa$ with $r \geq 5$, such that in both $T_1$ and $T_2$, there exists a path of internal vertices $p_1,\dots, p_r$ (possibly with $p_1=p_2$ and possibly with $p_{r-1}=p_r$\steven{)}, such that for each $i \in [r]$ $p_i$ is the internal vertex adjacent to $x_i$.
Then replace $(T_1,T_2)$ with $(T_1|_{\taxa\setminus\{x_5, \dots, x_r\}}, T_2|_{\taxa\setminus\{x_5, \dots x_r\}})$
(thus, the common chain is reduced to length 4).
\end{krule}

The correctness of these rules was previously proved in~\cite{kelk2016reduction}. 
\begin{theorem}\label{thm:dmpReduction}
Let $(T_1',T_2')$ be an instance of \problemName derived from $(T_1, T_2)$ by an application of Reduction Rules~\ref{rule:cherry} or~\ref{rule:chain}.
Then $\dmp(T_1', T_2') = \dmp(T_1, T_2)$.
\end{theorem}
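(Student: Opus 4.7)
The plan is to prove each of Reduction Rules~\ref{rule:cherry} and~\ref{rule:chain} preserves $\dmp$ separately, in each case by establishing a tight correspondence between parsimony scores on the original and reduced trees, with a ``correction term'' that is identical for $T_1$ and $T_2$ and therefore cancels in the difference $|\PS{\chi}{T_1} - \PS{\chi}{T_2}|$.

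For Reduction Rule~\ref{rule:cherry}, I would first establish the key technical lemma: if $x,y \in \taxa$ are siblings at an internal vertex $u$ in a phylogenetic tree $T$, then for any character $\chi$ on $\taxa$ with restriction $\chi' := \chi|_{\taxa \setminus \{x\}}$,
\[
\PS{\chi}{T} \;=\; \PS{\chi'}{T|_{\taxa \setminus \{x\}}} + \delta_\chi,
\]
where $\delta_\chi = 1$ if $\chi(x) \neq \chi(y)$ and $0$ otherwise. The ``$\leq$'' direction lifts any extension $\phi'$ of $\chi'$ on the reduced tree to an extension $\phi$ of $\chi$ on $T$ by setting $\phi(u) := \chi(y)$: edge $uy$ is monochromatic under $\phi$, edge $ux$ is bichromatic iff $\chi(x) \neq \chi(y)$, and the edge from $u$ to its third neighbor inherits the bichromatic status that the suppressed edge had in the reduced tree. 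The ``$\geq$'' direction argues that every optimal extension of $\chi$ on $T$ can be assumed to satisfy $\phi(u) \in \{\chi(x), \chi(y)\}$, and then restricts. Because $\delta_\chi$ is tree-independent, applying the lemma to both $T_1$ and $T_2$ gives ${\dmp}_\chi(T_1,T_2) = {\dmp}_{\chi'}(T_1',T_2')$. Combining this with the trivial bijection between arbitrary characters on $\taxa \setminus \{x\}$ and characters on $\taxa$ satisfying $\chi(x) = \chi(y)$ yields $\dmp(T_1,T_2) = \dmp(T_1',T_2')$.

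Reduction Rule~\ref{rule:chain} is handled by a structurally analogous but substantially more delicate decomposition. Here the claim I would pursue is that the parsimony contribution of the chain --- the edges among $p_1,\dots,p_r$ together with the pendants $p_i x_i$ --- depends only on the chain character $(\chi(x_1),\dots,\chi(x_r))$ and on the Fitch-type boundary constraints that the remainder of the tree imposes at $p_1$ and $p_r$. Since the chain structure and attachment are identical in $T_1$ and $T_2$, these boundary constraints behave identically with respect to the chain in both trees, so the chain contribution differs across the two trees only by a quantity determined by the chain character and boundary states. I would then show, via a Fitch-style dynamic programming analysis along the chain, that for $r \geq 5$ any character on the chain can be replaced by a character on a length-$4$ chain yielding the same chain contribution (as a function of the boundary states) up to a tree-independent additive constant, which cancels in the difference.

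The main obstacle is precisely this compression step. The parsimony contribution of a chain is not additive in its length: long monochromatic runs can absorb state transitions imposed at the endpoints at no extra cost, so the savings available through the chain depend intricately on the pattern of chain colors. A careful case analysis on the maximal monochromatic runs of the chain character, combined with an exchange argument showing that any character witnessing $\dmp$ may be modified to concentrate all ``useful'' state changes within the first four chain positions without decreasing ${\dmp}_\chi(T_1,T_2)$, is what I anticipate will carry the technical weight of the proof.
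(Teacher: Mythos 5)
The paper itself does not prove this theorem: it cites Theorem~3.1 of \cite{kelk2016reduction} for the chain rule and Theorem~4.1 there for the cherry rule, so your attempt must stand on its own. It has a genuine gap, and it is in the cherry case, which you treat as the routine one. Your key lemma, $\PS{\chi}{T} = \PS{\chi'}{T|_{\taxa\setminus\{x\}}} + \delta_\chi$ with $\delta_\chi = 1$ iff $\chi(x)\neq\chi(y)$, is false. Counterexample: let $T$ be the quartet tree $xy|zv$ with internal vertices $u$ (adjacent to $x,y$) and $u'$ (adjacent to $z,v$), and let $\chi(x)=1$, $\chi(y)=2$, $\chi(z)=\chi(v)=1$. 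Colouring both internal vertices $1$ gives $\PS{\chi}{T}=1$ (only the edge $uy$ is bichromatic), while the reduced tree is the star on $\{y,z,v\}$ with $\PS{\chi'}{T'}=1$; the difference is $0$, not $\delta_\chi=1$. The structural reason is that when $\chi(x)\neq\chi(y)$ the cost profile of the cherry at its attachment point is ``cost $1$ on each of $\chi(x),\chi(y)$, cost $2$ elsewhere,'' which is not the profile of the single remaining leaf $y$ shifted by a constant: whether the extra unit is actually paid depends on which colour the rest of the tree prefers at the attachment point, and that can differ between $T_1$ and $T_2$. So the correction term is not tree-independent, the cancellation in $|\PS{\chi}{T_1}-\PS{\chi}{T_2}|$ fails, and your argument does not establish the hard inequality $\dmp(T_1,T_2)\le\dmp(T_1',T_2')$. (Your ``$\geq$'' step is exactly where this surfaces: if the optimal extension has $\phi(u)=\chi(x)$, the edge $ux$ is free and restriction loses nothing, so you only get $\PS{\chi}{T}\ge\PS{\chi'}{T'}$ with no $+1$.) The standard repair --- and what the cited Theorem~4.1 actually does --- is not a per-character score decomposition but a character-modification argument: show that a maximizing character can be altered so that $\chi(x)=\chi(y)$ without decreasing ${\dmp}_\chi$, after which the reduction is transparent. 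Your ``trivial bijection'' only covers the easy direction $\dmp(T_1',T_2')\le\dmp(T_1,T_2)$.

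For the chain rule you have correctly identified the crux (the chain's parsimony contribution is not additive in its length, and an exchange argument concentrating the useful colour changes into four positions is needed), but you have not supplied it: the compression step you defer is precisely the content of Theorem~3.1 of \cite{kelk2016reduction} and carries essentially all of the difficulty. Note also that, for the same reason as in the cherry case, a statement preserving ${\dmp}_\chi$ character by character cannot hold here either, so the exchange argument must be carried out at the level of a maximizing character rather than for each character separately.
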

Correctness of the chain reduction rule follows from Theorem 3.1 in~\cite{kelk2016reduction}. Correctness of the cherry reduction rule follows as a subcase of Theorem 4.1 in~\cite{kelk2016reduction} (in particular, the cherry reduction is an instance of the ``traditional" case of the generalized subtree reduction from~\cite{kelk2016reduction}, where the subtree has $2$ leaves).

Our main contribution is to show that if an instance is reduced by these rules then its size is bounded by a linear function of $\dmp$.

%The reduction rules here are the standard "common subtree" and "common chain" reduction rules often used for other problems in phylogenetics, with the difference that comon chains are only reduced to length $4$, rather than length $3$ as in other applications.
%We also note that the authors of~\cite{kelk2016reduction} proved a stronger version of the standard subtree reduction rule for \problemName, in which the common pendant subtrees may be attached to the rest of the tree at different points. As we do not require this rule for the purposes of our kernel (and it is convenient to have a set of reduction rules that are also known to preserve $\textsc{tbr}$), we do not use this stronger version of the rule.
%%%\todo{Change this slightly, because I'm prob gonna talk about cherry reduction rather than subtree reduction}

%The correctness of these results is proved in~\cite{KELK20161}.

%Note that the rules are not only safe  (in the sense that they preserve {\sc Yes}-instances and {\sc No}-instances\todo{(well these terms are undefined currently. do we even want to keep this line?}), they also preserve $\dmp$. This will allow us to derive an approximation result from the kernel in Section[??????].

%- Reduction rules [do we include the generalized version of tree reduction?], cite correctness from other paper

%- lemma about reconstructing assignments, for sake of having a constructive alg later.

\subsection{Small degree sets}

%For a set $S \subseteq \taxa$ and tree $T$ on $\taxa$, 
%an edge $uv$ in $T$ is a \emph{pendant edge} of $T|_S$ if $u$ is in the spanning tree of $S$ and $v$ is not.
%The \emph{degree} of $S$ in $T$ is the number pendant edges of $T|_S$ in $T$.

%Let $T[S]$ denote the spanning subtree of $S$ in $T$ (i.e. the one with degree $2$ vertices).

In this section we prove Lemma~\ref{lem:smallDegreeSizeBound}.

\newtheorem*{BRsmallDegreeSizeBound}{Lemma~\ref{lem:smallDegreeSizeBound}}
\begin{BRsmallDegreeSizeBound}
\smallDegreeSizeBoundText
\end{BRsmallDegreeSizeBound}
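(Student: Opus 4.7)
Binary phylogenetic trees on a common leaf set are determined by their induced quartets, so there exists a quartet $Q \subseteq S$ with $T_1|_S$ and $T_2|_S$ disagreeing on $Q$. Since restriction commutes ($T_i|_S|_Q = T_i|_Q$), this $Q$ is a conflicting quartet contained in $S$, and it can be found in polynomial time by enumerating the $\binom{|S|}{4}$ quartets in $S$ and comparing their induced topologies in $T_1$ and $T_2$.

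\textbf{Case 2: $T_1|_S = T_2|_S =: T$.} Assuming $(T_1,T_2)$ is irreducible under Rules~\ref{rule:cherry} and~\ref{rule:chain}, the plan is to bound $|S|$. Call an edge $e$ of $T$ \emph{bad in $T_i$} if the path of $T_i[S]$ corresponding to $e$ contains a degree-$2$ vertex, equivalently if a pending edge of $S$ in $T_i$ hangs off an interior vertex of that path; there are at most $d_i$ such edges. Call the (at most $b \leq d_1+d_2$) edges bad in $T_1$ or $T_2$ the \emph{bad} edges of $T$, and the remainder \emph{doubly clean}. The key lifting observation I would exploit is: a cherry of $T$ whose two incident edges are both doubly clean is a cherry in \emph{both} $T_1$ and $T_2$ (triggering Rule~\ref{rule:cherry}), and a length-$5$ chain configuration in $T$ whose relevant leaf- and path-edges are all doubly clean lifts to chains in both $T_1$ and $T_2$ (triggering Rule~\ref{rule:chain}). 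By irreducibility, every cherry of $T$ therefore contains a bad incident edge, and no length-$5$ chain configuration of $T$ is entirely doubly clean.

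\textbf{Combinatorial bound.} Partition the internal vertices of $T$ into $C$ cherry-vertices, $H$ chain-vertices (adjacent to exactly one leaf), and $R$ branch-vertices; then $|S| = 2C + H$ and, by the standard identity for binary trees, $C = R + 2$. Each cherry uses a distinct bad cherry-leaf edge, so $C \leq b$. Within each maximal chain of $H_e$ chain-vertices, the $2H_e - 1$ alternating leaf/internal chain-edges form a linear sequence in which every $9$-edge window (corresponding to a length-$5$ within-chain sub-chain) must contain a bad edge; since each bad edge lies in at most $5$ such windows, a sliding-window counting argument gives $H_e \leq 5 b_e + 4$, where $b_e$ counts bad edges local to that chain. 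Summing over the $M = 2C-3$ maximal chains and exploiting the disjointness of bad cherry-leaf and bad chain-local edges yields a linear bound of the form $|S| \leq \alpha b + \beta \leq \alpha(d_1+d_2) + \beta$. I expect the main obstacle to be the careful bookkeeping needed to reach exactly the constant $9(d_1+d_2)-12$: one must handle cherry-ended length-$5$ configurations (automatically blocked by the cherry's mandatory bad edge, so they contribute no extra constraint), boundary edges at chain ends, and --- more subtly --- the fact that a pending-edge endpoint in $T_i$ can itself form a new cherry or chain structure in $T_1, T_2$ \emph{outside} of $T$ (e.g.\ a new cherry pairing an $S$-leaf with the taxon at the far end of the pending edge, or a new chain-vertex inserted into a maximal chain of $T$), which irreducibility also excludes and which tightens the counting. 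The required polynomial-time construction of a conflicting quartet when $|S| \geq 9(d_1+d_2) - 11$ then follows from the constructive nature of the arguments above.
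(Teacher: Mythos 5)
Your overall strategy is the same as the paper's: dispose of the case $T_1|_S \neq T_2|_S$ via the quartet characterization, and otherwise decompose the backbone of $T := T_1|_S = T_2|_S$ into cherry parents, degree-2 (chain) vertices and branch vertices, argue that irreducibility forces a pending edge onto every cherry and onto every window of five consecutive chain vertices, and count. Your identities $|S| = 2C+H$, $C = R+2$, $M = 2C-3$ and the window bound $H_e \le 5b_e+4$ correspond exactly to the paper's accounting of $P_C$, $P_L$ and the ``sides'' of the backbone, and your bound $b \le d_1+d_2$ on bad edges is correct.

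However, as sketched your accounting does not reach the stated constant. Assembling your pieces gives $H \le 5(b-C) + 4(2C-3) = 5b + 3C - 12$, hence $|S| = 2C + H \le 5C + 5b - 12 \le 10(d_1+d_2) - 12$, which is one multiple of $(d_1+d_2)$ too weak to prove the lemma as stated (and would propagate to a larger $\alpha$ downstream). The missing idea is precisely the one you wave away: the paper \emph{spends} each cherry's mandatory pending edge to delete one of that cherry's two leaves, after which the surviving cherry parent is the parent of a single leaf and is absorbed into the side it terminates; the per-side budget ``at most $4 + 5p_s$ nodes'' then counts the $P_C$ endpoints together with the $P_L$ interior. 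Each cherry thus costs only $1$ extra leaf rather than $2$, turning your $5C$ into $4|P_C|$ and yielding $|S| \le 4|P_C| + 5(d_1+d_2) - 12 \le 9(d_1+d_2)-12$. Your remark that cherry-ended length-5 configurations ``contribute no extra constraint'' points the wrong way: the allowance $p_1 = p_2$ (and $p_{r-1}=p_r$) in Rule~\ref{rule:chain} is exactly what lets cherry endpoints participate in the chain windows, and is where the saved $(d_1+d_2)$ comes from. (Your concern about pending edges creating new cherries or chains \emph{outside} $T[S]$ is, by contrast, irrelevant: the argument only needs the implication that a clean structure inside $T$ forces a rule to apply.)
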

%\begin{lemma}\label{lem:mp0sizebound}
% Suppose that $(T_1,T_2)$ is reduced with respect to Rules~\ref{rule:cherry} and~\ref{rule:chain}.
% Let $d_1$ denote the degree of $S$ in $T_1$ and let $d_2$ denote the degree of $S$ in $T_2$.
% If $|S| > 9(d_1+d_2) - 12$, then $\damp(T_2|_S,T_1|_S) \geq 1$.
%\end{lemma}

%\todo{I propose a different proof, but left the old one below}
\begin{proof}
Since unrooted binary trees are characterized by their quartets \cite[Theorem 6.3.5(iii)]{SempleSteel2003} the last statement of the theorem follows directly. 

We will show that if $T_1|_S = T_2|S$ and neither of the reduction rules applies to $(T_1,T_2)$, then $|S| \leq 9(d_1+d_2)-12$. This implies the main claim of the lemma.
Let us denote $T|_S=T_1|_S=T_2|_S$.

Consider the \emph{backbone} graph of $T|_S$ obtained by deleting all leaves. Let $P_C$ be the set of nodes having degree 1 on the backbone, which we refer to as \emph{parents} of a cherry in $T|_S$. Let $P_L$ be the set of nodes having degree 2 on the backbone, which we refer to as \emph{parents} of a leaf of $T|_S$. All remaining vertices on the backbone have degree 3. 
%\todo{LS: I deleted $Q$ here, because we use $Q$ for quartets and in this proof we never referred back to $Q$} 
Thus $|S|$, the total number of leaves of $T|_S$ is $2|P_C|+|P_L|$. We call the path between any two odd degree vertices on the backbone, having internal nodes only in $P_L$, a \emph{side} of the backbone. 

First notice that for each cherry in $T|_S$, there must exist in $T_1[S]$, the spanning tree on $S$ in $T_1$, or in $T_2[S]$ a node, incident to a pending edge, between at least one of its two leaves and its corresponding node in $P_C$. Otherwise Reduction Rule 1 can be applied. In particular this implies that $|P_C|\leq d_1+d_2$. 

Thus at least $P_C$ of the $d_1+d_2$ pending edges must be used for ``cutting'' the cherries, each of them cutting 1 leaf of a cherry. Let us choose one such leaf from each cherry, and call these the \emph{cut-leaves}.

%Disregarding for the time being these cut cherry-leaves, 
After removing cut-leaves,
every node in $P_C$ and $P_L$ is now the parent of 1 leaf in $T|_S$. Every side of the backbone contains at most 4 vertices in $P_C$ and $P_L$, unless $T_1[S]$ or $T_2[S]$ has a node of a pending edge or a node adjacent to a node of a pending edge on that side. 
We show
that every such pending edge on a side may increase the number of $P_L$-nodes on that side by at most $5$ (see Figure~\ref{fig:backboneExample}). 
Indeed, suppose a side of the backbone has in total $d$ pending edges in both $T_1$ and $T_2$, but more than $4 + 5d$ nodes in $P_L$, i.e. at least $5(d+1)$.
Then $T|_{S}$ contains a chain of length $5(d+1)$, which we can split up into $d+1$ chains of length $5$. Clearly at least one of these chains has no pending edge in either $T_1$ or $T_2$, and so $T_1,T_2$ have a common chain of length $5$, a contradiction.

Thus the total number of nodes from $P_C$ and $P_L$ on a side is at most five times the number of pending edges (in $T_1[S]$ or $T_2[S]$) on that side, plus $4$.
%$+1$ times 4. 
Otherwise Reduction Rule 2 can be applied. Given that we already used $|P_C|$ pending edges for cutting the cherries, we have $d_1+d_2-|P_C|$ pending edges left to be distributed over the sides. 

The number of sides on the backbone is the number of edges in an unrooted binary tree with $|P_C|$ leaves, which is $2|P_C|-3$. Therefore the total number of leaves of $T|_S$ is \begin{eqnarray*}|S|=2|P_C|+|P_L| & \leq & |P_C| + 4(2|P_C|-3)+5(d_1+d_2-|P_C|) \\
& \leq & 4|P_C|+5(d_1+d_2)-12.
\end{eqnarray*}
Clearly, this attains its largest value if $|P_C|=d_1+d_2$, in which case $|S|\leq 9(d_1+d_2)-12$, as was to be proven. 
\end{proof}

\begin{figure}
   %\begin{subfigure}[b]{0.3\textwidth}
    %\includegraphics{figures/constructionT1.eps}
    %\caption{$T_1|S$}
    %\end{subfigure}
    \begin{subfigure}[b]{0.9\textwidth}
    \includegraphics[width=\textwidth]{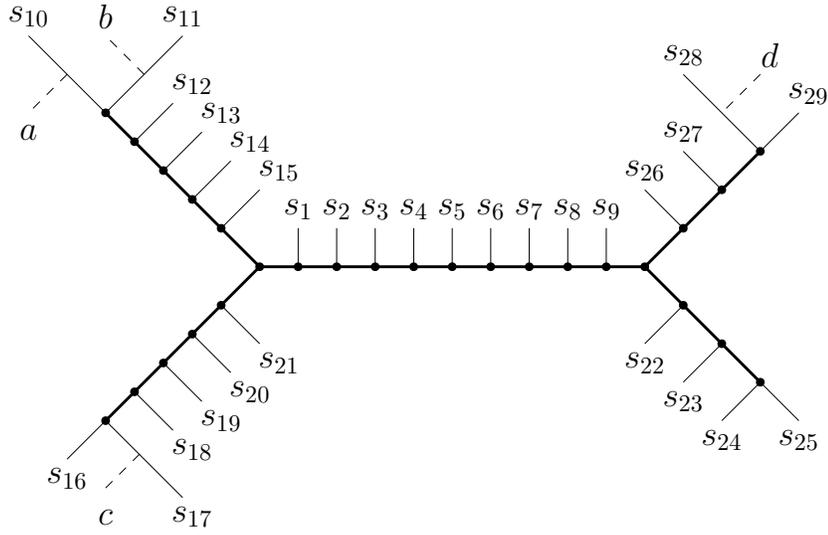}
    \caption{Backbone of $T_1|_S$ within $T_1$.}
    \end{subfigure}
    \begin{subfigure}[b]{0.9\textwidth}
    \includegraphics[width=\textwidth]{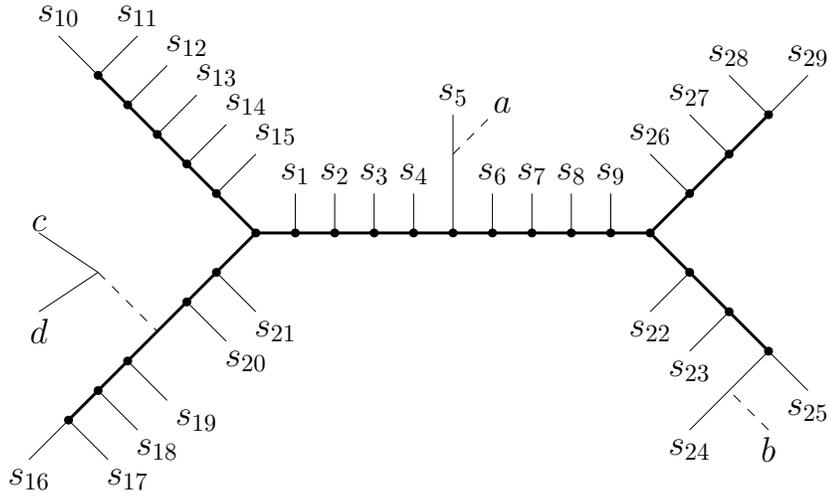}
    \caption{Backbone of $T_2|_S$ within $T_2$}
    \end{subfigure}
    \caption{Example illustration of the backbone of $T_1|_S = T_2|_S$ within $T_1$ and $T_2$, where $S = \{s_1, \dots, s_{29}\}$. Edges and vertices of the backbone are in bold. 
    Observe that $T[S]$ has the chain $s_1, \dots, s_9$, but $(T_1,T_2)$ do not have a common chain of length greater than $4$, as the leaf $s_5$ has a sibling $a$ in $T_2$.
    } 
    \label{fig:backboneExample}
\end{figure}

\subsection{Combining conflicting quartets}

In this section we prove Lemma~\ref{lem:combiningQuartets}.

\newtheorem*{BRcombiningQuartets}{Lemma~\ref{lem:combiningQuartets}}

\begin{BRcombiningQuartets}
\combiningQuartetsText
\end{BRcombiningQuartets}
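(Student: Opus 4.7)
The plan is to exhibit a character $\chi$ with ${\dmp}_{\chi}(T_1,T_2) \geq k$ by pinning down $\PS{\chi}{T_1}$ exactly and lower-bounding $\PS{\chi}{T_2}$. For each $Q_i$, because $T_1|_{Q_i} \neq T_2|_{Q_i}$, there is a unique bipartition of $Q_i$ into two pairs $A_i, B_i$ with $T_1|_{Q_i} = A_i|B_i$, and hence $T_2|_{Q_i} \neq A_i|B_i$. Writing $R = \taxa \setminus \bigcup_i Q_i$, I define $\chi$ using $2k + |R|$ colours: each $A_i$ receives colour $2i-1$, each $B_i$ receives colour $2i$, and each $x \in R$ receives its own private colour. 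Reading off the bipartitions from the $T_1|_{Q_i}$ and assembling $\chi$ is clearly polynomial.

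For $\PS{\chi}{T_1}$ I apply Lemma~\ref{lem:PSlowerBound} after verifying that the $2k + |R|$ colour classes are pairwise spanning-disjoint in $T_1$. The pairs $A_i, B_i$ are spanning-disjoint inside $T_1[Q_i]$ because the topology $T_1|_{Q_i} = A_i|B_i$ places $T_1[A_i]$ and $T_1[B_i]$ on opposite cherries of the induced quartet; across different indices, the hypothesis that $Q_1,\dots,Q_k$ are spanning-disjoint in $T_1$ makes $T_1[Q_i]$ and $T_1[Q_j]$ edge-disjoint, hence so are their subsets. The singleton classes $\{x\}$ for $x \in R$ are spanning-disjoint from every other class, since the spanning subtree of a single leaf is the leaf itself and that leaf lies in no $T_1[Q_j]$. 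Hence $\PS{\chi}{T_1} = 2k + |R| - 1$.

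The technical heart of the proof, and what I expect to be the main obstacle, is the matching lower bound $\PS{\chi}{T_2} \geq 3k + |R| - 1$. Let $\phi_2$ be an optimal extension of $\chi$ to $T_2$ and let $F_2$ be the forest obtained by deleting its bichromatic edges, so that $\PS{\chi}{T_2}$ equals one less than the number of connected components of $F_2$. A baseline count gives at least $2k + |R|$ components, one per colour used. I then show that each quartet $Q_i$ forces one further component, by establishing the following claim: writing $A_i = \{a,b\}$ and $B_i = \{c,d\}$, either $a$ and $b$ sit in two different $F_2$-components (so colour $2i-1$ occupies at least two components) or $c$ and $d$ do. Suppose on the contrary that $a, b$ lie in a single $F_2$-component and so do $c, d$. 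Then the unique $T_2$-paths from $a$ to $b$ and from $c$ to $d$ are monochromatic in colours $2i-1$ and $2i$ respectively, and since $T_2[Q_i]$ is a connected subgraph of the tree $T_2$ containing all four leaves, both paths lie entirely inside $T_2[Q_i]$. Because $T_2|_{Q_i} \neq A_i|B_i$, in the suppressed quartet $T_2|_{Q_i}$ both paths traverse \emph{both} internal vertices, forcing those two vertices to carry colours $2i-1$ and $2i$ simultaneously, a contradiction. The extra component thus produced by $Q_i$ carries a colour private to $Q_i$, so extras from distinct quartets cannot merge, and $F_2$ has at least $3k + |R|$ components.

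Combining gives ${\dmp}_{\chi}(T_1,T_2) \geq (3k + |R| - 1) - (2k + |R| - 1) = k$, so $\dmp(T_1, T_2) \geq k$ with $\chi$ as the polynomial-time constructible witness.
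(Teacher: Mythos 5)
Your proof is correct, but it takes a genuinely different route from the paper's. The paper constructs a \emph{two-state} character: for each $Q_i$ it picks an edge $e_{Q_i}$ of $T_1$ separating $A_i=\{a_i,b_i\}$ from $B_i=\{c_i,d_i\}$, propagates a red/blue colouring over $V(T_1)$ that flips exactly at these $k$ edges, and restricts to the leaves; this gives $\PS{\cha}{T_1}\leq k$ immediately, and $\PS{\cha}{T_2}\geq 2k$ because (writing $T_2|_{Q_i}=a_ic_i|b_id_i$) the $2k$ paths $a_i$--$c_i$ and $b_i$--$d_i$ are pairwise edge-disjoint in $T_2$ and each joins differently coloured leaves. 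You instead use $2k+|R|$ states, one per quartet half plus private colours for the leftover taxa, pin down $\PS{\cha}{T_1}=2k+|R|-1$ exactly via Lemma~\ref{lem:PSlowerBound}, and lower-bound $\PS{\cha}{T_2}$ by counting monochromatic components of the induced forest; your key observation --- that in any quartet topology other than $A_i|B_i$ the $a$--$b$ and $c$--$d$ paths both traverse both degree-three vertices of $T_2[Q_i]$, so at least one pair must be split --- is sound, and the privacy of the colours $2i-1,2i$ correctly prevents double counting across quartets. Both arguments rest on the same combinatorial core (the pairs have edge-disjoint connecting paths in $T_1$ but crossing ones in $T_2$, and spanning-disjointness makes the contributions additive), but your accounting is component-based rather than edge-based, and your verification of the $T_1$ score is arguably cleaner since Lemma~\ref{lem:PSlowerBound} does the work. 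What the paper's version buys is a \emph{binary} witnessing character, which matters for the restricted-states variant $\dmp^2$ mentioned in the conclusion and for the remark that the constructed characters consistently give $T_2$ the larger score; your witness needs $\Theta(n)$ states, which is harmless for the lemma as stated but loses that extra property.
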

\begin{proof}
For a quartet $Q$ and tree $T$, we say that $T|_Q = ab|cd$ if $Q = \{a,b,c,d\}$ and in $T$ the path between $a$ and $b$ is edge-disjoint from the path between $c$ and $d$.
Without loss of generality, we may assume $Q_i = \{a_i,b_i,c_i,d_i\}$,  $T_1|_{Q_i} = a_ib_i|c_id_i$ and $T_2|_{Q_i} = a_ic_i|b_id_i$ for each $i \in [k]$.

We will show how to build a character $\cha$ with two states, such that $\PS{\cha}{T_1} \leq k$, and $\PS{\cha}{T_2} \geq 2k$. This shows that 
${\dmp}_{\cha}(T_1,T_2) \geq k$,
%$|\PS{\cha}{T_1} - \PS{\cha}{T_2}| \geq k$,
as required.

The idea is to construct $\cha$ in such a way that, for each quartet $Q_i$, $\cha(a_i)=\cha(b_i)\neq \cha(c_i)=\cha(d_i)$. This will ensure that $\PS{\cha}{T_2}$ is at least $2k$, as $T_2$ will have at least $2k$ edge-disjoint paths (from $a_i$ to $c_i$ and from $b_i$ $d_i$, for each $i \in [k]$) that each require at least one change in state along some edge.

For each $Q_i$, let $e_{Q_i}$ denote an edge in $T_1$ such that in $T_1[Q_i]$, $e_i$ is on the path that separates $\{a_i,b_i\}$ from $\{c_i,d_i\}$.

Now we construct a function $\ext:V(T_1) \rightarrow \{\textsc{red}, \textsc{blue}\}$ as follows.
Start by choosing an arbitrary leaf in $T_1$, say without loss of generality $a_1$, and set $\ext(a_1) = \textsc{red}$.
Now proceed as follows. 
For any edge $uv$ in $T_1$ such that $\ext(u)$ is defined but $\ext(v)$ is not, we set $\ext(v) = \ext(u)$, unless $uv = e_{Q_i}$ for some $i$. In that case, we set $\ext(v) = \textsc{blue}$ if $\ext(u) = \textsc{red}$, and set  $\ext(v) = \textsc{red}$ otherwise. 
%[See a figure?]\todo{I don't think a figure is required here. A very nice proof!}
Now we can 
%define the character $\chi$ by letting 
let
$\chi$ be the restriction of $\ext$ to $\taxa$.

By construction, $\ext$ is an extension of $\cha$ to $T_1$ and $\extScore(\ext) = |e_{Q_i}: i \in [k]| =k$. This is enough to show that $\PS{\cha}{T_1} \leq k$.
We now show that $\cha(a_i) = \cha(b_i) \neq \cha(c_i) = \cha(d_i)$, for each $i \in [k]$.
To see this, consider the spanning tree $T_1[Q_i]$. By construction, $T_1[Q_i]$ contains the edge $e_{Q_i}$ and $e_{Q_i}$ separates $\{a_i,b_i\}$ from $\{c_i,d_i\}$. Let $u_i,v_i$ be the vertices of $e_{Q_i}$, with $u_i$ the vertex closer to $a_i$ and $b_i$. Note that $T_1[Q_i]$ cannot contain $e_{Q_j}$ for any $j\neq i$, as $T_1[Q_i]$ and $T_1[Q_j]$ are edge-disjoint. It follows that $u_i,a_ib_i$ are all assigned the same value by $\ext$ and $v_i,c_i,d_i$ are assigned the opposite value.
Thus by definition of $\cha$, we have $\cha(a_i) = \cha(b_i) = \ext(u_i) \neq \ext(v_i) = \cha(c_i) = \cha(d_i)$.

It remains to observe that as $Q_1, \dots Q_k$ are spanning-disjoint in $T_2$, the $a_i-c_i$ and $b_i-d_i$ paths in $T_2$ are pairwise edge-disjoint for all $i \in [k]$. Then as $\cha(a_i)\neq \cha(c_i)$ and $\cha(b_i) \neq \cha(d_i)$, there exist at least $2k$ edges $uv$ in $T_2$ with $\ext_2(u) \neq \ext_2(v)$, for any extension $\ext_2$ of $\cha$ to $T_2$. It follows that $\PS{\cha}{T_2} \geq 2k$, and so $\dmp(T_1,T_2) \geq {\dmp}_{\cha}(T_1,T_2) =  |\PS{\cha}{T_1} - \PS{\cha}{T_2}| \geq 2k-k = k$.

Since each edge is processed at most once in the construction of $\cha$, it is clear that this construction takes polynomial time.
\end{proof}

\subsection{Constructing an initial partition}

In this section we prove Lemma~\ref{lem:bigPartition}. 

\newtheorem*{BRbigPartition}{Lemma~\ref{lem:bigPartition}}
\begin{BRbigPartition}
\bigPartitionText
\end{BRbigPartition}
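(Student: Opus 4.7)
The plan is to construct the partition by cutting $t-1$ edges of $T_1$, producing $t$ edge-disjoint connected components whose leaf-sets are the $S_i$. Root $T_1$ at an arbitrary leaf, so that every internal vertex has two children in the rooted tree. For each $i = 1, \dots, t-1$ in turn, I would locate a ``low'' vertex $v_i$ in the current rooted tree whose subtree contains at least $c$ leaves but each of whose children's subtrees contains fewer than $c$ leaves. Such a $v_i$ is produced by descending from the root, always moving to a child whose subtree has at least $c$ leaves and stopping as soon as no child meets the threshold. I would then set $S_i$ to the leaves of $v_i$'s subtree, cut the edge from $v_i$ to its parent, and delete $v_i$'s subtree from the current tree. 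Finally, set $S_t$ to the leaves remaining in the residual tree.

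The key bookkeeping is that both children of $v_i$ have strictly fewer than $c$ leaves below them, so $|S_i| < 2c$ for every $i < t$ (and $|S_i| \geq c$ by the stopping condition). Hence $\sum_{i<t}|S_i| < 2c(t-1)$, which gives $|S_t| > |\taxa| - 2c(t-1) \geq 2ct - 2c(t-1) = 2c \geq c$. The same bound implies that at the start of every iteration $i \leq t-1$ the current tree still contains at least $2c$ leaves, which forces the identified $v_i$ to be strictly below the root of the current tree: if $v_i$ were the root, both of its children would have $<c$ leaves, contradicting the leaf-count lower bound. Thus the edge above $v_i$ exists and can always be cut.

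For spanning-disjointness, the procedure cuts $t-1$ edges of $T_1$, splitting it into $t$ pairwise edge-disjoint connected subgraphs $C_1, \dots, C_t$, and each $S_i$ is exactly the leaf-set of $C_i$. Since $T_1[S_i]$ is the minimal subtree of $T_1$ containing $S_i$ and $C_i$ is a connected subgraph containing $S_i$, we have $T_1[S_i] \subseteq C_i$, so the pairwise edge-disjointness of the $C_i$ immediately yields the spanning-disjointness of the $T_1[S_i]$ in $T_1$. Each iteration performs a subtree leaf-count pass and a root-to-leaf descent in $O(|\taxa|)$ time, for a total of $O(|\taxa|^2)$ overall, which is comfortably polynomial. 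The main care is reconciling the two size constraints simultaneously --- guaranteeing $|S_t| \geq c$ while ensuring every $v_i$ admits a parent edge to cut --- but both are controlled by the single input inequality $|\taxa| \geq 2ct$.
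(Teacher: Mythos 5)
Your proof is correct and takes essentially the same approach as the paper's: both root the tree, repeatedly locate a lowest vertex with at least $c$ leaf-descendants, peel off its descendant set (which has fewer than $2c$ elements because the tree is binary), and use $|\taxa| \geq 2ct$ to guarantee enough leaves remain for the final block. The only cosmetic difference is that the paper phrases this as an induction on $t$ (recursing on the induced subtree), whereas you unroll it iteratively and make the spanning-disjointness explicit via the $t-1$ cut edges.
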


\begin{proof}
We prove the claim by induction on $t$. For the base case, if $t = 1$ then we may let $S_1 = \taxa$, and we have the desired partition.

For the inductive step, assume $|\taxa| \geq 2ct$ and that the claim is true for smaller values of $t$.
We first fix an arbitrary rooting on $T_1$. That is, choose an arbitrary edge $e$ in $T_1$ and subdivide it with a new (temporary) vertex $r$, then orient all edges in $T_1$ away from $r$.
Under this rooting, let $u$ be a lowest vertex in $T_1$ for which $u$ has at least $c$ descendants in $\taxa$. Let $S_t \subseteq \taxa$ be the set of these descendants,
Note that since $T_1$ is binary, $|S_t| < 2c$, as otherwise one of the two children of $u$ would be a lower vertex with at least $c$ descendants. 

Now consider the induced subtree $T_1|_{\taxa'}$, where $\taxa' = \taxa \setminus S_t$. As $|S_t| < 2c$, we have $\taxa' \geq 2c(t-1)$. Then by the inductive hypothesis, we can construct a partition $S_1, \dots, S_{t-1}$ of $\taxa'$ with $S_1, \dots, S_{t-1}$ spanning-disjoint in $T_1|_{\taxa'}$,
such that $|S_i| \geq c$ for each $i$. By construction it is clear that $S_t$ is spanning-disjoint in $T_1$ from  $S_1, \dots, S_{t-1}$. Thus $S_1, \dots, S_t$ is the desired partition.

As the construction of $S_t$ can be done in polynomial time and this process is repeated $t \leq |\taxa|$ times, the entire process takes polynomial time.
\end{proof}

\subsection{Well-behaved sets}

In this section we prove Lemma~\ref{lem:wellBehavedSets}. 
We start with an observation:

%\leen{I deleted the observation since it is exactly Lemma 1}     
% \begin{observation}
% For any tree $T$  on $\taxa$ and character $\cha: \taxa \rightarrow [t]$ such that $\cha^{-1}(i)$ is nonempty for every $i \in [t]$,
% $$\PS{\cha}{T} \geq t-1$$
% with equality if and only if $\cha^{-1}(1), \dots, \cha^{-1}(t)$ are spanning-disjoint in $T$.
% \end{observation}
 
 \begin{observation}\label{obs:treeDegrees}
  For any (not necessarily binary) unrooted tree $T$ with $n$ vertices, and any integer $d \geq 1$, the number of vertices in $T$ with degree strictly greater than $d$ is at most $n/d$.\footnote{The proof of this observation is based on an argument in \cite{SE2013}.} %\todo{Added citation for stackexchange post}
 \end{observation}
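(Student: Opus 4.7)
The plan is a one-line application of the handshake lemma. A tree on $n$ vertices has exactly $n-1$ edges, so the sum of vertex degrees equals $2n-2$. The strategy is to use this exact sum as a global budget and argue that vertices of degree strictly greater than $d$ each consume disproportionately many units of this budget, forcing their number to be small.

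More concretely, I would first dispose of the trivial case $n=1$ (the single vertex has degree $0 < d$, so no vertex has degree $>d$). For $n \geq 2$, every vertex has degree at least $1$, since a tree is connected. Let $k$ denote the number of vertices of degree strictly greater than $d$; each such vertex contributes at least $d+1$ to the degree sum, while each of the remaining $n-k$ vertices contributes at least $1$. Combining these lower bounds against the exact total gives
$$k(d+1) + (n-k) \leq 2n - 2,$$
which rearranges to $kd \leq n-2 < n$, so $k < n/d$, as required.

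I do not anticipate any real obstacle: the argument uses nothing beyond the fact that a tree on $n$ vertices has $n-1$ edges and that every vertex in a connected graph on at least two vertices has degree at least $1$. The footnote already attributes this style of argument to an earlier source, reinforcing that the statement is a standard application of degree counting rather than something requiring new ideas specific to the phylogenetic setting.
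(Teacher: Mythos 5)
Your proof is correct and uses essentially the same argument as the paper: both bound the degree sum $2n-2$ from below by giving each high-degree vertex at least $d+1$ and each remaining vertex at least $1$. The only cosmetic differences are that the paper phrases it as a proof by contradiction while you argue directly (obtaining the slightly sharper bound $k \leq (n-2)/d$), and you explicitly dispose of the $n=1$ case.
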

 \begin{proof}
 %\todo{I based this proof on an answer I found on stackexchange %(https://math.stackexchange.com/questions/388929/how-many-vertices-of-degree-3-or-more-can-a-tree-have-at-most), 
 %what's the protocol for citing here? SK: I guess we can use web page "accessed on" referencing style. Or just add a footnote, like I did in a Journal Math Bio paper for a very similar situation.}
 For each vertex $v$ in $T$ let $d(v)$ denote the degree of $v$. 
 Recall that an unrooted tree with $n$ vertices has exactly $n-1$ edges. It follows that $\sum_{v \in V(T)}d(v) = 2|E(T)| = 2n-2$.
 Now suppose that $T$ has $m > n/d$ vertices with degree strictly greater than $d$, i.e. at least $d+1$. The remaining $n-m$ vertices all have degree at least $1$, from which it follows that $\sum_{v \in V(T)}d(v) \geq m(d+1) + n-m = 
md + n \geq (n/d)d + n = 2n$, a contradiction.
% Thus, there must be at most $n/d$ vertices in $T$ with degree greater than $d$.
 \end{proof}

\newtheorem*{BRwellBehavedSets}{Lemma~\ref{lem:wellBehavedSets}}
\begin{BRwellBehavedSets}
\wellBehavedSetsText
\end{BRwellBehavedSets}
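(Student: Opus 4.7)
The plan is to assume ${\dmp}_{\cha}(T_1,T_2) < k$ (otherwise simply return $\cha$) and, from that assumption, locate the required $\geq k$ indices. Because $S_1,\dots,S_t$ are spanning-disjoint in $T_1$, Lemma~\ref{lem:PSlowerBound} gives $\PS{\cha}{T_1}=t-1$, and the assumption forces $\PS{\cha}{T_2}\le t+k-2$. Fix an optimal extension $\phi_2$ of $\cha$ on $T_2$ (computable via Fitch's algorithm), let $F_2$ be the monochromatic forest it induces, and for each color $i$ let $c_i$ be the number of color-$i$ components of $F_2$. Then $\sum_i c_i$ equals the total number of components of $F_2$, which is $\PS{\cha}{T_2}+1\le t+k-1$; combined with $c_i\ge 1$ this gives $|G|\ge t-k+1$ for the set $G:=\{i:c_i=1\}$ of ``good'' colors. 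For $i\in G$, write $V_i$ for the unique color-$i$ component of $F_2$; then $V(T_2[S_i])\subseteq V_i$ and the $V_i$ ($i\in G$) are pairwise disjoint, so $\{S_i:i\in G\}$ are spanning-disjoint in $T_2$ as well as in $T_1$.

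It remains to discard from $G$ the sets $B_1:=\{i:\deg_{T_1}(S_i)>d_1\}$ and $B_2:=\{i:\deg_{T_2}(S_i)>d_2\}$ and verify that $|G\setminus(B_1\cup B_2)|\ge k$. For $B_1$, contract each $T_1[S_i]$ to a vertex $s_i$ in $T_1$; since the $S_i$ partition $\taxa$, every leaf of $T_1$ is absorbed and the remaining non-$s_i$ vertices of the contracted tree are internal vertices of $T_1$ of degree~$3$. A degree sum then gives $\sum_i\deg_{T_1}(S_i)\le 2t-2$, and with $\deg_{T_1}(S_i)\ge 1$ for each $i$ this yields $|B_1|\le(t-2)/d_1$.

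For $|B_2\cap G|$, I will use the \emph{macro-tree} $T_2^*$ obtained by contracting every component of $F_2$ to a single vertex; its edges are exactly the bichromatic edges of $T_2$, so $|V(T_2^*)|\le t+k-1$. Let $u_i\in V(T_2^*)$ correspond to $V_i$ for $i\in G$, and write $c^{\mathrm{bd}}(V_i):=\deg_{T_2^*}(u_i)$ for the number of bichromatic edges incident to $V_i$. A two-way count of $T_2$-degrees inside $V_i$ (using that the $T_2$-leaves of $V_i$ are exactly $S_i$ and that $T_2$ is binary) yields
\[
c^{\mathrm{bd}}(V_i)\;=\;|V_i|-2|S_i|+2,\qquad
\deg_{T_2}(S_i)\;=\;|V(T_2[S_i])|-2|S_i|+2,
\]
and since $V(T_2[S_i])\subseteq V_i$, we get $c^{\mathrm{bd}}(V_i)\ge\deg_{T_2}(S_i)$. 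Thus $i\in B_2\cap G$ implies $\deg_{T_2^*}(u_i)>d_2$, so Observation~\ref{obs:treeDegrees} applied to $T_2^*$ gives $|B_2\cap G|\le(t+k-1)/d_2$.

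Combining,
\[
|G\setminus(B_1\cup B_2)|\;\ge\;(t-k+1)-(t-2)/d_1-(t+k-1)/d_2,
\]
and clearing denominators shows this is at least $k$ whenever $t(d_1d_2-d_1-d_2)\ge k(2d_1d_2+d_1)-(d_1d_2+d_1+2d_2)$, which is implied by the hypothesis $t\ge\lceil(2d_1d_2+d_1)/(d_1d_2-d_1-d_2)\rceil k$. Outputting any $k$ indices from $G\setminus(B_1\cup B_2)$ finishes the construction; all of the steps (Fitch's algorithm, forming $G$, constructing the contracted trees, and selecting the indices) are polynomial. The main delicate step is the identity $c^{\mathrm{bd}}(V_i)\ge\deg_{T_2}(S_i)$: once this is in hand, the ``true'' $T_2$-degree of each good $S_i$ is controlled by the degree in the small macro-tree $T_2^*$, and both $B_1$ and $B_2\cap G$ are then bounded by a single structural fact about vertex degrees in trees.
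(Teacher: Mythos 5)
Your proof is correct and follows essentially the same route as the paper: you identify the ``good'' colour classes as those not split by the induced forest of an optimal extension on $T_2$ (your $G$ is the paper's complement of ${\cal I}_0$, bounded the same way by $k-1$), and you eliminate high-degree classes in each tree by contracting spanning subtrees and counting degrees in the resulting tree, exactly as the paper does via Observation~\ref{obs:treeDegrees}. The only differences are cosmetic bookkeeping --- a direct degree-sum for $B_1$ and the explicit identity $c^{\mathrm{bd}}(V_i)\ge\deg_{T_2}(S_i)$ in place of the paper's remark that merging leftover vertices only increases degrees --- which yields a marginally tighter inequality but the same conclusion.
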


\begin{proof}
 %We first show that that as $S_1,\dots, S_t$ are disjoint in $T_1$, we have $\PS{\cha}{T_1} \leq t-1$ (indeed, as $t-1$ is a lower bound on the parsimony score for a character with $t$ states, $\PS{\cha}{T_1}$ is exactly $t-1$). Indeed, we can define an extension $\ext_1:V(T) \rightarrow [t]$ by setting $\ext_1(u) = i$ for all $u$ in $T_1[u]$. After this process some vertices may remain unassigned (for example, if an $u$ has three neighbours that are in the spanning subtrees of three different $S_i$ sets, $u$ will be unassigned). We handle these in the following way. For some edge $uv$ such that $\ext_1(u)$ is defined but $\ext_1(v)$ is not, set $\ext_1(v)=\ext_1(u)$. 
 %Observe that this preserves the property that the vertices in $\ext_1^{-1}(i)$ form a connected subtree for each $i$.
 %By repeating this process we get that $\ext_1$ is an extension of $\cha$ to $T_1$. Moreover, as each set of vertices $\ext_1^{-1}(i)$ forms a connected subtree, we have that the number of edges $uv$ with $\ext_1(u)\neq \ext_1(v)$ is at most $t-1$. Thus, $\extScore(\ext_1) = t-1$ and so $\PS{\cha}{T_1} = t-1$.
 By Lemma~\ref{lem:PSlowerBound},
 %\todo{If $\cha$ has $t$ states then $\PS{\cha}{T} \geq t-1$, with equality iff the color classes are 'disjoint' in $T$} 
 $\PS{\cha}{T_1} = t-1$.
 If $\PS{\cha}{T_2} \geq t+k-1$, then 
${\dmp}_{\cha}(T_1,T_2)\geq k$
 %$\cha$ is a witness to $\dmp(T_1,T_2)$
 as required. 
 So we may assume that $\PS{\cha}{T_2} \leq t+k-2$. 
 Let $\delta = \PS{\cha}{T_2} - \PS{\cha}{T_1}$, and observe that $0 \leq \delta \leq k-1$. 
 %(The lower here comes from the fact that $t-1 = \PS{\cha}{T_1}$ is a lower bound on  $\PS{\cha}{T}$ for any $T$.)
 
% Now consider $\PS{\cha}{T_2}$. If $\PS{\cha}{T_2} \geq t + k -1$, then $\cha$ is a witnessing character for $\dmp(T_2, T_1) \geq k$, and the claim of Theorem~\ref{thm:kernelBound} is correct.
 %So we now assume in what follows that $\PS{\cha}{T_2} \leq t + k -2$.
 %Let $\delta$ denote the difference  $\PS{\cha}{T_2} - \PS{\cha}{T_1}$, so that $\PS{\cha}{T_2} = t + \delta -1$, and $0 \leq \delta < k$. (Note that is is not possible that $\delta < 0$, since $\PS{\cha}{T_1} = t-1$ is a lower bound on $\PS{\cha}{T_2}$.
%Note also that it is possible that $\PS{\cha}{T_1} = \PS{\cha}{T_2}$ and so $\delta = 0$, but this does not imply that $T_1 = T_2$ as the subtrees $T_1|_{S_i}, T_2|_{S_i}$ may be different for some $i$.)

 %$T_1$ is a perfect phylogeny for $\cha$ and so there is no way for $T_2$ to have a smaller parsimony score.)
 
 We now construct a partition  $P_1,\dots P_s$ of $\taxa$ which is spanning-disjoint in $T_2$ 
 (see Figure~\ref{fig:T2partitionFromT1partition} for an illustration).
 Let $\ext_2$ be an optimal extension of $\cha$ to $T_2$.
 As $\PS{\cha}{T_2} = \PS{\cha}{T_1} + \delta = t + \delta - 1$, the forest induced by $\ext_2$ has exactly $s$ monochromatic connected components, where $s = t + \delta$.
 Let $P_1, \dots, P_s$ be the partition of $\taxa$ formed by taking the intersection of $\taxa$ with the vertex set of each tree in this forest.
 Observe that by construction $P_1,\dots P_s$ are spanning-disjoint in $T_2$, and that furthermore each $P_j$ is a subset of $S_i$ for some $i \in [t]$ (as each element of $P_j$ is assigned the same value by $\ext_2$, and thus by $\cha$).
%As $\PS{\cha}{T_2} = t + \delta -1$, $T_2$ has exactly $ t + \delta -1$ bichromatic edges that are bichromatic under $\ext_2$. \todo{Is it worth introducing this 'bichromatic' etc terminology? how much does it improve the proofs to have this?}
%Now consider the forest on $\taxa$ derived from $T_2$ by deleting the bichromatic edges, and let $P_1, \dots, P_s$ denote partition of $\taxa$ induced by the connected components of this forest.
%Observe that $s = t+\delta$ (as the number of components is one more than the number of deleted edges), and that each $P_j$ is a subset of some $S_i$ (as otherwise $\ext_2(x) = \cha(x) \neq \cha(y) = \ext_2(y)$ for some some $x,y$ in $P_j$, and the path between $x$ and $y$ would contain a bichromatic edge).
%We also note that as with $S_1, \dots, S_t$ in $T_1$, the %spanning trees of $P_1, \dots, P_s$ are vertex-disjoint in $T_2$.
%sets  $P_1, \dots, P_s$ are disjoint in $T_2$.

\begin{figure}
   %\begin{subfigure}[b]{0.3\textwidth}
    %\includegraphics{figures/constructionT1.eps}
    %\caption{$T_1|S$}
    %\end{subfigure}
    \begin{subfigure}[b]{0.5\textwidth}
    \includegraphics[width=0.9\textwidth]{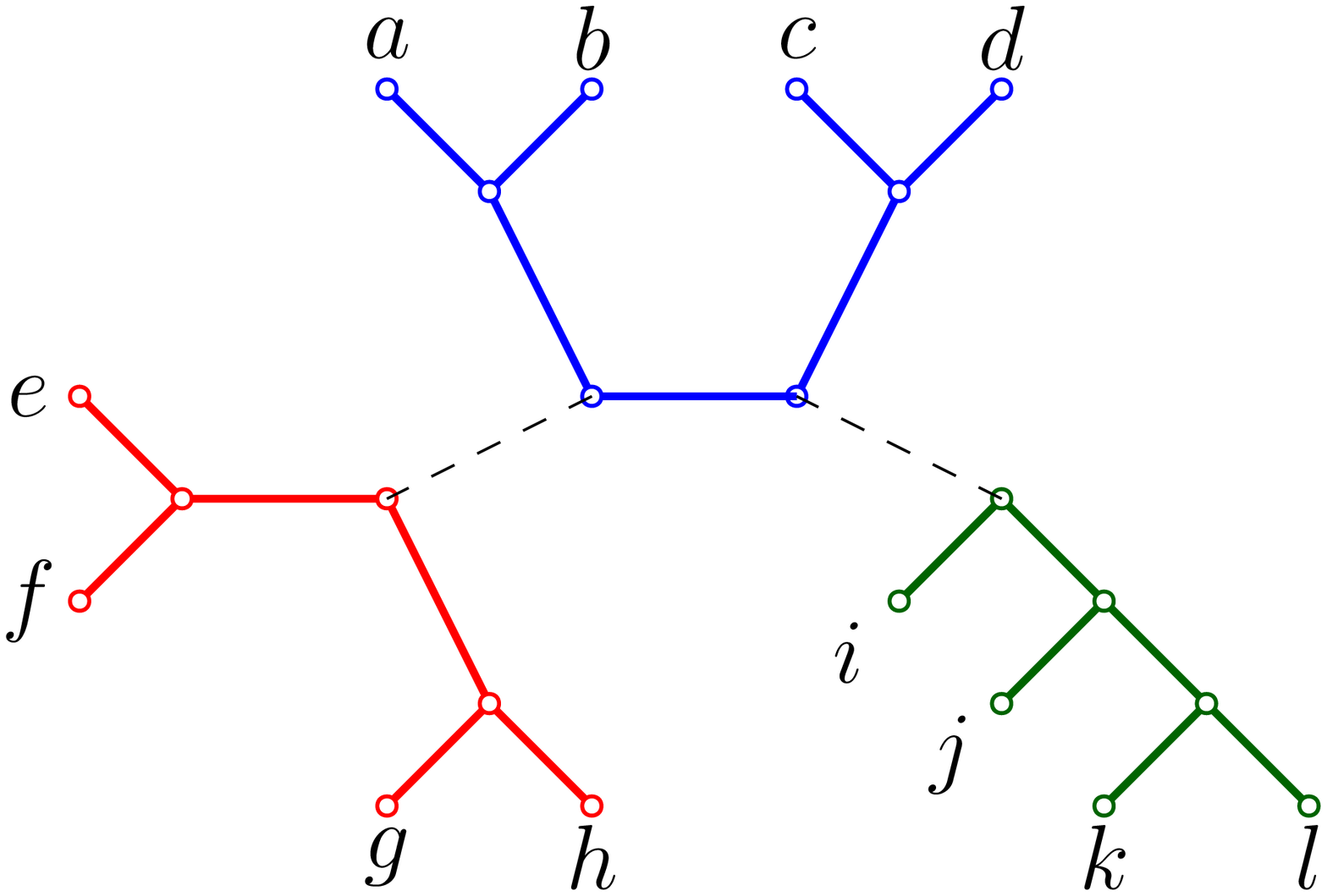}
    \caption{Partition in $T_1$\\
    $S_1=\{a,b,c,d\}$,\\ $S_2=\{e,f,g,h\}$,\\ $S_3=\{i,j,k,l\}$.}
    \end{subfigure}
    \begin{subfigure}[b]{0.5\textwidth}
    \includegraphics[width=0.9\textwidth]{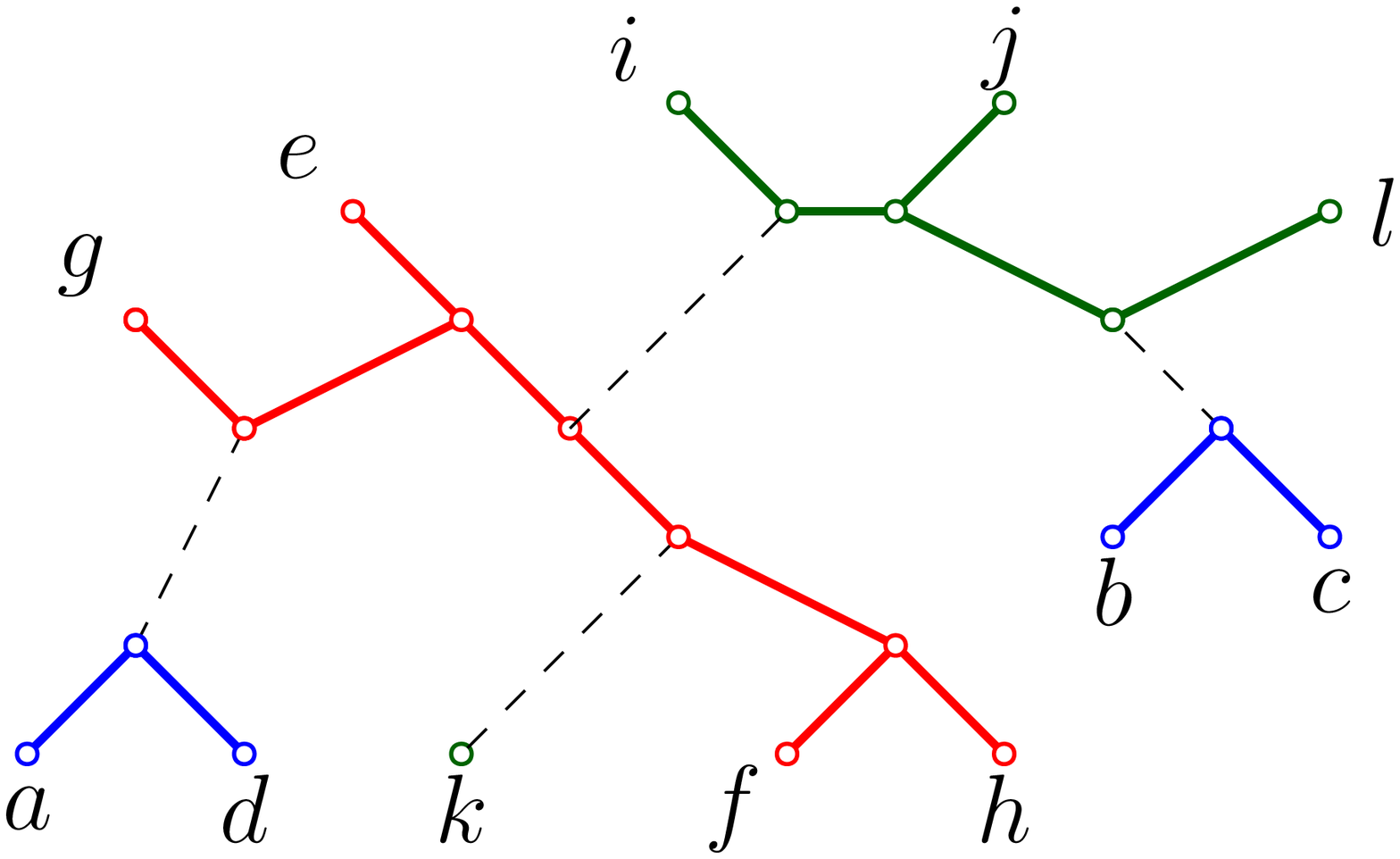}
    \caption{Partition in $T_2$ \\
    $P_1 = \{a,d\}$, $P_2 = \{b,c\}$,\\ $P_3 = \{e,f,g,h\}$,\\ $P_4 = \{i,j,l\}$, $P_5 = \{k\}$.}
    \end{subfigure}
    \caption{Illustration of the construction of partition $P_1,P_2,P_3,P_4,P_5$ from $S_1,S_2,S_3$.
    } 
    \label{fig:T2partitionFromT1partition}
\end{figure}

Now let ${\cal I} \subseteq [t]$ denote the set of indices $i$ in $[t]$ such that 
\begin{itemize}
    \item $S_i = P_j$ for some $j \in [s]; $;
    \item $S_i$ has degree at most $d_1$ in $T_1$; and
    \item $S_i$ has degree at most $d_2$ in T$_2$.
\end{itemize}

Note that since $P_1, \dots P_j$ are spanning-disjoint in $T_2$, the sets $\{S_i: i \in {\cal I}\}$ are also spanning-disjoint in $T_2$.
Notice that it is sufficient to prove that $|{\cal I}| \geq k$, whence any subset of $k$ indices from ${\cal I}$ satisfies the lemma.
We will prove this by providing upper bounds on the number of indices in $[t]$ that do not satisfy the conditions of ${\cal I}$.

Let ${\cal I}_0$ denote the set of indices $i \in [t]$ such that $P_j \neq S_i$ for any $j \in [s]$. 
We first claim that  $|{\cal I}_0| \leq \delta$.
Indeed, since every $P_j$ is a subset of some $S_i$ and $S_1, \dots S_t$ and $P_1,\dots, P_s$ are both partitions of $\taxa$, we have that for every $i \in {\cal I}_0$, there exist at least two distinct indices $j,j' \in [s]$ for which $P_j, P_{j'}\subset S_i$. Hence, $s \geq 2|{\cal I}_0| + |[t]\setminus {\cal I}_0| = t + |{\cal I}_0|$. Therefore if $|{\cal I}_0| > \delta$ then $s > t+\delta$, contradicting the definition of $s$.
Thus, we have  $|{\cal I}_0| \leq \delta$.

Next, let ${\cal I}_{>d_1}$ denote the set of indices $i \in [t]$ for which $S_i$ has degree greater than $d_1$ in $T_1$.
We will show that
%Because of the tree structure of $T_1$ and the spanning-disjointness of $S_1, \dots S_t$ in $T_1$, we can use Observation~\ref{obs:treeDegrees} \leen{to show that}  
$|{\cal I}_{>d_1}| \leq t/d_1$.
For each $i \in [t]$, compress the spanning subtree $T_1[S_i]$ to a single vertex, and observe that the degree of this vertex is equal to the degree of $S_i$ in $T_1$. 
Any vertex $u$ which is not part of any 
$T_1[S_i]$ is merged with
one of its neighbours.
%its neighbour. 
Note that this merging process can only increase the degrees of the remaining vertices. Call the resulting tree $T_1'$. See Figure~\ref{fig:T1primeConstruction}.
$T'_1$ has $t$ vertices, each of them corresponding to a subset $S_i$, and having degree at least the degree of the corresponding $S_i$ in $T_1$.
Now by Observation~\ref{obs:treeDegrees}, there are at most $t/d_1$ vertices in $T'_1$ with degree greater than $d_1$. It follows that there are at most $t/d_1$ values of $i \in [t]$ for which $S_i$ has degree greater than $d_1$ in $T_1$, and thus $|{\cal I}_{>d_1}| \leq t/d_1$ as we wanted to show.

\begin{figure}
   %\begin{subfigure}[b]{0.3\textwidth}
    %\includegraphics{figures/constructionT1.eps}
    %\caption{$T_1|S$}
    %\end{subfigure}
    \begin{subfigure}[b]{0.5\textwidth}
    \includegraphics[width=0.9\textwidth]{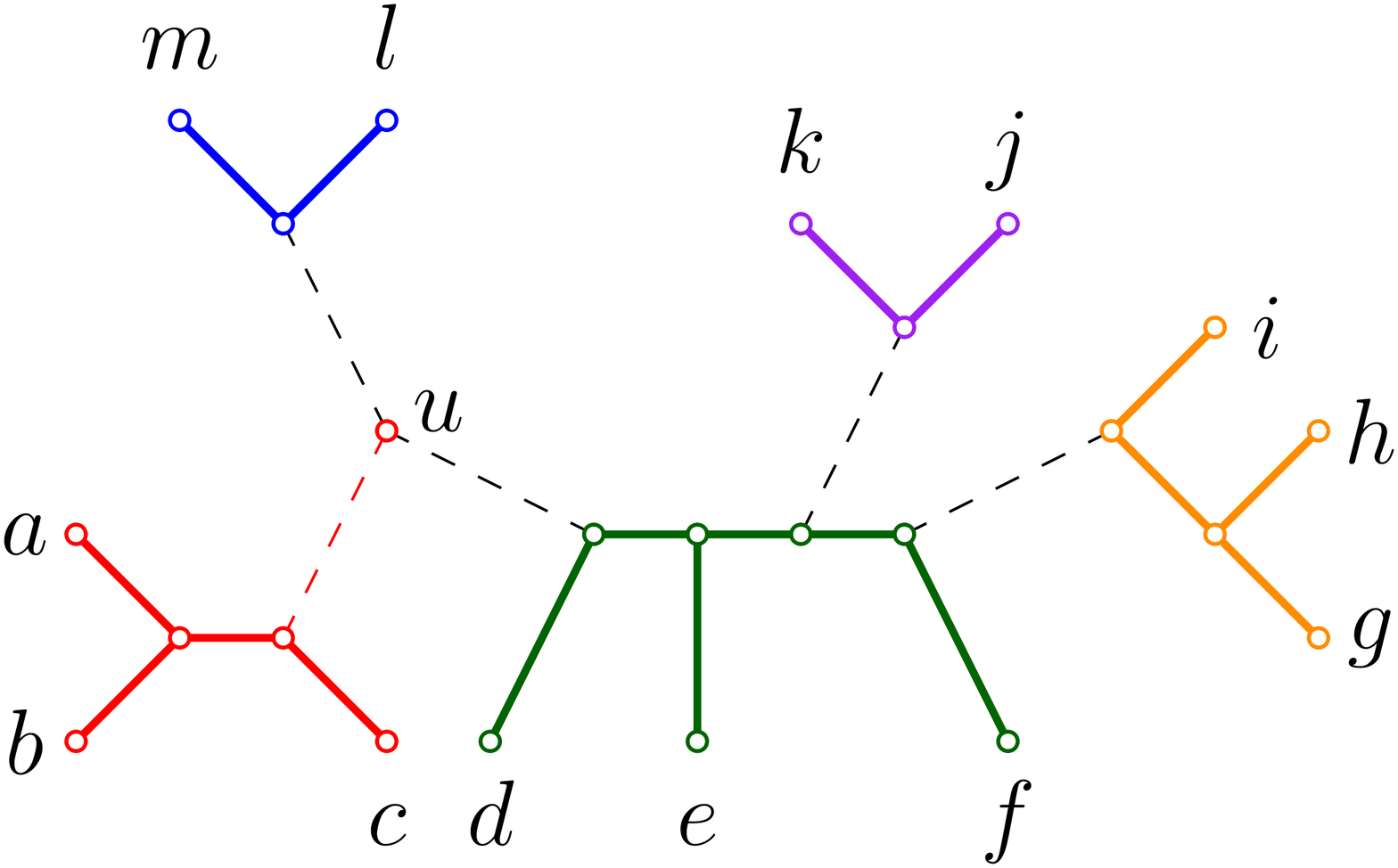}
    \caption{$T_1$}
    \end{subfigure}
    \begin{subfigure}[b]{0.5\textwidth}
    \includegraphics[width=0.9\textwidth]{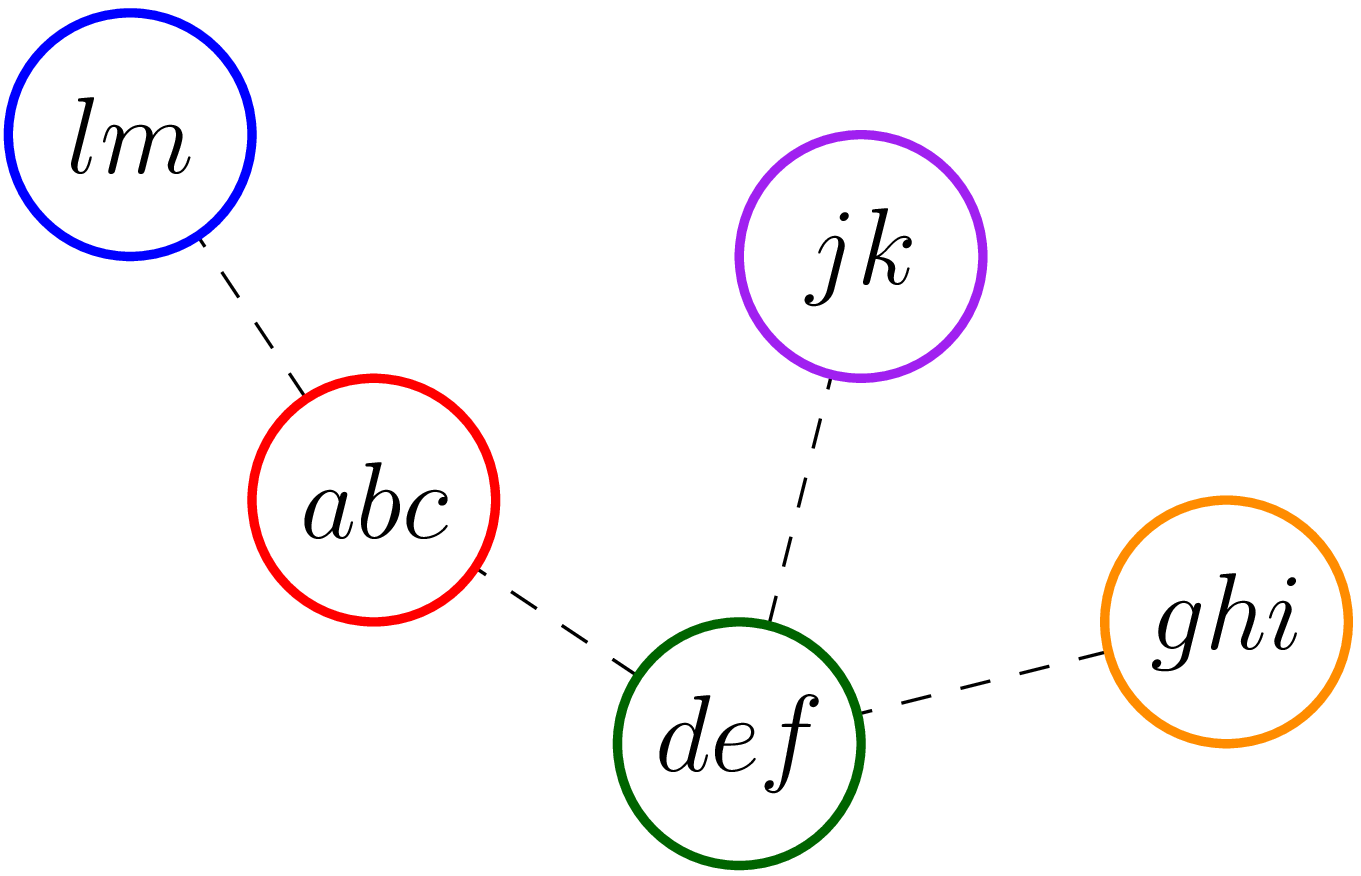}
    \caption{$T_1'$}
    \end{subfigure}
    \caption{Illustration of the construction of auxiliary tree $T_1'$, given a partition of $\taxa$ with $S_1 = \{a,b,c\}$, $S_2 = \{d,e,f\}$, $S_3 = \{g,h,i\}$, $S_4 = \{j,k\}$, $S_5 = \{l,m\}$.
Note that the internal vertex labelled $u$ is not part of $T_1[S_i]$ for any $i$, so we merge it with an arbitrary adjacent vertex. In this case we merge $u$ into $S_1 = \{a,b,c\}$, which is why $S_1$ has degree $1$ in $T_1$ but degree $2$ in $T_1'$.
    } 
    \label{fig:T1primeConstruction}
\end{figure}

Similarly 
%way to ${\cal I}_{>d_1}$ 
let  ${\cal J}_{>d_2}$ denote the set of indices $j \in [s]$ for which $P_j$ has degree greater than $d_2$ in $T_2$.
By similar arguments as used for ${\cal I}_{>d_1}$ above, we can show that $|{\cal J}_{>d_2}| \leq s/d_2$.
 
Notice that for any $i\in [t]$, if $i$ is not in ${\cal I}$, then either $i \in {\cal I}_0$, or $i \in {\cal I}_{>d_1}$, or there exists $j \in {\cal J}_{>d_2}$ such that $S_i = P_j$.
 We therefore have that $|{\cal I}| \geq t - |{\cal I}_0| - |{\cal I}_{>d_1}| - |{\cal J}_{>d_2}|
 \geq t - \delta - t/d_1 - s/d_2$.
 
Now, using that $t \geq \frac{  (2d_1d_2+d_1)}{d_1d_2 -d_1-d_2} k$, $s = t+\delta$ and $\delta \leq k-1$, we have:

\begin{align*}
|{\cal I}| & \geq t - |{\cal I}_0| - |{\cal I}_{>d_1})| -  |{\cal J}_{>d_2}| \\
 & \geq t - \delta - t/d_1 - s/d_2  \\
 & = t - \delta - t/d_1 - (t+\delta)/d_2  \\
  & = \frac{d_1d_2t - d_1d_2\delta - d_2t - d_1t -d_1\delta }{d_1d_2} \\
  & = \frac{(d_1d_2 -d_1-d_2)t - (d_1d_2+d_1)\delta}{d_1d_2} \\
  & \geq \frac{(d_1d_2 -d_1-d_2)t - (d_1d_2+d_1)(k-1)}{d_1d_2} \\
  & \geq \frac{ (2d_1d_2+d_1)k - (d_1d_2+d_1)(k-1)}{d_1d_2} \\
  & =\frac{ d_1d_2k +d_1d_2 + d_1}{d_1d_2} \\
  & > \frac{d_1d_2k}{d_1d_2} \\
 %& = t + \delta  - 2\delta  - t/3 - (t+\delta )/3 \\ 
 % & = (3t + 3\delta  - 6\delta  - t - t - \delta )/3 \\
 %  & = (t - 4\delta )/3 \\
 %  & \geq (7k-4-4(k-1))/3 \\
   & = k,
\end{align*}

\noindent as we needed to prove. 
To see that ${\cal I}$ can be constructed in polynomial time, it suffices to observe that the partition $P_1, \dots, P_s$ can be constructed in polynomial time (as the $\ext_2$ can be found in polynomial time), and after this each $S_i$ can be checked for membership in ${\cal I}$ in polynomial time.
% \todo{Prob need some light restrictions on $d_1,d_2$...}
 %
 %NEED $(d_1d_2 -d_1-d_2)t \geq d_1d_2k + (d_1d_2-d_1)(k-1) $
 %SO NEED
 %$t \geq \frac{ d_1d_2k + (d_1d_2-d_1)(k-1)}{d_1d_2 -d_1-d_2}$
\end{proof}

\subsection{Proof of Theorem~\ref{thm:kernelBound}}

\begin{lemma}\label{lem:kernelBound}
Let $d_1,d_2$ be positive integers such that $d_1d_2 - d_1 - d_2 > 0$.
Let $(T_1,T_2)$ be a pair of binary unrooted phylogenetic trees on $\taxa$ that are irreducible under Reduction Rules~\ref{rule:cherry} and~\ref{rule:chain}.

Then if $|\taxa| \geq 2ct$, 
where $c = 9(d_1+d_2) - 11$
and $t = \lceil\frac{(2d_1d_2 + d_1)}{d_1d_2 -d_1-d_2}\rceil k$,
it holds that $\dmp(T_1,T_2) \geq k$, and we can find a witnessing character in polynomial time.
\end{lemma}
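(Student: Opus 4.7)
The proof is essentially a bookkeeping combination of the four lemmas already established in this section; the overview in Section~\ref{sec:kerneloverview} has already mapped the route, and the plan is to execute that route carefully while checking that the numerical constants $c$ and $t$ chosen in the statement genuinely support each invocation. I would begin by applying Lemma~\ref{lem:bigPartition} with these values of $c$ and $t$: since $|\taxa| \geq 2ct$, this produces in polynomial time a partition $S_1,\dots,S_t$ of $\taxa$ that is spanning-disjoint in $T_1$ and satisfies $|S_i| \geq c = 9(d_1+d_2)-11$ for every $i$. Let $\cha$ be the character defined by this partition.

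Next, I would branch on the value of ${\dmp}_{\cha}(T_1,T_2)$. If ${\dmp}_{\cha}(T_1,T_2) \geq k$, then $\cha$ itself is a witnessing character and we are done. Otherwise, the hypotheses of Lemma~\ref{lem:wellBehavedSets} are satisfied (the bound on $t$ in the statement of Lemma~\ref{lem:kernelBound} matches the one required there, and $d_1d_2 - d_1 - d_2 > 0$ guarantees that the ceiling is well-defined). Applying that lemma yields, in polynomial time, indices $i_1,\dots,i_k$ such that $S_{i_1},\dots,S_{i_k}$ are spanning-disjoint in both $T_1$ and $T_2$, each $S_{i_j}$ has degree at most $d_1$ in $T_1$, and each has degree at most $d_2$ in $T_2$.

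At this point the crucial arithmetic check is that each $S_{i_j}$ satisfies the size hypothesis of Lemma~\ref{lem:smallDegreeSizeBound}: indeed $|S_{i_j}| \geq c = 9(d_1+d_2) - 11$, which is exactly the threshold required for a reduced instance. Since $(T_1,T_2)$ is irreducible under Rules~\ref{rule:cherry} and~\ref{rule:chain}, that lemma guarantees a conflicting quartet $Q_j \subseteq S_{i_j}$, computable in polynomial time. Because $Q_j \subseteq S_{i_j}$ and the sets $S_{i_1},\dots,S_{i_k}$ are spanning-disjoint in both trees, the quartets $Q_1,\dots,Q_k$ inherit spanning-disjointness in both $T_1$ and $T_2$. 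A final application of Lemma~\ref{lem:combiningQuartets} then constructs, in polynomial time, a character $\cha'$ witnessing $\dmp(T_1,T_2) \geq k$.

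I don't expect any genuine obstacle here: all four lemmas do the real combinatorial work, and what remains is purely logical composition. The only place where one must be slightly attentive is the matching of constants—verifying that the $c$ in Lemma~\ref{lem:bigPartition} and the $c$ needed to invoke Lemma~\ref{lem:smallDegreeSizeBound} coincide, and that the $t$ here meets the ceiling threshold of Lemma~\ref{lem:wellBehavedSets}. Both hold by the exact choices stipulated in the statement, so the proof amounts to threading these four lemmas together in order.
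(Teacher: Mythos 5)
Your proposal is correct and matches the paper's proof of this lemma essentially step for step: Lemma~\ref{lem:bigPartition} for the partition, the dichotomy on ${\dmp}_{\cha}(T_1,T_2)$, Lemma~\ref{lem:wellBehavedSets} for the well-behaved indices, Lemma~\ref{lem:smallDegreeSizeBound} for the conflicting quartets (noting, as the paper does explicitly, that the actual degrees $d^j_1 \leq d_1$ and $d^j_2 \leq d_2$ make the threshold $9(d^j_1+d^j_2)-11 \leq c$), and Lemma~\ref{lem:combiningQuartets} to finish.
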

\begin{proof}
By Lemma~\ref{lem:bigPartition}, there exists a partition $S_1, \dots S_t$ of $\taxa$, all spanning-disjoint in $T_1$, and with $|S_i| \geq c$ for all $i\in [t]$.
Let $\cha$ be the character defined by $S_1, \dots, S_t$. If $\cha$ is a witness to $\dmp(T_1,T_2) \geq k$, then we may return $\cha$ and we are done.
Otherwise, 
%observe that $t \geq \frac{(2d_1d_2 + d_1)}{d_1d_2 -d_1-d_2} k  = \frac{d_1d_2k + (d_1d_2 +d_1)k}{d_1d_2 -d_1-d_2} >  \frac{d_1d_2k + (d_1d_2 +d_1)(k-1)}{d_1d_2 -d_1-d_2}$.
%Then
we may apply Lemma~\ref{lem:wellBehavedSets} to find indices $i_1, \dots i_k$ such that:
\begin{itemize}
    \item $S_{i_1},  \dots S_{i_k}$ are all spanning-disjoint in $T_2$ (as well as in $T_1$);
    \item each $S_{i_j}$ has degree at most $d_1$ in $T_1$; and
    \item each $S_{i_j}$ has degree at most $d_2$ in $T_2$.
\end{itemize}

Now for each $S_{i_j}$, we have that $S_{i_j}$ has degree  $d^j_1 \leq d_1$ in $T_1$ and $d^j_2 \leq d_2$ in $T_2$, that $|S_{i_j}| \geq c > 9(d_1+d_2) - 11 \geq 9(d^j_1+d^j_2) - 11$ , and that $(T_1,T_2)$ is irreducible under Rules~\ref{rule:cherry} and~\ref{rule:chain}.
Thus we may apply Lemma~\ref{lem:smallDegreeSizeBound}, to find a conflicting quartet $Q_j \subseteq S_{i_j}$ for each $i_j$.

Finally, as $S_{i_1},  \dots S_{i_k}$ are spanning-disjoint in both $T_1$ and $T_2$, and as each $Q_j$ is a subset of $S_{i_j}$, we have that $Q_1, \dots, Q_k$ are also spanning-disjoint in both $T_1$ and $T_2$.
Therefore we may apply Lemma~\ref{lem:combiningQuartets} to find a witnessing character for $\dmp(T_1,T_2) \geq k$.
As each step of this process takes polynomial time, the construction of a witnessing character takes polynomial time.
\end{proof}

It remains to complete the proof of Theorem~\ref{thm:kernelBound}. %it remains to choose the values of $d_1$ and $d_2$ such that $2ct = (9(d_1+d_2) - 12) \cdot  \lceil\frac{(2d_1d_2 + d_1)}{d_1d_2 -d_1-d_2}\rceil k$ \todo{LS: I changed the 11 into a 12, as I did before indicated in green} is minimized.

\newtheorem*{BRkernelBound}{Theorem~\ref{thm:kernelBound}}
\begin{BRkernelBound}
\kernelBoundText
\end{BRkernelBound}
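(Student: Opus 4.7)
The plan is to appeal directly to Lemma~\ref{lem:kernelBound}, which already delivers exactly the conclusion of Theorem~\ref{thm:kernelBound} but under the hypothesis $|\taxa| \geq 2ct$ with $c = 9(d_1+d_2)-11$ and $t = \lceil (2d_1d_2+d_1)/(d_1d_2-d_1-d_2) \rceil k$ for any positive integers $d_1, d_2$ satisfying $d_1 d_2 > d_1 + d_2$. Hence the theorem reduces to choosing $d_1$ and $d_2$ so that $2ct$ is as small a constant multiple of $k$ as possible, and declaring $\alpha$ to be that constant.

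First I would fix $d_1 = 4$ and $d_2 = 5$. Feasibility is immediate since $d_1 d_2 - d_1 - d_2 = 20 - 9 = 11 > 0$. Substituting, $c = 9(4+5)-11 = 70$, and $(2 d_1 d_2 + d_1)/(d_1 d_2 - d_1 - d_2) = 44/11 = 4$; since this ratio is already an integer no loss is incurred by the ceiling, and we may take $t = 4k$. Then $2ct = 2 \cdot 70 \cdot 4 \cdot k = 560\,k$. With $\alpha = 560$, every instance satisfying $|\taxa| \geq \alpha k$ automatically meets the hypothesis of Lemma~\ref{lem:kernelBound}, and the lemma outputs, in polynomial time, a character witnessing $\dmp(T_1,T_2) \geq k$.

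The only real choice in the proof is the pair $(d_1,d_2)$. Because the bound is not a smooth function of $(d_1,d_2)$ (the ceiling in the formula for $t$ creates jumps), minimizing $2ct/k$ is a short discrete optimization over small integer pairs with $(d_1-1)(d_2-1) > 1$. A brief enumeration shows that the pair $(4,5)$ is singled out by the happy accident that $44/11$ is an integer, so the ceiling is tight and no rounding is wasted; nearby alternatives such as $(3,3)$, $(3,4)$, $(4,4)$, $(5,5)$ or $(5,4)$ each produce a larger constant because either $c$ grows faster or the ceiling introduces slack. The main (mild) obstacle is simply certifying that $(4,5)$ is a good choice; this is routine arithmetic, and the paper is explicit that the goal is clarity rather than optimizing the constant, so no further tuning is attempted.
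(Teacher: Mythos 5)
Your proposal is correct and follows exactly the paper's own route: invoke Lemma~\ref{lem:kernelBound} with $d_1=4$, $d_2=5$, compute $c=70$ and $t=4k$, and conclude $\alpha = 2ct/k = 560$. The paper likewise relegates the verification that $(4,5)$ minimizes the constant to an appendix, so your remarks on the discrete optimization match its treatment as well.
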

\begin{proof} 
The proof boils down to choosing the appropriate values of $d_1$ and $d_2$ such that $2ct = (9(d_1+d_2) - 11) \cdot  \lceil\frac{(2d_1d_2 + d_1)}{d_1d_2 -d_1-d_2}\rceil k = \alpha k$. 
For $d_1 = 4, d_2 = 5$ we get $c = 70$ and $t = 4k$, yielding the value of $\alpha = 560$ for $\alpha k = 2ct$.
\end{proof}
\noindent In the appendix, we show that $d_1 = 4, d_2 = 5$ is in fact the optimal choice of values for $d_1$ and $d_2$.

%Therefore we have Theorem~\ref{thm:kernelBound}.
%\todo{todo: show other vals / show this is opt choice for $d_1,d_2$?}

%\newtheorem*{BRkernelBound}{Theorem~\ref{thm:kernelBound}}
%\begin{BRkernelBound}
%\kernelBoundText
%\end{BRkernelBound}

As a corollary to Theorem~\ref{thm:kernelBound} and 
Theorem~\ref{thm:dmpReduction},
we have that \problemName{} is fixed-parameter tractable with respect to $\dmp$. Specifically, the kernel can be solved using the exponential-time algorithm described in \cite{kelk2017note}, which computes the maximum parsimony distance of two trees on $n$ leaves in time  $O( 1.619^n \cdot \text{poly}(n))$.

\begin{corollary}\label{cor:dmpFPT}
\problemName{} has a kernel of size $\alpha k$, and can be solved in time 
%\steven{$O(1.619^{\alpha k} \cdot poly(n))$},
$O(1.619^{\alpha k} \cdot poly(\alpha k) + poly(n))$,
with 
$k = \dmp(T_1,T_2)$.
\end{corollary}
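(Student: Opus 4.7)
The plan is simply to chain the machinery already built up in this section with the known exact algorithm of \cite{kelk2017note}. First I would apply Reduction Rules~\ref{rule:cherry} and~\ref{rule:chain} exhaustively to the input pair $(T_1,T_2)$. Each application strictly decreases $|\taxa|$ (either one leaf of a common cherry is removed, or the tail of a long common chain is truncated down to length $4$), so at most $|\taxa|$ rounds are needed; within each round the applicability of each rule can be checked in polynomial time by scanning the two trees for a common cherry or a common chain of length at least $5$. Let $(T_1', T_2')$ on $\taxa'$ denote the resulting irreducible instance. Applying Theorem~\ref{thm:dmpReduction} inductively across the sequence of reductions yields $\dmp(T_1', T_2') = \dmp(T_1, T_2) = k$.

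Next, I would invoke Theorem~\ref{thm:kernelBound} in its contrapositive form on the irreducible pair $(T_1', T_2')$. Suppose for contradiction that $|\taxa'| \geq \alpha(k+1)$. Then Theorem~\ref{thm:kernelBound}, instantiated with the parameter $k+1$, would give $\dmp(T_1', T_2') \geq k+1$, contradicting $\dmp(T_1', T_2') = k$. Hence $|\taxa'| < \alpha(k+1)$, and so $|\taxa'| = O(\alpha k)$, establishing the kernel of the claimed size.

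Finally, for the running time I would solve the kernel by running the exact algorithm of \cite{kelk2017note} directly on $(T_1', T_2')$. Since its running time on an instance with $n'$ taxa is $O(1.619^{n'} \cdot \text{poly}(n'))$ and here $n' \leq \alpha k$, the total time is the polynomial cost of the kernelization on the original instance (i.e.\ $\text{poly}(n)$) plus $O(1.619^{\alpha k} \cdot \text{poly}(\alpha k))$ for solving the kernel. Summing gives exactly the stated bound. I do not anticipate any real obstacle: all the combinatorial work has already been done in Theorems~\ref{thm:kernelBound} and~\ref{thm:dmpReduction}, and the only item requiring care is verifying that exhaustive reduction is genuinely polynomial, which follows immediately from the observation that each rule application strictly shrinks $|\taxa|$ and each rule is checkable in polynomial time.
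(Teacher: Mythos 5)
Your proposal is correct and follows essentially the same route as the paper, which likewise obtains the corollary by chaining exhaustive application of the reduction rules (preserved by Theorem~\ref{thm:dmpReduction}), the size bound of Theorem~\ref{thm:kernelBound}, and the exact $O(1.619^{n}\cdot \mathrm{poly}(n))$ algorithm of \cite{kelk2017note} on the kernel. If anything you are slightly more careful than the paper in noting that the contrapositive of Theorem~\ref{thm:kernelBound} yields $|\taxa'| < \alpha(k+1)$ rather than literally $\alpha k$, but this is still the claimed linear bound.
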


For completeness,
%\todo{SK: Please have a quick check that this new text makes sense}, 
we clarify that these results also prove that the decision problems
``$d_{MP} \leq k$?'', ``$d_{MP} \geq k$?'' and ``$d_{MP} = k$?'' can all be answered in time $f(k) \cdot poly(n)$. To answer ``$d_{MP} \leq k$?'', note that if the kernel has size at least $\alpha (k+1)$ the answer is definitely NO, and otherwise the algorithm from \cite{kelk2017note} can be applied to compute $d_{MP}$ directly; this can then be compared to $k$ to resolve the question. The ``$ d_{MP}\geq k$?'' question can be answered by asking ``$d_{MP} \leq k-1$?'' and negating the answer; and ``$d_{MP} = k$?'' can be answered by combining the answers to the $\leq$ and $\geq$ questions. 

\section{Corollaries: leveraging the kernel}
\label{sec:cor}

\subsection{A polynomial-time constant-factor approximation algorithm for \problemName{}}
\label{sec:approx}

We present how a constant factor approximation algorithm for \problemName{} can be designed using Theorem~\ref{thm:kernelBound} together with Reduction Rules~\ref{rule:cherry} and~\ref{rule:chain}.

In order to incorporate Reduction Rules~\ref{rule:cherry} and~\ref{rule:chain} into our approximation algorithm, we  require a way to construct a witnessing character for the original instance from a witnessing character for the reduced instance.

\begin{lemma}\label{lem:dmpReconstruction}
Let $(T_1',T_2')$ be an instance of \problemName{} derived from $(T_1, T_2)$ by an application of Reduction Rule~\ref{rule:cherry} or~\ref{rule:chain}, with $T_1',T_2'$ trees on $\taxa' \subset \taxa$.
Then given a character $\cha'$ on $X'$, we can derive a character $\cha$ on $X$ in polynomial time such that 
${\dmp}_{\cha}(T_1,T_2) \geq {\dmp}_{\cha'}(T_1',T_2')$.
%$|\PS{\cha}{T_1} - \PS{\cha}{T_2}| \geq |\PS{\cha'}{T_1'} - \PS{\cha'}{T_2'}|$.
\end{lemma}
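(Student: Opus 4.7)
The plan is to case-split on which reduction rule was applied in forming $(T_1',T_2')$ from $(T_1,T_2)$ and, in each case, define $\chi$ explicitly as a polynomial-time extension of $\chi'$. In both cases I will aim for the stronger identity $\PS{\chi}{T_i} = \PS{\chi'}{T_i'}$ for $i\in\{1,2\}$, which forces ${\dmp}_{\chi}(T_1,T_2) = {\dmp}_{\chi'}(T_1',T_2')$; the constructions I have in mind are essentially the lifts hidden inside the correctness proof of Theorem~\ref{thm:dmpReduction} in~\cite{kelk2016reduction}, now needing to be made explicit.

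For Rule~\ref{rule:cherry}, let $x$ be the leaf deleted from the common cherry $\{x,y\}$. I would set $\chi(x) = \chi'(y)$ and leave $\chi$ equal to $\chi'$ on the remaining taxa. Equality of parsimony scores is a two-line argument: an optimal extension $\phi'$ on $T_i'$ lifts to $T_i$ by colouring the reintroduced internal vertex $u$ (the common neighbour of $x$ and $y$) with $\chi(y)=\chi(x)$, so the two new pending edges $ux$ and $uy$ are monochromatic and the third edge at $u$ inherits the bichromatic status of the edge it replaces in $T_i'$. The reverse direction is a standard local exchange at $u$: its local bichromatic contribution is always minimized by $\phi(u) = \chi(y)$ regardless of the colour of the outside neighbour, so any optimal extension of $\chi$ on $T_i$ can be assumed to have $\phi(u) = \chi(y)$, and restricting to $V(T_i')$ then preserves the bichromatic count.

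For Rule~\ref{rule:chain}, let $x_5,\ldots,x_r$ be the trimmed chain leaves and $p_1,\ldots,p_r$ the common internal chain path. I would set $\chi(x_i)=\chi'(x_4)$ for every $i\geq 5$ and leave $\chi$ unchanged on the remaining taxa. The natural lift of an optimal $\phi'$ on $T_i'$ colours every reinstated chain vertex $p_5,\ldots,p_r$ with $\chi'(x_4)$, making every new pending edge and every interior chain edge monochromatic; only the boundary edges $p_4 p_5$ and $p_r b$ (where $b$ is the outer neighbour of $p_4$ in $T_i'$) can possibly contribute new bichromatic edges. A short case analysis on $\phi'(p_4)$ and $\phi'(b)$ shows that the combined contribution of these two edges matches the bichromatic status of the single collapsed edge $p_4 b$ in $T_i'$, provided at least one of $\phi'(p_4),\phi'(b)$ equals $\chi'(x_4)$.

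The technical core, which I expect to be the main obstacle, is the residual case in which every optimal $\phi'$ on $T_i'$ forces both $\phi'(p_4)$ and $\phi'(b)$ away from $\chi'(x_4)$; here the naive lift can inflate the bichromatic count by one or two, and such a blow-up might in principle appear asymmetrically between $T_1$ and $T_2$. To close this I would combine (i) a Fitch-style exchange argument around $p_4$, showing that an optimal $\phi'$ with $\phi'(p_4) = \chi'(x_4)$ can be chosen whenever the two other edges incident to $p_4$ do not both strictly prefer a non-$\chi'(x_4)$ colour, together with (ii) a refined lift that, in the remaining degenerate configurations, switches the colour of a few of the $p_5,\ldots,p_r$ vertices and trades a bounded number of pending-edge bichromaticities against the boundary-edge bichromaticities, exploiting the slack guaranteed by $r \geq 5$. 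Carrying this out uniformly for $T_1$ and $T_2$ is the delicate step, but it is facilitated by the fact that they share the same chain. Once the parsimony identity is in hand, polynomial-time computability of the construction is immediate, since the optimal extensions, the lift, and the residual case analysis are all polynomial-time computable via Fitch's algorithm.
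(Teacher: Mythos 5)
Your cherry case is fine, but the chain case contains a genuine gap, and it is precisely the one you flag as the ``technical core''. With the assignment $\cha(x_i)=\cha'(x_4)$ for $i\ge 5$, the residual configuration in which every optimal extension of $\cha'$ colours $p_4$ and its outer neighbour $b$ away from $\cha'(x_4)$ really does occur: take $\cha'(x_1)=\cha'(x_2)=\cha'(x_3)=A\neq C=\cha'(x_4)$ and hang a large subtree of $A$-coloured leaves at the far end of the chain. There one checks that every colouring of the reinstated path costs at least one more bichromatic edge in $T_i$ than the collapsed configuration costs in $T_i'$, so $\PS{\cha}{T_i}=\PS{\cha'}{T_i'}+1$. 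Since the subtrees attached at the two ends of the common chain may differ between $T_1$ and $T_2$, this inflation can hit the tree with the \emph{lower} parsimony score only, in which case ${\dmp}_{\cha}(T_1,T_2)$ drops strictly below ${\dmp}_{\cha'}(T_1',T_2')$. So your construction as stated does not yet deliver even the claimed inequality, and the ``refined lift'' in step (ii) that is supposed to repair it is not carried out. The underlying problem is that you are aiming at a target that is both stronger than needed and, for your fixed choice of $\cha$, sometimes false: exact preservation of the parsimony score on both trees simultaneously.

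The paper avoids the entire case analysis with one asymmetric idea. Assume WLOG $\PS{\cha'}{T_2'}\ge\PS{\cha'}{T_1'}$ and take an optimal extension $\ext_1'$ of $\cha'$ to $T_1'$ \emph{only}. Transfer it to the spanning subtree $T_1[\taxa']$ (subdivision vertices copy an endpoint's colour), flood-fill the remainder of $T_1$ so that no new bichromatic edges arise, and \emph{define} $\cha$ as the restriction of the resulting colouring to $\taxa$. This forces $\PS{\cha}{T_1}=\PS{\cha'}{T_1'}$ (the lift shows $\le$; restricting any extension of $\cha$ back to $T_1'$ shows $\ge$), while the same restriction argument gives $\PS{\cha}{T_2}\ge\PS{\cha'}{T_2'}$ for free --- and an increase on the higher-scoring tree only helps the inequality. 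No case split on the rules, no local exchange arguments, and no need to control the score on $T_2$ at all. If you want to salvage your approach, the minimal fix is to let the colours of the deleted leaves be dictated by an optimal extension on the lower-scoring tree, rather than hard-coding them to $\cha'(y)$ or $\cha'(x_4)$.
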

\begin{proof}
First observe that by definition of the reduction rules, we may assume that $T_1' = T_1|_{\taxa'}$ and $T_2' = T_2|_{\taxa'}$ for some $\taxa' \subseteq \taxa$.
Assume without loss of generality that $\PS{\cha'}{T_2'} \geq \PS{\cha'}{T_1'}$, and let $\ext_1'$ be an optimal extension of $\cha'$ to $T_1'$. 
%$V(T_1')$.
%\todo{LS: Previously we used to speak about an extension to $T_1'$ i.o. $V(T_1')$. MJ: I don't understand this point. LS: Before there was written $V(T')$ i.o. $T'$ and I changed it. I made this remark to set us to check if this extension to $V(T')$ is appearing more often, but I don't think so}
%We will now extend $\ext'_1$ to a function on $\ext_1$ on $T_1$
We will now define a function $\ext:V(T_1) \rightarrow \states$ such that $\ext(u) = \ext'(u)$ for all $u \in V(T_1')$, and
%$V(T_1)$ 
such that $\extScore_{T_1}(\ext_1) = \extScore_{T_1'}(\ext'_1) = \PS{\cha'}{T_1'}$.
Recall that $T_1|_{\taxa'}$ is derived from the spanning tree $T_1[\taxa']$ by suppressing vertices of degree $2$, and therefore $T_1[\taxa']$ can be derived from $T_1' = T_1|_{\taxa'}$ by repeatedly subdividing edges with degree-$2$ vertices.
Now construct $\ext_1$ as follows. For each vertex $v$ in $T_1'$, set $\ext_1(v) = \ext'_1(v)$. For every edge $e = uv$ that gets subdivided with one or more degree-$2$ vertices, set $\ext_1(u') = \ext_1'(u)$ for each such degree-$2$ vertex $u'$.
Thus, $\ext_1$ assigns a colour to every vertex in $T_1[\taxa']$, and by construction $\extScore_{T_1[\taxa']}(\ext_1) = \extScore_{T_1'}(\ext'_1)$.

%It remains to assign $\ext(v)$ \leen{to vertices} $v$ of $T_1$ not in $T_1[\taxa']$.
%We do this as follows. 
In order to assign $\ext(v)$ to vertices $v$ of $T_1$ not in $T_1[\taxa']$, take any edge $e = uv$ in $T_1$ such that $\ext_1(u)$ has been assigned but $\ext_1(v)$ has not, and set $\ext_1(v) = \ext_1(u)$.
After completing this process, we have that $\ext_1$ assigns a colour to every vertex in $T_1$ (including its leaves) and $\extScore_{T_1}(\ext_1) = \extScore_{T_1'}(\ext'_1)$, as required. %\todo{LS: $(\ext_1)$ is also an optimal extension of $\cha'$ and the coloring of the leaves not in $\taxa'$ by $\ext_1$, right? Moreover we introduce only monochromatic edges while coloring the vertices not in $T_1[\taxa']$. Thus, with the $\cha$ as defined below, we must have that $\PS{\cha}{T_1} = \extScore_{T_1}(\ext_1)$, right?, or do I miss something?}

Now let the character $\cha$ be the restriction of $\ext_1$ to $\taxa$. Then by construction $\ext_1$ is an extension of $\cha$ on $\taxa$, whence $\PS{\cha}{T_1} \leq \extScore_{T_1}(\ext_1) = \PS{\cha'}{T_1'}$.
Moreover, we must have that $\PS{\cha}{T_1} \geq \PS{\cha'}{T_1'}$ and thus $\PS{\cha}{T_1} = \PS{\cha'}{T_1'}$. Indeed, if $\extScore_{T_1}(\ext) < \extScore_{T_1}(\ext_1)$ for some extension  $\phi$ of $\cha$ to $T_1$, then by considering the restriction of $\ext$ to $T_1[S]$, we can see that $\PS{\chi'}{T_1'} \leq \extScore_{T_1}(\ext) < \extScore_{T_1}(\ext_1)$, a contradiction as $\extScore_{T_1}(\ext_1) =\extScore_{T_1'}(\ext'_1) = \PS{\chi'}{T_1'}$.

Next we show that $\PS{\cha}{T_2} \geq \PS{\cha'}{T_2'}$.
Consider any optimal extension $\ext_2$ of $\cha$ to $T_2$, and take the restriction $\ext_2'$ of this function to $T_2'=T_2|_{\taxa'}$. Then clearly $\extScore_{T_2'}(\ext_2') \leq \extScore_{T_2}(\ext_2)$ and therefore
$\PS{\cha'}{T_2'} \leq \extScore_{T_2'}(\ext_2') \leq \extScore_{T_2}(\ext_2) = \PS{\cha}{T_2}$.

%TODO \todo{maybe just have an observation about restricting to smaller tres cannot increase parsimony score}

%There we have $\PS{\cha'}{T_2'} \leq \extScore_{T_2'}{\ext_2'} \leq \extScore_{T_2}{\ext_2} = \PS{\cha}{T_2}$.
Thus we have  
${\dmp}_{\cha}(T_1,T_2)
%$|\PS{\cha}{T_1} - \PS{\cha}{T_2}| 
\geq \PS{\cha}{T_2} - \PS{\cha}{T_1} \geq  \PS{\cha'}{T_2'} - \PS{\cha'}{T_1'} = 
{\dmp}_{\cha'}(T_1',T_2')$
%|\PS{\cha'}{T_1'} - \PS{\cha'}{T_2'}|$.
%\todo{LS: But this is not the result we state in the lemma, though it may be enough for future purposes. SK: the lemma is a bit more general than I
%originally stated, I changed it a bit: is it now enough?} 
\end{proof}

\begin{theorem}
For any positive integer $r$,
given an instance $(T_1,T_2)$ of \problemName, we can find in polynomial time a character $\cha$ such that 
%\todo{MJ:I inverted the fraction here; I think previously it was wrong. Please check! \\ LS: This looks correct}
%$$1 \leq \frac{{\dmp}_{\cha}(T_1,T_2)}{\dmp(T_1,T_2)} \leq \beta$$
$$1 \leq \frac{\dmp(T_1,T_2)}{{\dmp}_{\cha}(T_1,T_2)} \leq (1+1/r)\alpha$$
%$$\dmp(T_1,T_2)/\alpha \leq |\PS{\cha}{T_1} - \PS{\cha}{T_2}| +1\leq \dmp(T_1,T_2) +1 $$
%for a constant $\alpha (= 560)$.
where $\alpha = 560$.
That is, \problemName\ has a constant factor approximation with approximation ratio $(1+1/r)\alpha$.
\end{theorem}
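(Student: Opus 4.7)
The plan is to exploit the linear kernel in a threshold-based scheme. First, apply Reduction Rules~\ref{rule:cherry} and~\ref{rule:chain} exhaustively to $(T_1, T_2)$ in polynomial time, obtaining a reduced instance $(T_1', T_2')$ on taxa $\taxa'$ with $n' = |\taxa'|$; by Theorem~\ref{thm:dmpReduction}, $\dmp(T_1', T_2') = \dmp(T_1, T_2)$. Any character $\cha'$ produced on $\taxa'$ can be lifted to a character $\cha$ on $\taxa$ with ${\dmp}_{\cha}(T_1,T_2) \geq {\dmp}_{\cha'}(T_1',T_2')$ by applying Lemma~\ref{lem:dmpReconstruction} once per reduction step (there are only polynomially many such steps). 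So it suffices to produce a good witnessing character on the kernel.

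The algorithm then splits into two cases according to a threshold $N := (r+1)\alpha - r$, which is constant once $r$ is fixed. If $n' < N$, invoke the exponential-time exact algorithm of \cite{kelk2017note} on the constant-size instance $(T_1', T_2')$; this runs in $O(1)$ time and yields a character achieving ratio exactly $1$. Otherwise $n' \geq N$, and set $k := \lfloor n'/\alpha \rfloor$; since $n' \geq \alpha k$, Theorem~\ref{thm:kernelBound} produces in polynomial time a witnessing character $\cha'$ with ${\dmp}_{\cha'}(T_1',T_2') \geq k$. Combined with the trivial upper bound $\dmp(T_1,T_2) = \dmp(T_1',T_2') \leq n'-1$, the resulting ratio is at most $\frac{n'-1}{k} \leq \frac{n'-1}{n'/\alpha - 1} = \alpha \cdot \frac{n'-1}{n'-\alpha}$.

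The threshold $N$ is engineered so this last expression is at most $(1+1/r)\alpha$: the inequality $\frac{n'-1}{n'-\alpha} \leq 1 + 1/r$ rearranges via $r(n'-1) \leq (r+1)(n'-\alpha)$ to $n' \geq (r+1)\alpha - r = N$, which is precisely the Case~2 hypothesis. The only subtlety is the edge case $\dmp(T_1,T_2) = 0$; but since $N \geq \alpha$ for any $r \geq 1$, Case~2 automatically ensures $k \geq 1$ and hence $\dmp(T_1,T_2) \geq 1$, so this degeneracy can occur only in Case~1, where the exact algorithm returns a trivially optimal character. I do not expect any serious combinatorial obstacle: the heavy lifting is done by Theorems~\ref{thm:kernelBound}~and~\ref{thm:dmpReduction} and Lemma~\ref{lem:dmpReconstruction}, and the argument reduces to choosing $N$ so that the two cases meet precisely at the target ratio.
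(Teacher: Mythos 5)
Your proposal is correct and follows essentially the same route as the paper: reduce exhaustively, brute-force the constant-size case with the exact algorithm of \cite{kelk2017note}, otherwise apply Theorem~\ref{thm:kernelBound} with $k=\lfloor n'/\alpha\rfloor$ and lift the witnessing character back via Lemma~\ref{lem:dmpReconstruction}. The only (immaterial) differences are that you phrase the case split as a threshold on $n'$ rather than on $k$ and carry out the final arithmetic via $\alpha(n'-1)/(n'-\alpha)$ instead of the paper's $k+1\leq(1+1/r)\dmp_{\cha}$, both of which yield the same $(1+1/r)\alpha$ bound.
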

\begin{proof}
First apply Reduction Rules~\ref{rule:cherry} and~\ref{rule:chain} exhaustively, to derive an irreducible instance $(T_1',T_2')$.
By Theorem~\ref{thm:dmpReduction}, $\dmp(T_1',T_2') = \dmp(T_1,T_2)$.
Let $\taxa'$ be the leaf set of this reduced instance.
Now let $k$ be
the maximum integer such that $|\taxa'| \geq \alpha k$, where $\alpha = 560$.
%= \beta/2$.
If $k < r$, then  we can determine a character $\cha'$ for which ${\dmp}_{\cha'}(T_1',T_2') = \dmp(T_1',T_2')$ exactly in time  $O(1.619^{|X'|}\cdot poly(n)) = O(1.619^{\alpha r}\cdot poly(n))$ using the algorithm of~\cite{kelk2017note}.
Otherwise by Theorem~\ref{thm:kernelBound}, we can in polynomial time construct a character $\cha'$ on $\taxa'$ such that ${\dmp}_{\cha'}(T_1',T_2') \geq k$. 
In either case, by Lemma \ref{lem:dmpReconstruction}
%\todo{LS: Why don't we refer here to Lemma 7 above?}
%\todo{SK: I still need to write this lemma down formally, Lemma \ref{lem:backmap} in the appendix is just the (improved) sketch} 
we can extend $\cha'$ to a character $\cha$ on $\taxa$ such that ${\dmp}_{\cha}(T_1,T_2) \geq  {\dmp}_{\cha'}(T_1',T_2')\geq k$. We return $\cha$.

%(\leen{In what follows we assume for technical reasons} that ${\dmp}_{\cha}(T_1,T_2) \geq 1$. If ${\dmp}_{\cha}(T_1,T_2) = 0$ (which may occur if $|\taxa'| < \alpha$) we instead return any character that achieves ${\dmp}_{\cha}(T_1,T_2) \geq 1$, which can be found in polynomial time by finding a conflicting quartet between $T_1$ and $T_2$ and applying Lemma~\ref{lem:combiningQuartets}.)

It remains to show that 
$\dmp(T_1,T_2)/(1+1/r)\alpha \leq 
%|\PS{\cha}{T_1} - \PS{\cha}{T_2}| 
{\dmp}_{\cha}(T_1,T_2)
\leq \dmp(T_1,T_2)$, from which the theorem follows.
The second inequality is by definition of $\dmp(T_1,T_2)$.
To see the first inequality:
if $k <r$ then by construction $\dmp(T_1,T_2) = \dmp(T_1',T_2') = {\dmp}_{\cha'}(T_1',T_2') \leq {\dmp}_{\cha}(T_1, T_2)  $. So now assume that $k \geq r$, and so by construction ${\dmp}_{\cha'}(T_1',T_2') \geq k \geq r$.
%$1 \leq \frac{{\dmp}_{\cha}(T_1,T_2)}{\dmp(T_1,T_2)} \leq \beta$. 
As stated in the preliminaries, the number of taxa provides an upper bound on the $\dmp$ of any instance. %\todo{proof needed? probably should be observed somewhere early on. SK: it is proven in Fischer and Kelk, I have added it to the preliminaries}. 
Thus, $\dmp(T_1',T_2') \leq |\taxa'|$.
By choice of $k$, we have $|\taxa'| < \alpha(k+1)$.
Then we have

\begin{align*}
   \dmp(T_1,T_2)/\alpha & = \dmp(T_1',T_2')/\alpha \\
   & \leq |\taxa'|/\alpha \\
   & < \alpha(k+1)/\alpha = k+1 \\
   & \leq {\dmp}_{\cha}(T_1,T_2) + 1 \\ 
  % & \leen{\leq (1+1/k){\dmp}_{\cha}(T_1,T_2)} \\
   & \leq (1+1/r){\dmp}_{\cha}(T_1,T_2) 
\end{align*}
%\todo{LS: We could improve the ratio to $(1+1/r)\alpha$ for constant $r$, if we use the FPT algorithm to separately handle the case where ${\dmp}_{\cha}(T_1,T_2) \leq r$.}

Thus 
$\dmp(T_1,T_2)/(1+1/r)\alpha 
%\dmp(T_1,T_2)/\beta = \dmp(T_1,T_2)/2\alpha 
\leq {\dmp}_{\cha}(T_1,T_2)$, as required.
\end{proof}

%\todo{MJ: added argument that $2\alpha$ can be improved to $(1+1/r)$. Should this be in main proof? \\ LS: I think it is a good idea to include this in the theorem (and therefore in the proof)}
%\markj{We note that the $\beta = 2\alpha$ ratio can be improved to $(1+1/r)\alpha$ for any positive integer $r$, at the cost of increasing the running time.
%After construction the reduced instance $(T_1',T_2')$ on $X'$, if $|X'| \leq \alpha r$
%then we can calculate $\dmp(T_1,T_2)$ exactly in time  $O(1.619^{\alpha r}\cdot poly(n))$ using the algorithm of~\cite{kelk2017note}. Then as $\dmp(T_1,T_2) = \dmp(T_1',T_2')$ we can in polynomial time construct a character $\cha$ for which ${\dmp}_{\cha}(T_1,T_2) \geq \dmp(T_1', T_2')$} \leen{using the algorithm of Theorem~\ref{thm:dmpReduction}.} \todo{LS: Thm ~\ref{thm:dmpReduction} does not really give an algorithm} \markj{Note that this implies ${\dmp}_{\cha}(T_1,T_2) = \dmp(T_1,T_2)$ so in this case  $\frac{\dmp(T_1,T_2)}{{\dmp}_{\cha}(T_1,T_2)} = 1$.
%Otherwise, by Theorem~\ref{thm:kernelBound} we may assume $\dmp(T_1',T_2') \geq r$ and that we can find a character $\cha$ for which  ${\dmp}_{\cha}(T_1,T_2) \geq r$.
%Then in the proof above, we have $\dmp(T_1,T_2)/\alpha \leq {\dmp}_{\cha}(T_1,T_2) + 1 \leq (1+1/r){\dmp}_{\cha}(T_1,T_2)$, and so 
%$\frac{\dmp(T_1,T_2)}{{\dmp}_{\cha}(T_1,T_2)} \leq (1+1/r)\alpha$.}

\subsection{Bounding the distance between $\dtbr$ and $d_{MP}$
%\todo{SK: I just noticed, it doesn't seem that we use the fact anywhere the the dTBR kernel is $20k-12$? Is that right? We do use it as a lower bound (see conclusion) but not constructively? Interesting that this information doesn't help! MJ:I think this is right!}
}
\label{sec:dtbr}

\emph{Tree Bisection and Reconnection} (TBR) distance, denoted
$\dtbr$, is a distance measure defined on two unrooted binary phylogenetic trees $T_1$, $T_2$. It is defined as the minimum number of ``TBR-moves'' required to transform $T_1$ into $T_2$ (or vice-versa): it is a metric \cite{AllenSteel2001}. Informally, a TBR-move consists of deleting an edge of a tree and then reconnecting  the two resulting components via a new edge. This definition is motivated by the way software for constructing phylogenetic trees heuristically navigates through tree space in search of better trees \cite{katherine2017shape}. However, for algorithmic and analytical purposes $d_{TBR}$ is most interesting because of its equivalence to the \emph{agreement forest} abstraction.
An agreement forest of $T_1$ and $T_2$ on the same set of taxa $X$ is a partition of $X$ into blocks $S_{1}, S_{2}\ldots,{S_t}$ such that: (1) for each $i$, $T_1|_{S_i} = T_2|_{S_i}$; (2)
$S_1, S_2, \dots, S_t$ are spanning-disjoint in $T_1$ and in $T_2$.
%for each $i \neq j$, $T_1[S_i]$ and $T_1[S_j]$ are vertex disjoint, and $T_2[S_i]$ and $T_2[S_j]$} are vertex disjoint.
An \emph{(unrooted) maximum agreement forest} is an agreement forest with a minimum number of blocks, and $d_{TBR}(T_1,T_2)$ is equal to this minimum, minus 1 \cite{AllenSteel2001}. A maximum agreement forest for the two trees in Figure~\ref{fig:exampleMP} consists of three blocks $\{a,b\}$, $\{f,g\}$ and $\{c,d,e\}$, so here $d_{TBR}$ is 2.

The characterization of $d_{TBR}$ via agreement forests is significant, because agreement forests have opened the door to a large number of positive FPT and approximation results in the phylogenetics literature, and they have also attracted attention from outside phylogenetics. We refer to \cite{whidden2013fixed,downey2013fundamentals,van2016hybridization,chen2016approximating,bordewich2017fixed,shi2018parameterized,extremal2019} for recent results.
%(see e.g. \cite{downey2013fundamentals,extremal2019})}.
Moreover, a number of other problems have been shown to be FPT when parameterized by $d_{TBR}$, by leveraging properties of the $d_{TBR}$ kernel \cite{kelk2016reduction} and/or showing that, via agreement forests, the treewidth of a certain auxiliary graph structure is bounded by a function of $d_{TBR}$ (see the next section) \cite{kelk2015}. $d_{TBR}$ is a lower bound on many phylogenetic dissimilarity measurements \cite{kelk2015}, which helps to prove FPT results for these larger parameters, but what about $d_{MP}$?   It has previously been shown that $\dmp(T_1,T_2) \leq \dtbr(T_1,T_2)$ for any pair of trees $T_1, T_2$ \cite{fischerNonbinary,dMP-moulton}. However, the possibility remained that $d_{MP}$ could be arbitrarily smaller than $d_{TBR}$, and this hinders our ability to bind $d_{MP}$ to other phylogenetic parameters. Our contribution is to show that $\dmp$ and $\dtbr$ are in fact within a constant factor of each other: $\dtbr(T_1,T_2) \leq 2\alpha \dmp(T_1,T_2)$.

To show this, we use the fortunate fact that Reduction Rules~\ref{rule:cherry} and~\ref{rule:chain}, which we used to prove the kernel bound for \problemName, preserve $\dtbr$ as well as $\dmp$
%and moreover lead to a linear kernel\todo{SK: Do we actually use this? MJ: I think not!}
for $\dtbr$. The following theorem is, modulo a small modification, due to \cite{AllenSteel2001}.

\begin{theorem}\label{thm:tbrReduction}
Let $(T_1',T_2')$ be a pair of phylogenetic trees on $\taxa'$ derived from $(T_1, T_2)$ by an application of Reduction Rule~\ref{rule:cherry} or~\ref{rule:chain}.
Then $\dtbr(T_1', T_2') = \dtbr(T_1, T_2)$.
%Moreover, if $(T_1',T_2')$ are irreducible under Reduction Rules~\ref{rule:cherry} and~\ref{rule:chain}, then $|\taxa'| \leq 20\dtbr(T_1',T_2')$. \todo{Just to double check: this kerenl size holds for these particular reduction rules, yes? Including the fact that we only reduce chains of length greater than $4$?}
\end{theorem}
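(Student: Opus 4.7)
The plan is to invoke the maximum agreement forest (MAF) characterization $\dtbr(T_1,T_2) = |F|-1$, where $F$ ranges over MAFs of $(T_1,T_2)$, and to exhibit size-preserving maps in both directions between MAFs of $(T_1,T_2)$ and MAFs of $(T_1',T_2')$. This reduces the theorem to two separate cases, one per reduction rule.

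For Reduction Rule~\ref{rule:cherry} (common cherry $\{x,y\}$): the first step is to argue that some MAF of $(T_1,T_2)$ places $x$ and $y$ in the same block. Suppose not; then $x$ lies in some block $S_i$ and $y$ in some block $S_j$ with $i\neq j$. Because $\{x,y\}$ forms a cherry in both trees, $x$ is adjacent in both $T_1[S_i\cup\{y\}]$ and $T_2[S_i\cup\{y\}]$ to the same interior vertex, so one checks that merging $\{x\}$ (or all of $S_i$) into $S_j$ preserves agreement $T_1|_{S_j\cup\{x\}}=T_2|_{S_j\cup\{x\}}$ and spanning-disjointness. The merged family has strictly fewer blocks, contradicting maximality unless $S_i=\{x\}$ was already singleton and the merge was available all along; in either case we obtain a MAF with $x,y$ co-located in some block $S$. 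Deleting $x$ from $S$ (and suppressing any resulting degree-$2$ vertex in the induced subtrees) gives a MAF of $(T_1',T_2')$ of the same size. Conversely, taking any MAF of $(T_1',T_2')$ and inserting $x$ into the block containing $y$ yields a MAF of $(T_1,T_2)$ of the same size, since the cherry structure ensures both agreement and spanning-disjointness are maintained.

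For Reduction Rule~\ref{rule:chain} (common chain $x_1,\dots,x_r$ with $r\geq 5$, shortened to length $4$): the approach is to invoke the classical chain reduction of Allen and Steel~\cite{AllenSteel2001}, which preserves $\dtbr$ when a common chain of length at least some small constant is trimmed by one taxon. Iterating this one-step reduction from $x_r$ down to $x_5$ brings the common chain to length $4$ while preserving $\dtbr$ at every step. The ``small modification'' alluded to in the theorem is exactly the choice of cutoff: Allen and Steel originally reduce to length $3$, whereas our rule only reduces to length $4$, which is a strictly weaker operation and hence inherits the preservation property.

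The main obstacle is the bookkeeping for the MAF under insertion/removal of leaves: one must verify both the induced-subtree-agreement $T_1|_{S_i}=T_2|_{S_i}$ and the spanning-disjointness of the blocks after each local modification. For the cherry rule this follows directly from the definition of a cherry. For the chain rule the key invariant is that the common path $p_1,\dots,p_r$ lies inside the spanning subtree of whichever block(s) contain chain taxa in both trees, so inserting additional $x_j$ along that path never creates a new pending edge conflicting with another block's spanning subtree---which is precisely what makes common chains compressible. Because both reduction rules can be applied iteratively and Allen--Steel's proof is already formalized in the literature, the role of this theorem's proof is mainly to translate the classical statement into the present notation and to justify the minor threshold change.
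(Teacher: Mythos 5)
Your overall strategy differs from the paper's. The paper does not touch agreement forests directly for either rule: it lets $(T_1'',T_2'')$ be the instance obtained by exhaustively applying the \emph{original} Allen--Steel reductions (which shrink common chains to length $3$), notes that the leaf set of any instance produced by Reduction Rule~\ref{rule:cherry} or~\ref{rule:chain} is sandwiched between $\taxa''$ and $\taxa$, and then invokes Lemma~2.1 of \cite{AllenSteel2001} (monotonicity: $\dtbr$ is non-increasing under restriction to a subset of the taxa) to get $\dtbr(T_1'',T_2'') \leq \dtbr(T_1',T_2') \leq \dtbr(T_1,T_2)$; since the two endpoints are equal by Theorem~3.4 of \cite{AllenSteel2001}, the whole chain collapses to equalities. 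Your treatment of the cherry rule by direct surgery on a maximum agreement forest (relocating $x$ into the block of $y$, using that the only new edges added to a spanning subtree are the two pendant edges at $x$ and $y$) is a legitimate, more self-contained alternative to citing Allen--Steel's subtree reduction; it buys independence from the reference at the cost of the bookkeeping you acknowledge. (One small wording issue: a ``maximum'' agreement forest here \emph{minimises} the number of blocks, so your merging step contradicts minimality, not maximality; and moving $x$ from a non-singleton $S_i$ into $S_j$ does not reduce the block count, it merely produces another MAF with $x,y$ co-located, which is all you need.)

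The genuine gap is in your handling of Reduction Rule~\ref{rule:chain}. Allen and Steel do not prove a ``trim the chain by one taxon'' lemma that you could iterate; their Theorem~3.4 replaces a maximal common chain wholesale by a chain of three leaves. So the result you propose to invoke is not available in the form you invoke it, and your closing inference --- ``reducing to length $4$ is a strictly weaker operation and hence inherits the preservation property'' --- is not a valid step on its own: deleting \emph{fewer} leaves gives an instance strictly between the original and the fully reduced one, and without further argument its $\dtbr$ could a priori lie anywhere. What makes this rigorous is exactly the monotonicity lemma the paper uses: $\dtbr(T_1|_{\taxa'}, T_2|_{\taxa'}) \geq \dtbr(T_1|_{\taxa''}, T_2|_{\taxa''})$ for $\taxa'' \subseteq \taxa'$, combined with $\dtbr(T_1|_{\taxa'}, T_2|_{\taxa'}) \leq \dtbr(T_1,T_2)$ and the equality of the two extremes. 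Once you add that lemma (whose proof is indeed just restriction of an agreement forest), your argument closes; without it, the chain case is unproven.
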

\begin{proof}
%\todo{MJ: Had a go at "aligning notation with rest of paper" (though in retrospect it was probably fine before)}
Theorem 3.4 of \cite{AllenSteel2001} shows that $d_{TBR}$ is preserved under reduction rules similar to  Reduction Rules~\ref{rule:cherry} and~\ref{rule:chain}, except that common chains are reduced to length $3$ instead of $4$. For a pair of trees $T_1,T_2$ on $\taxa$, let $(T_1'', T_2'')$ with leaf set $\taxa''$ be the instance derived from $(T_1, T_2)$ by exhaustively applying these reduction rules. Also let $(T_1',T_2')$ with leaf set $\taxa'$ be the instance derived from $(T_1,T_2)$ by exhaustively applying  Reduction Rules~\ref{rule:cherry} and~\ref{rule:chain}. Observe that we may assume $\taxa'' \subseteq \taxa' \subseteq X$, since any leaf deleted in an application of   Reduction Rule~\ref{rule:cherry} or~\ref{rule:chain} can also be deleted by an application of one of the reduction rules in~\cite{AllenSteel2001}.
Furthermore by Lemma 2.1 of \cite{AllenSteel2001},  $d_{TBR}$ distance is non-increasing on subtrees induced by subsets of $\taxa$, which implies that $d_{TBR}(T_1'', T_2'') \leq  d_{TBR}(T_1', T_2') \leq d_{TBR}(T_1, T_2)$. As Theorem 3.4 of \cite{AllenSteel2001} states that $d_{TBR}(T_1'', T_2'') = d_{TBR}(T_1,T_2)$,  the chain of inequalities becomes a chain of equalities and hence $d_{TBR}(T_1', T_2') = d_{TBR}(T_1, T_2)$.

\end{proof}

%\steven{\emph{Notation needs aligning with rest of paper!} Theorem 3.4 of \cite{AllenSteel2001} shows that collapsing a common subtree, or reducing a common chain to length 3, preserves $d_{TBR}$. In the current article we only reduce chains to length 4. However, if $X''$ is the leaf set of the trees where a chain has been reduced to length 3, and $X'$ is the leaf set of the trees where the chain has been reduced to length 4, we have $X'' \subseteq X' \subseteq X$. Lemma 2.1 of \cite{AllenSteel2001}, which proves that $d_{TBR}$ distance is non-increasing on subtrees induced by subsets of $X$, then gives $d_{TBR}(T_1|X'', T_2|X'') \leq  d_{TBR}(T_1|X', T_2|X') \leq d_{TBR}(T_1, T_2)$.  Given that, by Theorem 3.4 of \cite{AllenSteel2001}, 
%$d_{TBR}(T_1|X'', T_2|X'') = d_{TBR}(T_1,T_2)$,  the chain of inequalities becomes a %chain of equalities. Hence, $d_{TBR}(T_1', T_2') = d_{TBR}(T_1, T_2)$, because $T_1' = T_1|X'$ and $T_2' = T_2|X'$.}
%\end{proof}

% Is this the right place for
% this bit?
%It is well-known that $\dtbr(T_1,T_2) \leq \steven{|\taxa|-3}$ for any pair of trees $T_1,T_2$ on $\taxa$ \cite{AllenSteel2001}.\todo{SK: I think this sentence needs to go somewhere else? LS: Probably to thrash?}

\begin{theorem}\label{lem:tbrApprox}
%\todo{LS: Since this is the most important result of the subsection, I would call it a Theorem}
For any pair of phylogenetic trees $T_1, T_2$ such that $T_1\neq T_2$, whence $\dmp(T_1,T_2)\geq 1$,
$$1 \leq \frac{\dtbr(T_1,T_2)}{\dmp(T_1,T_2)} \leq 2\alpha.$$
\end{theorem}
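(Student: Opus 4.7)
The lower bound is immediate from the known inequality $\dmp(T_1,T_2) \le \dtbr(T_1,T_2)$ of \cite{fischerNonbinary,dMP-moulton}, which is cited in the paragraph preceding the theorem; it gives $\dtbr/\dmp \ge 1$ for any pair of trees, regardless of whether $T_1=T_2$.

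For the upper bound, my plan is to chain together the two preservation theorems and the linear kernel. First I would apply Reduction Rules~\ref{rule:cherry} and~\ref{rule:chain} exhaustively to $(T_1,T_2)$ to obtain an irreducible pair $(T_1',T_2')$ on some taxon set $\taxa'$. The key enabler is that Theorem~\ref{thm:dmpReduction} and Theorem~\ref{thm:tbrReduction} together guarantee that \emph{both} $\dmp$ and $\dtbr$ are unchanged by these reductions, so it suffices to bound the ratio on the reduced instance. Without the latter theorem the argument would collapse, so its statement (together with its easy derivation from \cite{AllenSteel2001}) is really doing the essential work.

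Next I would invoke the contrapositive of Theorem~\ref{thm:kernelBound}: setting $d := \dmp(T_1',T_2')$ and taking $k = d+1$ in that theorem, irreducibility forces $|\taxa'| < \alpha(d+1)$. Combining this with the trivial bound $\dtbr(T_1',T_2') \le |\taxa'|-1$ (realized by the agreement forest whose blocks are singletons), I obtain $\dtbr(T_1,T_2) = \dtbr(T_1',T_2') < \alpha(d+1) - 1$. Dividing through by $d = \dmp(T_1,T_2) \ge 1$ and simplifying gives
\[ \frac{\dtbr(T_1,T_2)}{\dmp(T_1,T_2)} < \alpha\Bigl(1+\tfrac{1}{d}\Bigr) - \tfrac{1}{d} \le 2\alpha - \tfrac{1}{d} < 2\alpha, \]
which is the required bound.

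I do not anticipate any serious obstacle here: the whole argument is effectively a corollary of the linear kernel packaged with the fact that $\dtbr$ is preserved by the very same two reduction rules used to prove the kernel. The only small items to double-check are that the $|\taxa'|-1$ upper bound on $\dtbr$ holds unconditionally (it does, via the all-singletons agreement forest) and that the assumption $T_1 \ne T_2$, which guarantees $d \ge 1$, is used only at the final division step to convert the additive bound into a multiplicative one.
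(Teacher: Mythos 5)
Your proposal is correct and follows essentially the same route as the paper: apply the two reduction rules, use the preservation of both $\dmp$ and $\dtbr$ under them, invoke the contrapositive of Theorem~\ref{thm:kernelBound} to get $|\taxa'| < \alpha(\dmp+1)$, bound $\dtbr$ by the reduced taxon count, and use $\dmp \ge 1$ to convert $\alpha(\dmp+1)$ into $2\alpha\dmp$. The only cosmetic difference is that the paper cites the bound $\dtbr(T_1',T_2') \le |\taxa'|-3$ where you use the all-singletons agreement forest to get $|\taxa'|-1$; both suffice.
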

\begin{proof}
Let $(T_1', T_2')$ be the pair of trees derived from $(T_1,T_2)$ by exhaustively applying Reduction Rules~\ref{rule:cherry} and~\ref{rule:chain}, and let $\taxa'$ be the leaf set of $T_1'$ and $T_2'$.
It is well-known that $\dtbr(T_1',T_2') \leq |\taxa'|-3$ %for any pair of trees $T_1,T_2$ on $\taxa$ 
\cite{AllenSteel2001}.
Then by  Theorems~\ref{thm:kernelBound},~\ref{thm:dmpReduction} and~\ref{thm:tbrReduction}, 
%\todo{MJ: This has same problem as with the $\dmp$ approximation - $|X'|$ can be a bit more than $\alpha \dmp(T_1',T_2')$. If $\alpha k < |X'| < \alpha(k+1)$ then (using our current techniques) we can't guarantee that $\dmp(T_1,T_2) > k$, only $\geq k$.}
%\todo{LS: At the expense of a factor $\alpha$ I have countered Mark's concern.} 
\begin{align*}
\dtbr(T_1,T_2) & = \dtbr(T_1',T_2')  < |X'| \\ &  <  \alpha(\dmp(T_1',T_2')+1) \\ & \leq 2\alpha\ \dmp(T_1,T_2). 
\end{align*}
Using $\dmp(T_1,T_2) \leq \dtbr(T_1,T_2)$ \cite[Lemma 2.1]{dMP-moulton}, we have 
    $\dmp(T_1,T_2) \leq \dtbr(T_1,T_2) \leq 2\alpha\dmp(T_1,T_2)$, which, dividing by $\dmp(T_1,T_2)$, proves the theorem.
%$$1 \leq \frac{\dtbr(T_1,T_2)}{\dmp(T_1,T_2)} \leq \alpha$$
%as required.
\end{proof}

\subsection{The treewidth of the display graph}
\label{sec:tw}

Let $G=(V,E)$ be an undirected graph. A \emph{tree decomposition} of $G$ consists of a multi-set of \emph{bags}, $B = \{B_1, \ldots, B_t\}$ where each
$B_i \subseteq V$, and a tree $T$ whose nodes
are in bijection with $B$, such that: (1) Every vertex $v \in V$ is in at least one bag; (2) for every edge $\{u,v\}$, at least one bag contains both $u$ and $v$, and (3) for every vertex $v \in V$, the bags of $T$ that contain $v$ induce a connected subtree of $T$. The \emph{width} of the tree decomposition is equal to the size of its largest bag, minus one, and the \emph{treewidth} of $G$ is the minimum width, ranging over all tree decompositions $T$ of $G$ \cite{Bodlaender96}. Treewidth derives its importance in combinatorial optimization from the fact that many NP-hard problems on graphs become fixed parameter tractable when parameterized by the treewidth of the graph \cite{bodlaender1994tourist}.

Given two phylogenetic trees $T_1, T_2$ on $X$, where $|X| \geq 3$, the \emph{display graph} of $T_1$ and $T_2$, denoted $D(T_1,T_2)$, is the graph obtained by identifying the leaves of $T_1$ and $T_2$ with the same label. A sequence of articles have studied the treewidth of display graphs, expressed as a function of various phylogenetic parameters, and used this to prove FPT results for a number of NP-hard phylogenetics problems using Courcelle's Theorem \cite{bryant2006compatibility,kelk2015,janssen2019treewidth} and explicit dynamic programming algorithms running over tree decompositions of the display graph \cite{baste2017efficient}. However, the question remained whether the treewidth of the display graph, denoted by $tw(D(T_1,T_2))$ could be bounded by a function of $d_{MP}(T_1,T_2)$ \cite{kelk2016reduction}.

The answer is emphatically yes: here we show, by leveraging the fact that $d_{MP}$ and $d_{TBR}$ are within a constant factor of each other, that the display graph has treewidth bounded by a linear function of $\dmp(T_1,T_2)$. 

\begin{theorem}
%\todo{LS: Since this is the most important result of the subsection, I would call it a Theorem}
For two phylogenetic trees $T_1,T_2$ on $\taxa$,  $$tw(D(T_1,T_2)) \leq 2\alpha\dmp(T_1,T_2) + 2$$
\end{theorem}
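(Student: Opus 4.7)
The plan is to combine two ingredients: an existing treewidth-versus-TBR bound from the literature, and the constant-factor relationship between $\dtbr$ and $\dmp$ established in Theorem~\ref{lem:tbrApprox}. The target inequality has an additive ``$+2$'' term that is suspiciously identical to the constant appearing in known bounds of the form $tw(D(T_1,T_2)) \leq \dtbr(T_1,T_2) + c$ (see the display graph results cited as \cite{kelk2015,janssen2019treewidth}), so the strategy is to simply chain these two results.

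First, I would invoke the known result that $tw(D(T_1,T_2)) \leq \dtbr(T_1,T_2) + 2$ for any pair of unrooted binary phylogenetic trees on the same taxon set. This is where an agreement forest with $\dtbr(T_1,T_2)+1$ blocks is used to build an explicit tree decomposition of $D(T_1,T_2)$: each block $S_i$ induces a pair of subtrees $T_1|_{S_i}, T_2|_{S_i}$ whose union in $D(T_1,T_2)$ has a tree decomposition of width at most $2$ (as each $T_j|_{S_i}$ itself has treewidth $1$), and these small decompositions can be glued along a backbone carrying the ``cut'' vertices/edges that correspond to the blocks of the agreement forest, contributing the additive term. I would just cite this lemma rather than re-derive it.

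Second, I would apply Theorem~\ref{lem:tbrApprox}, which yields $\dtbr(T_1,T_2) \leq 2\alpha\, \dmp(T_1,T_2)$ (whenever $T_1 \neq T_2$; the case $T_1=T_2$ is trivial since then $D(T_1,T_2)$ is a union of two trees sharing leaves and its treewidth is at most $2$, while $\dmp=0$ and the stated bound $2\alpha\cdot 0 + 2 = 2$ still holds). Chaining the two inequalities gives
\[
tw(D(T_1,T_2)) \;\leq\; \dtbr(T_1,T_2) + 2 \;\leq\; 2\alpha\, \dmp(T_1,T_2) + 2,
\]
which is the desired bound.

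The only conceptual obstacle is locating the precise additive constant in the $tw$-vs-$\dtbr$ bound from the literature; if the cited bound turns out to be $tw(D(T_1,T_2)) \leq \dtbr(T_1,T_2) + c$ for some other small constant $c$, the final inequality should be restated as $tw(D(T_1,T_2)) \leq 2\alpha\, \dmp(T_1,T_2) + c$. Otherwise the argument is essentially a one-line composition made possible by the kernel, highlighting exactly the ``gateway'' role of the kernelization emphasised in the introduction.
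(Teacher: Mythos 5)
Your proposal is correct and is essentially identical to the paper's proof: it cites the bound $tw(D(T_1,T_2)) \leq \dtbr(T_1,T_2) + 2$ from \cite{kelk2015} and chains it with Theorem~\ref{lem:tbrApprox}. Your explicit handling of the degenerate case $T_1 = T_2$ is a small but welcome addition that the paper omits.
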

\begin{proof}
It was shown in \cite{kelk2015} that $tw(D(T_1,T_2)) \leq \dtbr(T_1,T_2) + 2$.
As Theorem~\ref{lem:tbrApprox} shows $\dtbr(T_1,T_2) \leq 2\alpha\dmp(T_1,T_2)$ the theorem follows. 
%we have that $tw(D(T_1,T_2)) \leq \alpha\dmp(T_1,T_2) + 2$.
\end{proof}

Note that Theorem 7.2 of \cite{kelk2017treewidth} shows an 
infinite family of trees where the % SK: I changed this sentence
treewidth of the display graph is 3
but $d_{MP}$ is unbounded.

\section{Conclusion}
\label{sec:concl}

A natural question is how far the analysis can be tightened, or changed, to improve the existing bound on the size of the kernel. In any case, it can be shown that for these two reduction rules a bound smaller than $20k-12$ is not possible. That is because the family of fully-reduced instances described in \cite{tightkernel} have exactly $15k-9$ taxa, where in this specific case $k=d_{TBR}=d_{MP}$. By replacing the length-3 chains with length-4 chains in this family we obtain the bound $20k-12$. 
%It seems important \todo{LS: A very weak statement} to close the gap between these lower and upper bounds as much as possible. 
We expect that, \emph{in practice}, the achieved reduction on realistic trees will be far superior to the bounds proven in this paper.

From the perspective of algorithm design it would be useful to design an explicit algorithm with FPT runtime that does not rely on kernelization; for example, by branching or by dynamic programming over an appropriately defined decomposition. Similarly, in the quest for small constant approximation factors it would be interesting to design polynomial-time  approximation algorithms that do not rely on kernelization. It is unlikely that through kernelization we will be able to achieve such truly small constant ratios.

The precise relationship between $d_{MP}$ and $d_{TBR}$ remains intriguing. Although we have now established that they are within a constant factor of each other, we are still a long way from proving or disproving the conjecture that $d_{MP} \geq (1/2)d_{TBR}$ \cite{kelk2016reduction}. An infinite family of examples is known where $d_{MP} = (1/2)d_{TBR} + o(1)$ \cite[Theorem 7.1]{dMP-moulton}, and small examples are known where $d_{MP} = (1/2)d_{TBR}$ (see e.g. Figure~\ref{fig:exampleMP}, based on \cite[Figure 5]{kelk2016reduction}), so $d_{MP} \geq (1/2)d_{TBR}$ would be the best possible bound.

On a slightly different note, recent publications have reduced the $d_{TBR}$ kernel size from $28k$ to $15k-9$ \cite{tightkernel}, and then to $11k-9$ \cite{kelk2019new}. The $11k-9$ kernel augments the two reduction rules discussed in this article, with five new reduction rules. Which of these new reduction rules work (possibly in a modified form) for $d_{MP}$, and how might this help us obtain a smaller linear kernel for $d_{MP}$? 

Finally, we note that there are several slight variations of $d_{MP}$ in the literature. These include the ``asymmetric" version $\damp(T_1, T_2) := \max_{\cha} (\PS{\cha}{T_1} - \PS{\cha}{T_2})$, in which $T_1$ is required to have the higher parsimony score,
and the ``restricted states" version $\dmp^2(T_1,T_2) := \max_{\cha}{\dmp}_{\cha}(T_1,T_2)$, where the maximum is taken over all characters with at most $2$ states~\cite{kelk2015,kelk2014complexity}.
Many of the results in this article will go through for $\damp(T_2,T_1)$, as the characters we construct consistently give a larger score to $T_2$. 
It is less obvious how our results impact on $\dmp^2$. In particular, it is not immediately clear whether the reduction rules described in \cite{kelk2016reduction} go through for $\dmp^2$, or how one would prove an analogue of Lemma~\ref{lem:wellBehavedSets} for $\dmp^2$.
Relatedly, it is unclear how much smaller 
$\dmp^2$
%the 2-state variant of $d_{MP}$ 
can be than $d_{MP}$ itself. Specifically, how important are additional states when attempting to maximize the parsimony distance between trees? It is known that $7d_{MP}-5$ states are sufficient to obtain a character that witnesses $d_{MP}$ \cite{boes2016linear}, but it is unclear what happens below this bound.

%\steven{Finally, it could be illuminating to attempt to translate the results in this paper into the broader literature on flows and cuts. Specifically, it is well-known that computation of the parsimony score of a character on a tree (or more generally, a graph), is closely related to the multiterminal cut problem \cite{fischer2015computing} - each colour becomes a terminal, and bichromatic edges become cut edges - and links between parsimony and the literature on flows/cuts date back decades \cite{erdos1992}. Computation of $d_{MP}$ is, at the most abstract level, asking for a partition which induces small cuts in one graph, and large cuts in the other, in order to ``force the two graphs apart'' in some well-defined sense. Are there analogues of these ``partition forcing'' problems in the combinatorial optimization literature, and could efficient algorithms for computing $d_{MP}$ be helpful for solving them?}

\section{Acknowledgements}
This work was supported by the Netherlands Organisation for Scientific Research (NWO) through Gravitation Programme Networks 024.002.003.

%\steven{Finally, it }

%Finally, we note that there are several slight variations of $d_{MP}$ in the literature. Many of the results in this article will go through for the ``asymmetric'' version of $d_{MP}$ (CHECK TERMINOLOGY, CHECK ARGUMENT). This is because, in our analyses, we have a free choice whether to apply the analysis from the perspective of $T_1$ or $T_2$. \todo{LS: I don't see any relation with the previous and the following sentence} It is less obvious how our results impact on the variant of $d_{MP}$ when we are restricted to characters comprising at most 2 states \cite{kelk2015,kelk2014complexity}. It is not immediately clear whether the reduction rules described in \cite{kelk2016reduction} go through for the 2-state case, and adapting our upper bound on the kernel size to the \todo{LS: The 2-state case (or small \# of states cases) raise quite a natural and interesting mathematical question. It is practically pretty irrelevant, I guess?}
%2-state case could also introduce complexities. (CHECK THIS AT THE END) Relatedly, it is unclear how much smaller the 2-state variant of $d_{MP}$ can be than $d_{MP}$ itself. Specifically, how important are additional states when attempting to maximize the parsimony distance between trees? 

\bibliographystyle{plain}
\bibliography{bibliographyTOP}

\newpage

\appendix

\section{Finding optimal $d_1,d_2$}\label{sec:optimalds}
%\todo{LS: We should not include this appendix if we cannot prove optimality mathematically}
For the sake of completeness, we here argue that the choice of $d_1 = 4, d_2 = 5$ gives the minimum value of $\alpha = 2\cdot(9(d_1+d_2)-11)\cdot \lceil\frac{2d_1d_2+d_1}{d_1d_2-d_1-d_2} \rceil$ in Theorem~\ref{thm:kernelBound}.
Let $c = 9(d_1+d_2)-11$ and $t' = \lceil\frac{2d_1d_2+d_1}{d_1d_2-d_1-d_2} \rceil$, so that $\alpha = 2ct'$.
For $d_1 = 4, d_2 = 5$, we have $c = 81-11 = 70$ and $t' = \lceil \frac{44}{11}\rceil = 4$, and so $\alpha = 2 \cdot 70 \cdot 4 = 560$.
Figures~\ref{fig:c_vals},~\ref{fig:t_vals} and~\ref{fig:alpha_vals} gives the possible values of $c, t'$ and $\alpha$ respectively, for $d_1,d_2$ taking values between $2$ and $9$ (recall that $d_1,d_2$ must be at least $2$, as Lemma~\ref{lem:kernelBound} requires $d_1d_2 - d_1 - d_2 > 0$).

By inspection of Figure~\ref{fig:alpha_vals}, it is easy to see that the minimum possible value of $\alpha$ for $2 \leq d_1,d_2 \leq 9$ is $560$.
For larger values of $d_1,d_2$, we argue as follows: Observe that  $t' = \lceil\frac{2d_1d_2+d_1}{d_1d_2-d_1-d_2} \rceil$ is at least $3$ for any $d_1,d_2$, as $\frac{2d_1d_2+d_1}{d_1d_2-d_1-d_2} > \frac{2d_1d_2}{d_1d_2} = 2$. If one of $d_1,d_2$ is at least $10$, then $c = 9(d_1 + d_2) - 11 \geq 9(10+2) - 11 = 97$. But then for such values we would have $\alpha = 2ct' \geq 2\cdot97\cdot3 = 582$. Thus, the smallest value of $\alpha$ is in fact $560$, achieved for $d_1 = 4, d_2 = 5$.

\begin{figure}[h]
    \centering
\begin{tabular}{cc|c|c|c|c|c|c|c|c|l}
\cline{3-10}
& & \multicolumn{8}{ c| }{$d_2$} \\ \cline{3-10}
& & 2 & 3 & 4 & 5 & 6 & 7 & 8 & 9 \\ \cline{1-10}
\multicolumn{1}{ |c  }{\multirow{8}{*}{$d_1$} } &
\multicolumn{1}{ |c| }{2}  & -  & 34 & 43 & 52 & 61 & 70 & 79 & 88   \\ \cline{2-10}
\multicolumn{1}{ |c  }{}                        &
\multicolumn{1}{ |c| }{3}  & 34 & 43 & 52 & 61 & 70 & 79 & 88 & 97    \\ \cline{2-10}
\multicolumn{1}{ |c  }{}                        &
\multicolumn{1}{ |c| }{4}  & 43 & 52 & 61 & 70 & 79 & 88 & 97 & 106 \\ \cline{2-10}
\multicolumn{1}{ |c  }{}                        &
\multicolumn{1}{ |c| }{5} & 52 & 61 & 70 & 79 & 88 & 97 & 106 & 115  \\ \cline{2-10}
\multicolumn{1}{ |c  }{}                        &
\multicolumn{1}{ |c| }{6}  & 61 & 70 & 79 & 88 & 97 & 106 & 115 & 124  \\ \cline{2-10}
\multicolumn{1}{ |c  }{}                        &
\multicolumn{1}{ |c| }{7}  & 70 & 79 & 88 & 97 & 106 & 115 & 124 & 133  \\ \cline{2-10}
\multicolumn{1}{ |c  }{}                        &
\multicolumn{1}{ |c| }{8}  & 79 & 88 & 97 & 106 & 115 & 124 & 133 & 142  \\ \cline{2-10}
\multicolumn{1}{ |c  }{}                        &
\multicolumn{1}{ |c| }{9}  & 88 & 97 & 106 & 115 & 124 & 133 & 142 & 151 \\ \cline{1-10}
\end{tabular}
    \caption{Values for $c = 9(d_1+d_2)-11$}
    \label{fig:c_vals}
\end{figure}

\begin{figure}[h]
    \centering
\begin{tabular}{cc|c|c|c|c|c|c|c|c|l}
\cline{3-10}
& & \multicolumn{8}{ c| }{$d_2$} \\ \cline{3-10}
& & 2 & 3 & 4 & 5 & 6 & 7 & 8 & 9 \\ \cline{1-10}
\multicolumn{1}{ |c  }{\multirow{8}{*}{$d_1$} } &
\multicolumn{1}{ |c| }{2}  & -  & 14 & 9 & 8 & 7 & 6 & 6 & 6   \\ \cline{2-10}
\multicolumn{1}{ |c  }{}                        &
\multicolumn{1}{ |c| }{3}  & 15 & 7 & 6 & 5 & 5 & 5 & 4 & 4    \\ \cline{2-10}
\multicolumn{1}{ |c  }{}                        &
\multicolumn{1}{ |c| }{4} & 10 & 6 & 5 & 4 & 4 & 4 & 4 & 4 \\ \cline{2-10}
\multicolumn{1}{ |c  }{}                        &
\multicolumn{1}{ |c| }{5}  & 9 & 5 & 5 & 4 & 4 & 4 & 4 & 4  \\ \cline{2-10}
\multicolumn{1}{ |c  }{}                        &
\multicolumn{1}{ |c| }{6}  & 8 & 5 & 4 & 4 & 4 & 4 & 3 & 3  \\ \cline{2-10}
\multicolumn{1}{ |c  }{}                        &
\multicolumn{1}{ |c| }{7}  & 7 & 5 & 4 & 4 & 4 & 3 & 3 & 3  \\ \cline{2-10}
\multicolumn{1}{ |c  }{}                        &
\multicolumn{1}{ |c| }{8}  & 7 & 5 & 4 & 4 & 4 & 3 & 3 & 3  \\ \cline{2-10}
\multicolumn{1}{ |c  }{}                        &
\multicolumn{1}{ |c| }{9}  & 7 & 5 & 4 & 4 & 3 & 3 & 3 & 3  \\ \cline{2-10} \cline{1-10}
\end{tabular}
    \caption{Values for $t' = \lceil\frac{2d_1d_2+d_1}{d_1d_2-d_1-d_2} \rceil$}
    \label{fig:t_vals}
\end{figure}

%\todo{LS: Notice the funny layout on this last page. Can the table move to the top of the page?}

\newpage    % Added because for some reason it make the last figure appear at the top of the last page, rather than slap bang in the middle of it. Will probably cause buggy behaviour the minute we change the size of the margins or anything else about this paper.

\begin{figure}[h]
    \centering
\begin{tabular}{cc|c|c|c|c|c|c|c|c|l}
\cline{3-10}
& & \multicolumn{8}{ c| }{$d_2$} \\ \cline{3-10}
& & 2 & 3 & 4 & 5 & 6 & 7 & 8 & 9 \\ \cline{1-10}
\multicolumn{1}{ |c  }{\multirow{8}{*}{$d_1$} } &
\multicolumn{1}{ |c| }{2}  & - & 952 & 774 & 832 & 854 & 840 & 948 & 1056   \\ \cline{2-10}
\multicolumn{1}{ |c  }{}                        &
\multicolumn{1}{ |c| }{3}  & 1020 & 602 & 624 & 610 & 700 & 790 & 704 & 776     \\ \cline{2-10}
\multicolumn{1}{ |c  }{}                        &
\multicolumn{1}{ |c| }{4} & 860 & 624 & 610 & {\bf 560} & 632 & 704 & 776 & 848\\ \cline{2-10}
\multicolumn{1}{ |c  }{}                        &
\multicolumn{1}{ |c| }{5} & 936 & 610 & 700 & 632 & 704 & 776 & 848 & 920 \\ \cline{2-10}
\multicolumn{1}{ |c  }{}                        &
\multicolumn{1}{ |c| }{6}  & 976 & 700 & 632 & 704 & 776 & 848 & 690 & 744  \\ \cline{2-10}
\multicolumn{1}{ |c  }{}                        &
\multicolumn{1}{ |c| }{7}  & 980 & 790 & 704 & 776 & 848 & 690 & 744 & 798  \\ \cline{2-10}
\multicolumn{1}{ |c  }{}                        &
\multicolumn{1}{ |c| }{8} & 1106 & 880 & 776 & 848 & 920 & 744 & 798 & 852  \\ \cline{2-10}
\multicolumn{1}{ |c  }{}                        &
\multicolumn{1}{ |c| }{9} & 1232 & 970 & 848 & 920 & 744 & 798 & 852 & 906  \\ \cline{2-10}\cline{1-10}
\end{tabular}
    \caption{Values for $\alpha = 2\cdot(9(d_1+d_2)-11)\cdot \lceil\frac{2d_1d_2+d_1}{d_1d_2-d_1-d_2} \rceil$. Observe that the minimum is acheived at $d_1 = 4, d_2 = 5$.}
    \label{fig:alpha_vals}
\end{figure}

\end{document}